\newtheoremstyle{propstyle} 
    {2mm}                    
    {1mm}                    
    {\itshape}                   
    {}                           
    {\scshape}                   
    {.}                          
    {.5em}                       
    {}  
\theoremstyle{propstyle}
\theoremstyle{propstyle}
\newtheorem{corollary}{Corollary}
\theoremstyle{propstyle}
\newtheorem{proposition}{Proposition}
\theoremstyle{propstyle}
\theoremstyle{propstyle}
\theoremstyle{propstyle}
\theoremstyle{propstyle}
\renewcommand{\paragraph}{%
  \@startsection{paragraph}{4}%
  {\z@}{2ex \@plus 1ex \@minus .2ex}{-1em}%
  {\normalfont\normalsize\bfseries}%
}
\DeclareMathAlphabet\mathbfcal{OMS}{cmsy}{b}{n}
\newcommand{\bd}{\mathbf{d}}
\newcommand{\bb}{\mathbf{b}}
\newcommand{\bx}{\mathbf{x}}
\newcommand{\by}{\mathbf{y}}
\newcommand{\bg}{\mathbf{g}}
\newcommand{\br}{\mathbf{r}}
\newcommand{\bw}{\mathbf{w}}
\newcommand{\bI}{\mathbf{I}}
\newcommand{\bD}{\mathbf{D}}
\newcommand{\bH}{\mathbf{H}}
\newcommand{\bX}{\mathbf{X}}
\newcommand{\bC}{\mathbf{C}}
\newcommand{\bM}{\mathbf{M}}
\newcommand{\bfzero}{\mathbf{0}}
\newcommand{\bftheta}{\bm{\theta}}
\newcommand{\bfSigma}{\bm{\Sigma}}
\newcommand{\cov}{\textbf{Cov}}
\newcommand{\GP}{\mathcal{GP}}
\newcommand{\order}{\mathcal{O}}
\newcommand{\normal}{\mathcal{N}}
\newcommand{\domain}{\mathcal{X}}
\newcommand{\dens}{p}
\title{Scalable Gaussian-process regression and variable selection using Vecchia approximations}
\author{Jian Cao\thanks{Department of Statistics and Institute of Data Science, Texas A\&M University}, Joseph Guinness\thanks{Department of Statistics, Cornell University}, Marc G. Genton\thanks{Statistics Program, King Abdullah University of Science and Technology}, Matthias Katzfuss\footnotemark[1]$^{\ }$\thanks{corresponding author, email: katzfuss@gmail.com}}
\date{}
\begin{document}

\maketitle

\begin{abstract} \noindent
Gaussian process (GP) regression is a flexible, nonparametric approach to regression that naturally quantifies uncertainty. In many applications, the number of responses and covariates are both large, and a goal is to select covariates that are related to the response. For this setting, we propose a novel, scalable algorithm, coined VGPR, which optimizes a penalized GP log-likelihood based on the Vecchia GP approximation, an ordered conditional approximation from spatial statistics that implies a sparse Cholesky factor of the precision matrix. We traverse the regularization path from strong to weak penalization, sequentially adding candidate covariates based on the gradient of the log-likelihood and deselecting irrelevant covariates via a new quadratic constrained coordinate descent algorithm. We propose Vecchia-based mini-batch subsampling, which provides unbiased gradient estimators. The resulting procedure is scalable to millions of responses and thousands of covariates. Theoretical analysis and numerical studies demonstrate the improved scalability and accuracy relative to existing methods.
\end{abstract}

{\small\noindent\textbf{Keywords:} adaptive bridge penalty; gradient-based variable selection; mini-batch subsampling; ordered conditional approximation; penalized Gaussian regression}

\section{Introduction \label{sec:intro}}

\paragraph{Gaussian process regression}
Many tasks in statistics and machine learning can be viewed as regression problems, with the goal of inferring the functional relationship between a response and a number of covariates. Gaussian processes (GPs) are an attractive choice for modeling the regression function \citep[e.g.,][]{Rasmussen2006}, as they naturally quantify uncertainty, they can flexibly capture nonlinear and nonparametric behavior, they are interpretable, and much of the resulting inference involves closed-form expressions. We focus on GP regression for datasets with a large number of responses, $n$, and a large number of covariates, $d$, under the assumption that only few covariates, $d_0 \ll d$, are useful for predicting the response. In this setting, our goals are variable selection, model estimation, and subsequent prediction based on the selected sparse model.

\paragraph{Existing approaches for large $n$}
Basic GP regression scales poorly to large $n$ or $d$. Many approaches have been proposed that deal with one or both of these issues. The challenge with large $n$ is that direct GP inference requires $\order(n^3)$ time. \citet{Heaton2017} and \citet{Liu2018} provide reviews of methods that tackle the large-$n$ problem in spatial statistics and machine learning, respectively. These methods include fully \citep[e.g.,][]{Quinonero-Candela2005,Banerjee2008,Finley2009} and partially \citep[e.g.,][]{Snelson2007,Sang2011a} independent conditional (FIC/PIC) approximations, but these low-rank approaches can have limitations in many settings \citep[e.g.,][]{Stein2013a}, even when optimizing over pseudo-inputs \citep{hensman2015scalable}. Other GP approximations, such as multi-level PIC \citep{Katzfuss2015,Katzfuss2017b}, approximations based on stochastic partial differential equations \citep{Lindgren2011a}, distributed GPs \citep{deisenroth2015distributed} or KISS-GP \citep{wilson2015kernel}, can struggle with high input dimension $d$.

\paragraph{The Vecchia approximation}
A highly promising approach to scaling GP inference to large $n$ may be the Vecchia approximation \citep{Vecchia1988}, which has become very popular in spatial statistics \citep[e.g.,][]{Stein2004,Datta2016,Guinness2016a,Katzfuss2017a,Katzfuss2018}, but which has not received much attention in machine learning. This approach can be viewed as an ordered conditional approximation, in which the joint density of the GP response is approximated as a product of univariate conditional distributions. The resulting approximation can be highly accurate even with small conditioning sets. \citet{Katzfuss2020} proposed a scaled Vecchia approximation that further improves the accuracy of the Vecchia approximation and used it for GP emulation of expensive computer experiments in $d = \order(10)$ dimensions. A more detailed review of Vecchia approximations will be provided in Section \ref{subsec:review_Vecchia}.

\paragraph{Existing approaches for large $d$}
There has also been extensive work on scaling GPs to moderate or high input dimension $d$. Moderate $d$ can be handled by variable selection using automatic relevance determination (ARD) kernel functions \citep{neal1996bayesian} and Bayesian model selection \citep{dearmon2016g, Posch2021}. However, for larger $d$ (say $d \gg 100$), these methods are not sufficiently scalable due to computation and convergence issues caused by the high dimensionality of the parameter space. For such high dimensions, existing approaches include penalized GP regression \citep[e.g.,][]{yi2011penalized}, manifold GP regression \citep[e.g.,][]{Calandra2016}, and hierarchical diagonal sampling \citep[HDS; e.g.,][]{chen2012joint}. However, both penalized GP and manifold GP regressions consider all covariates simultaneously, leading to $\order(d)$ optimization parameters, which may negatively impact model inference in three aspects, namely convergence to local optima, over-fitting, and computational inefficiency. Furthermore, \cite{yi2011penalized} and \cite{Calandra2016} optimized the exact GP likelihood, not scalable with respect to $n$, while HDS assumes that responses are sampled where needed, mainly addressing Bayesian optimization instead of GP regression.

\paragraph{Large numbers of responses and covariates}
Several methods have been proposed to handle large $n$ and $d$ by approximating the GP using FIC and transforming and reducing the dimension of the input domain, such as randomly-projected additive GPs \citep{delbridge2020randomly}, deep kernel learning \citep{wilson2016deep}, and dimension reduction with pseudo-inputs \citep{Snelson2006}. These approaches mainly achieve dimension reduction rather than variable selection. To our knowledge, none of the existing approaches is suitable for our goal of simultaneous variable selection and GP regression for large $n$ and large $d$.

\paragraph{The VGPR algorithm}
Here we propose the VGPR algorithm, for Vecchia GP Regreesion, which is highly scalable in $n$ and $d$. Specifically, to handle large $n$, we extend the scaled Vecchia GP approximation \citep{Katzfuss2020} and propose Vecchia-based mini-batch subsampling, which provides unbiased gradient estimators. To achieve variable selection for large $d$, we consider a penalized Vecchia-GP loglikelihood, and we traverse the regularization path from strong to weak penalization, sequentially adding candidate covariates based on the gradient of the log-likelihood and deselecting irrelevant covariates through a new quadratic constrained coordinate descent algorithm (QCCD). QCCD builds a quadratic approximation of the objective function at each iteration and applies constrained coordinate descent to find the constrained quadratic optimum. Compared with existing GP regression methods such as \citet{yi2011penalized} and \citet{Katzfuss2020}, traversing the regularization path with warm starts effectively avoids local optima while QCCD can reach boundary values, achieving covariate deselection without artificial thresholding. We provide theoretical and numerical evidence for our gradient-based variable selection. The dominant complexity of VGPR depends linearly on the batch size and quadratically on the number of selected covariates (as opposed to the total number of responses or covariates). 

\paragraph{Outline} In Section \ref{sec:review}, we briefly review ARD kernels and the scaled Vecchia approximation. Section~\ref{sec:GPR_var_select} introduces our new VGPR algorithm that involves the QCCD subroutine, the choice of the penalty function, the selection of covariates based on the gradient, and a mini-batch sampling technique specific to the Vecchia approximation. In Section~\ref{sec:sim_study}, we compare VGPR with state-of-the-art GP regressions in terms of posterior inference and variable selection based on simulated GP datasets. Section~\ref{sec:app_study} provides a comparison with methods commonly used in machine learning for variable selection and prediction based on real datasets, including an example with $n = 10^6$ and $d = 10^3$. Section~\ref{sec:conclusion} concludes the paper. The code for replicating the numerical results in this paper are published at \url{https://github.com/katzfuss-group/Vecchia_GPR_var_select}.


\section{Review \label{sec:review}}

\subsection{GP regression and ARD kernels}

We consider the standard GP regression model \citep[e.g.,][]{Rasmussen2006}:
\[
y_i = f(\bx_i) + \epsilon_i, \qquad i=1,\ldots,n,
\]
where $y_i$ is the $i$-th response observed at the $d$-dimensional covariate vector $\bx_i \in \domain \subset \mathbb{R}^d$, $f(\cdot)\sim \GP(0, K)$ is a GP with zero mean and a positive-definite covariance or kernel function $K: \domain \times \domain \rightarrow \mathbb{R}$, and $\{\epsilon_i \sim \normal(0,\tau^2)\}$ are independent noise terms. Then, the vector of responses, $\by = \big(y_1, \ldots, y_n\big)^\top$, at input values $\bx_1,\ldots,\bx_n$ follows an $n$-variate Gaussian distribution, $\normal_n(\bfzero,\bfSigma)$, with covariance matrix
$
\bfSigma = \big(K(\bx_i,\bx_j)\big)_{i,j=1,\ldots,n} + \tau^2 \bI_n,
$ 
whose $(i,j)$-th entry describes the covariance between responses $y_i$ and $y_j$ as a function of their corresponding covariate vectors $\bx_i$ and $\bx_j$. Throughout, we assume a centered response vector $\by$ and a zero mean structure; if desired, a (non-zero) linear mean structure can be profiled out during maximum likelihood estimation \citep{Guinness2019}. 

An automatic relevance determination (ARD) kernel \citep{neal1996bayesian} is an anisotropic kernel that assigns each covariate a separate parameter, controlling its impact in the covariance structure. Specifically, we assign a separate relevance (i.e., inverse range) parameter $r_l \geq 0$ to each input dimension $l$:
\begin{equation}
    \label{equ:scaleddist}
    \textstyle
    K(\bx_i,\bx_j) = \tilde K(q^{\br}(\bx_i,\bx_j)), \qquad
    q^{\br}(\bx_i,\bx_j)^2 = \sum_{l=1}^d r_l^2 (x_{i,l}-x_{j,l})^2,
\end{equation}
where the superscript $\br$ emphasizes the dependence of the distance $q$ on the relevances $\br=(r_1,\ldots,r_d)^\top$. Note that $r_l = 0$ is equivalent to deselecting the $l$-th covariate. In \eqref{equ:scaleddist}, $\tilde K$ can be any isotropic kernel that is valid in $\mathbb{R}^d$; for our numerical results, we used a Mat\'ern covariance kernel with smoothness $2.5$ as recommended in Chapter~4 of \cite{Rasmussen2006}:
\begin{equation}
    \label{equ:matern25}
    \tilde K(q) = \sigma^2(1 + q + q^2 / 3) \exp\left( - q \right),
\end{equation}
where $\sigma^2$ is the variance parameter. Our model depends on unknown parameters $\bftheta = (\sigma^2, \br^2, \tau^2)$, whose inference is usually achieved by maximum likelihood estimation (MLE). We denote by $\br^2$ the element-wise square of $\br$; we use the squared relevance (SR) as the optimization parameters for the purpose of variable selection, which is explained in Section~\ref{sec:GPR_var_select}. Computing the exact GP density, $\dens_{\bftheta}(\by) = \normal_n(\by | \mathbf{0}, \bfSigma_{\bftheta})$, requires $\order(n^3)$ time and $\order(n^2)$ memory, often becoming infeasible for $n > 10{,}000$.

\subsection{Review of scaled Vecchia}
\label{subsec:review_Vecchia}

We use the (scaled) Vecchia approximation to tackle GP regressions with large $n$ (e.g., $n > 10^4$), because it can achieve higher approximation accuracy while having the same linear complexity compared with other state-of-the-art GP approximations. The original Vecchia approximation \citep{Vecchia1988} starts from the conditional representation of the density function, $\dens_{\bftheta}(\by) = \prod_{i=1}^n \dens_{\bftheta}(y_i | \by_{1 : (i - 1)})$, and truncates the conditioning sets to sets $c(i)$ with a maximum of $m \ll n$ elements:
\begin{equation}
    \label{equ:vecchia}
    \textstyle  \hat{\dens}_{\bftheta}(\by) = \prod_{i=1}^n \dens_{\bftheta}(y_i | \by_{c(i)}) = \normal_n(\bfzero,\hat\bfSigma).
\end{equation}
The Vecchia approximation has several attractive properties. It partitions the $n$-dimensional GP density into $n$ computationally independent univariate conditional densities, and hence results in $n$ parallel computations each requiring only $\order(m^3)$ time, where even small $m \ll n$ can achieve high accuracy due to the screening effect \citep{Stein2011}. As indicated by \eqref{equ:vecchia}, the approximation also implies a joint Gaussian distribution, whose inverse Cholesky factor $\hat\bfSigma^{-1/2}$ is sparse with fewer than $nm$ nonzero entries \citep[e.g.,][]{Katzfuss2017a}. Furthermore, Vecchia approximation produces the smallest KL divergence from $\dens_{\bftheta}(\by)$ subject to certain sparsity constraints on $\hat\bfSigma^{-1/2}$ \citep{Schafer2020} and can achieve $\epsilon$-accurate approximations with $m = \order(\log^d(n))$ for certain Mat\'ern-type kernels up to edge effects \citep{Schafer2020}.

The accuracy of Vecchia approximations depends on the ordering of $\by$ and the choice of $\{c(i)\}$; the scaled Vecchia approximation in \citet{Katzfuss2020} takes varying relevances of the covariates into account. Specifically, the scaled Vecchia approximation uses the maximum-minimum distance ordering (MM) and the nearest-neighbor conditioning (NN) based on the scaled distances $q^{\br}(\bx_i, \bx_j)$ between $y_i$ and $y_j$. MM is a sequential ordering that selects each response to maximize the minimum distance toward previous responses in the ordering, and NN chooses the $\min(i - 1, m)$ nearest responses of $y_i$ among $\{y_1, \ldots, y_{i - 1}\}$ as $\by_{c(i)}$. MM and NN can be obtained in quasilinear time in $n$ \citep{Schafer2017, Schafer2020}. We use $\hat{\dens}_{\bftheta}^{\tilde{\br}}(\by)$ to represent the scaled Vecchia likelihood evaluated at $\bftheta$ with MM and NN computed based on $q^{\tilde\br}$, where $\tilde\br$ does not necessarily have to take on the same values as the $\br$ indicated by $\bftheta$. 

Another attractive property of the Vecchia approximation is that many existing GP approximations, including FIC and PIC, can be viewed as its special cases corresponding to particular choices of the ordering and conditioning \citep{Katzfuss2017a}; however, the scaled MM and NN choices in scaled Vecchia can be much more accurate. To demonstrate this, we used a numerical experiment to compare FIC, FITC (with optimized pseudo-inputs), PIC, Vecchia (with MM and NN based on $q^{\mathbf{1}}$), and scaled Vecchia approximations in terms of their KL divergence from an exact multivariate Gaussian distribution
(see details in Appendix \ref{app:pic}). Figure~\ref{fig:Vecchia_FIC_PIC_compare} shows the results for the comparison with $n = 5{,}000$, $d = 10$, $\sigma^2 = 1$, $\br = (10, 5, 2, 1, 0.5, 0, \ldots, 0)^\top$, $\tau^2 = 0$, averaged over ten repetitions. 
\begin{figure}
\centering
	\begin{subfigure}{.75\textwidth}
	\centering
 	\includegraphics[width =.98\linewidth]{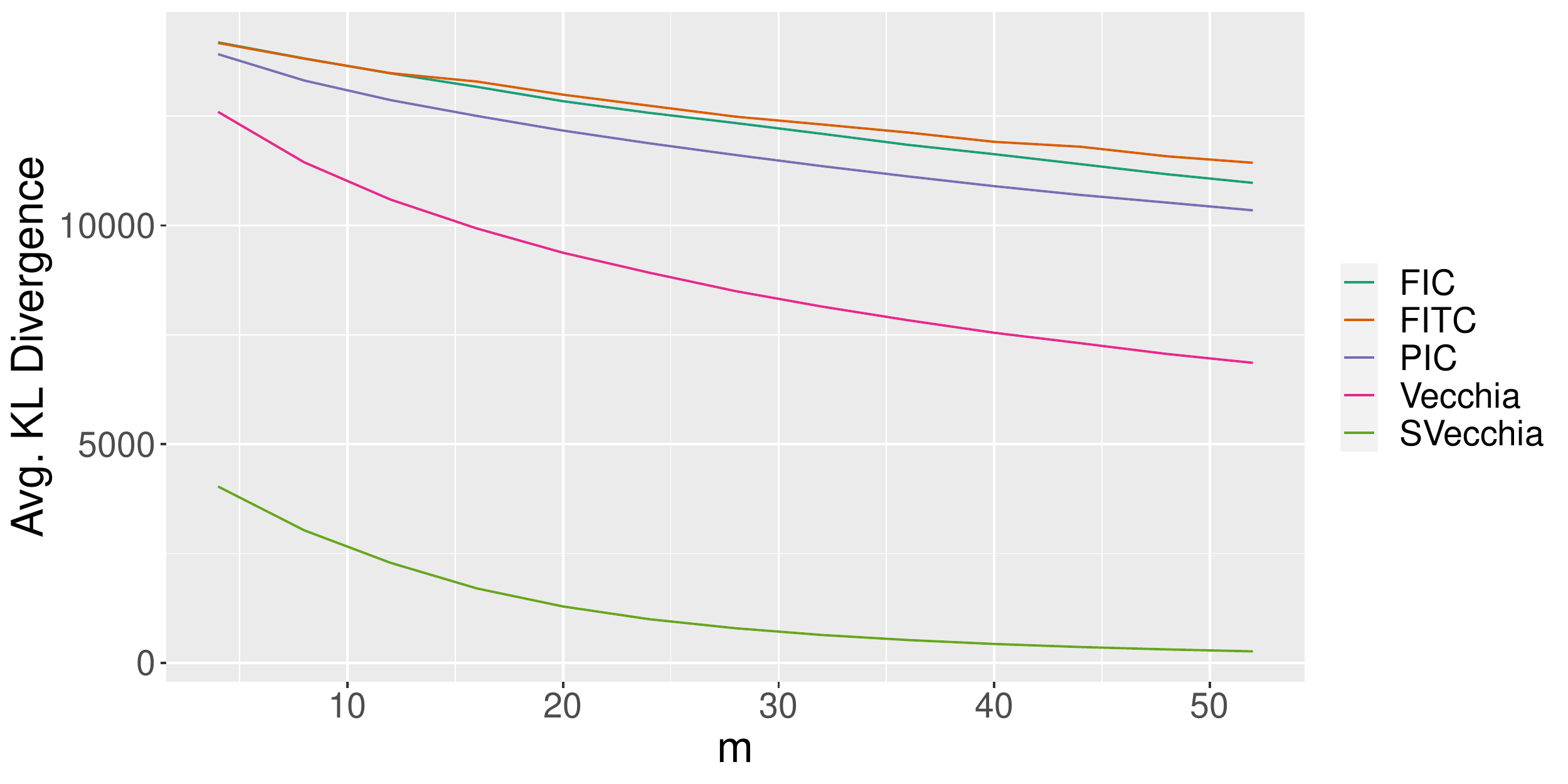}
	\end{subfigure}%
  \caption{Approximation accuracy (in terms of KL divergence from the true GP density) versus conditioning-set size, for five GP approximations, namely FIC, FITC, PIC, Vecchia and Scaled Vecchia (SVecchia)}
  \label{fig:Vecchia_FIC_PIC_compare}
\end{figure}
While Vecchia without scaling outperformed FIC, FITC, and PIC, the scaled Vecchia approach, which will be used in our proposed methods below, resulted in additional improvements of several orders of magnitude.

The construction of the conditioning sets $c(i)$ in the scaled Vecchia approximation can be also applied to posterior prediction to achieve an $\order(m^3)$ complexity at each unknown location. Specifically, the $m$ nearest in-sample neighbors of an unknown location based on $q^{\tilde\br}$ is defined as its conditioning set, based on which the conditional mean and variance is computed. 
Fast computation of the joint posterior predictive distribution at a large set of test inputs is also possible \citep{Katzfuss2018}.

\subsection{Gradient and Fisher information}
\label{subsec:grad_FIM}

The (penalized) negative log-likelihood, $h_{\lambda}^{\tilde\br}(\bftheta) = - \hat{\ell}^{\tilde{\br}}(\bftheta) + w_{\lambda}(\bftheta)$ is typically used as the objective function for parameter inference in GP regression, where here $\hat{\ell}^{\tilde{\br}}(\bftheta) = \log \hat{\dens}_{\bftheta}^{\tilde{\br}}(\by)$ is the log-likelihood under the scaled Vecchia approximation and $w_{\lambda}(\bftheta)$ is a penalty function whose magnitude increases with $\lambda$. Under the Vecchia approximation, not only the log-likelihood but also its first- and second-order information can be computed in parallel and at linear complexity in $n$. Specifically, $\hat{\ell}^{\tilde{\br}}(\bftheta)$ can be decomposed into the sum of $n$ computationally independent terms:
\begin{align}
    \label{equ:obj_SVecchia}
    h_{\lambda}^{\tilde\br}(\bftheta) = - \hat{\ell}^{\tilde{\br}}(\bftheta) + w_{\lambda}(\bftheta) &= -\sum_{i = 1}^{n} \left( \log \dens_{\bftheta}(\by_{\{i\} \cup c(i)}) - \log \dens_{\bftheta}(\by_{c(i)}) \right) + w_{\lambda}(\bftheta).
\end{align}
Based on this expression involving a sum of (log) Gaussian densities, it is straightforward to compute the gradient $\hat{\bg}_{\bftheta}^{\tilde{\br}}$ and the Fisher information matrix (FIM) $-\hat{\bH}_{\bftheta}^{\tilde{\br}}$ of $\hat{\ell}^{\tilde{\br}}(\bftheta)$. Notice that $\hat{\bH}_{\bftheta}^{\tilde{\br}}$ can be used as a surrogate of the Hessian matrix. The computations of $\hat{\bg}_{\bftheta}^{\tilde{\br}}$ and $\hat{\bH}_{\bftheta}^{\tilde{\br}}$ are $\order(nm^3d)$ and $\order(nm^2d^2)$, respectively, based on the closed-form formula for multivariate normal gradient and FIM; refer to \cite{Guinness2019} and the R package `GpGp' \citep{Guinness2016a} for the computation details.

The availability of the second-order information under the Vecchia approximation benefits the convergence rate of parameter inference. Along this direction, a state-of-the-art method is the Fisher scoring algorithm proposed in \citet{Guinness2019} that substitutes the Hessian matrix in the natural gradient descent with FIM to achieve a quadratic convergence rate:
\begin{align}
    \bftheta^{(\iota + 1)} = \bftheta^{(\iota)} - \left(\hat{\bH}_{\bftheta^{(\iota)}}^{\tilde\br^{(\iota)}}\right)^{-1} \hat{\bg}_{\bftheta^{(\iota)}}^{\tilde\br^{(\iota)}},
    \label{equ:Fisher_update}
\end{align}
where the superscript denotes the parameter estimates at the $\iota$-th iteration. However, it is not ideal for constrained optimization. Specifically, Fisher scoring uses variable transformation (e.g., logarithm) to enforce positivity constraints, and so it is typically impossible for optimization parameters to reach boundary values (i.e., zero), which is crucial for variable deselection. We introduce a new second-order optimization algorithm that addresses this limitation in Section~\ref{subsec:QCCD}.

\section{Scalable GP Regression and Variable Selection}
\label{sec:GPR_var_select}

\subsection{Overview of VGPR
\label{subsec:vgpr}}

Algorithm~\ref{alg:VGPR} contains a high-level overview of our VGPR algorithm for scalable variable selection and model estimation in GP regression, with subsequent sections providing details and theoretical and numerical support.
VGPR traverses the regularization path of the penalized log-likelihood from strong to weak penalization until a stopping criterion based on an out-of-sample (OOS) score is reached (Section \ref{subsec:traverse_path}). For a given penalization level, VGPR conducts a forward-backward-selection procedure (Section \ref{subsec:forwardbackward}), which iteratively adds covariates to a candidate set based on the gradient with respect to the squared relevances (Section \ref{subsec:covariate_select_thm}) and deselects covariates through QCCD optimization (Section \ref{subsec:QCCD}). We introduce an iterative adaptive bridge penalty (Section \ref{subsec:adapt_bridge}) and provide further speed-ups via an unbiased mini-batch subsampling method (Section \ref{subsec:minibatch}), resulting in a computational complexity that is essentially independent from $n$ and $d$ (Section \ref{subsec:complexity}).

\begin{algorithm}[h]
    \caption{VGPR}
    \label{alg:VGPR}
    \KwInput{$\hat{\ell}^{\tilde{\br}}(\bftheta), w_{\lambda}(\bftheta), \lambda_0, k$}
    \begin{algorithmic}[1]
        \STATE Initialize $\bftheta$ with $\br$ set to $\mathbf{0}^{+}$ and $\tilde{\br} \gets \br, \lambda \gets \lambda_0$, $\zeta \gets \phi$
        \WHILE{OOS score improves}
            \STATE $(\bftheta, \zeta) \gets$ forward-backward$(\hat{\ell}^{\tilde{\br}}(\bftheta), w_{\lambda}(\bftheta), \bftheta, \zeta, k)$ --- see Alg.~\ref{alg:forward_backward}
            \STATE Reduce $\lambda$
        \ENDWHILE
    \end{algorithmic}
\end{algorithm}

\subsection{Traversing the regularization path}
\label{subsec:traverse_path}

VGPR traverses the regularization path of the penalized log-likelihood, $h_{\lambda}^{\tilde\br}(\bftheta)$, from strong to weak penalization (i.e., large to small $\lambda$) until a stopping criterion based on an out-of-sample (OOS) score is reached. We recommend starting with a penalty strength of $\lambda_0 = n$, which is typically sufficient to imply a completely sparse model without any covariates selected. (Otherwise, we simply increase $\lambda$ exponentially until a fully sparse model is obtained.) The regularization path is constructed over a decreasing series of $\lambda$, for example, a geometric series with a common ratio of $1/2$.
VGPR stops when an out-of-sample (OOS) score such as mean-squared error fails to show obvious improvement. 

Figure~\ref{fig:reg_path} illustrates the regularization path computed by VGPR using $n = 10^4$ responses and $d = 10^3$ covariates under the bridge penalty (see Section~\ref{subsec:adapt_bridge}).
\begin{figure}
\centering
	\begin{subfigure}{.5\textwidth}
	\centering
 	\includegraphics[width =.98\linewidth]{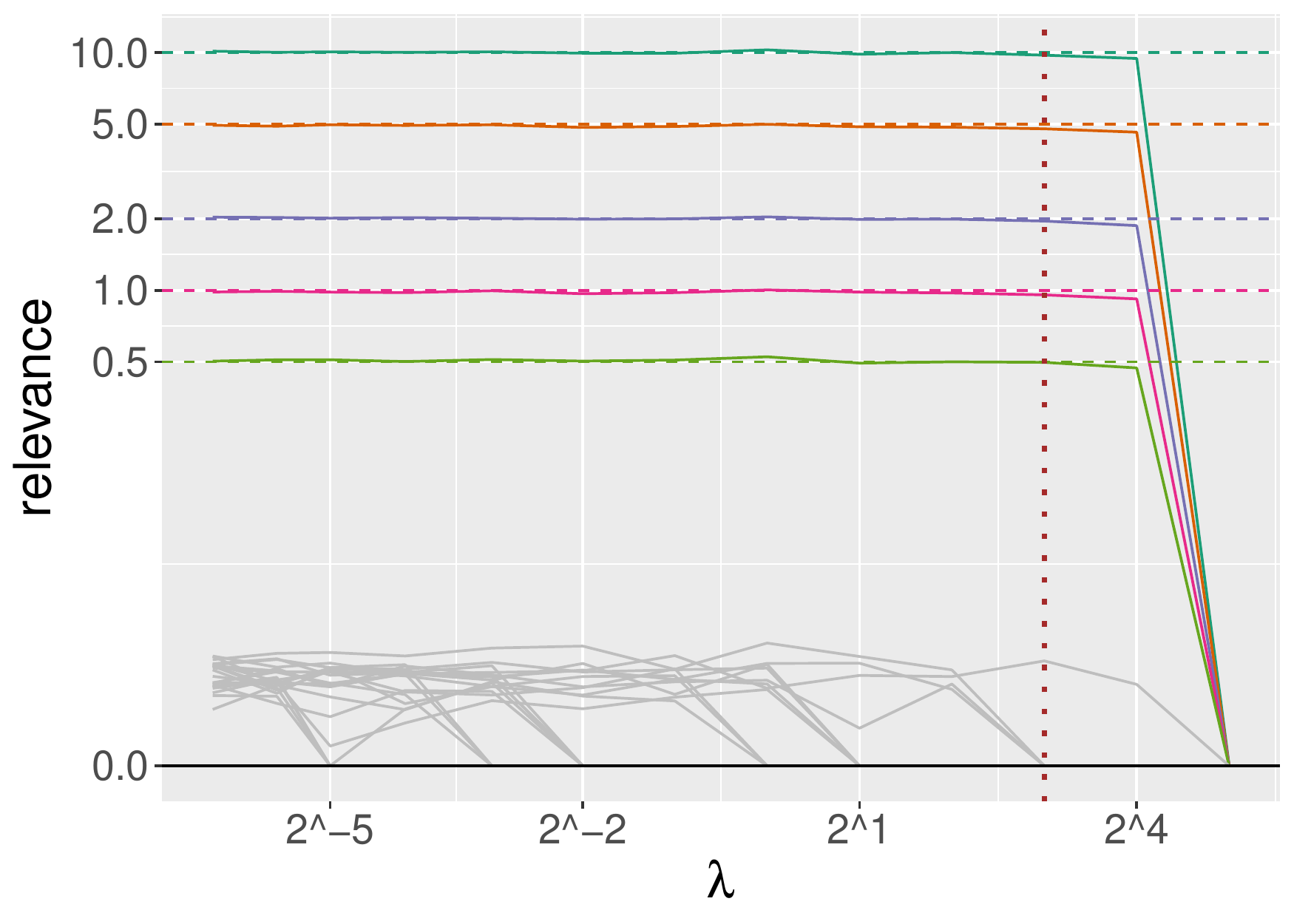}
	\caption{Independent covariates}
	\label{fig:QCCD_bridge_5000_1000}
	\end{subfigure}%
    \hfill
	\begin{subfigure}{.5\textwidth}
	\centering
 	\includegraphics[width =.98\linewidth]{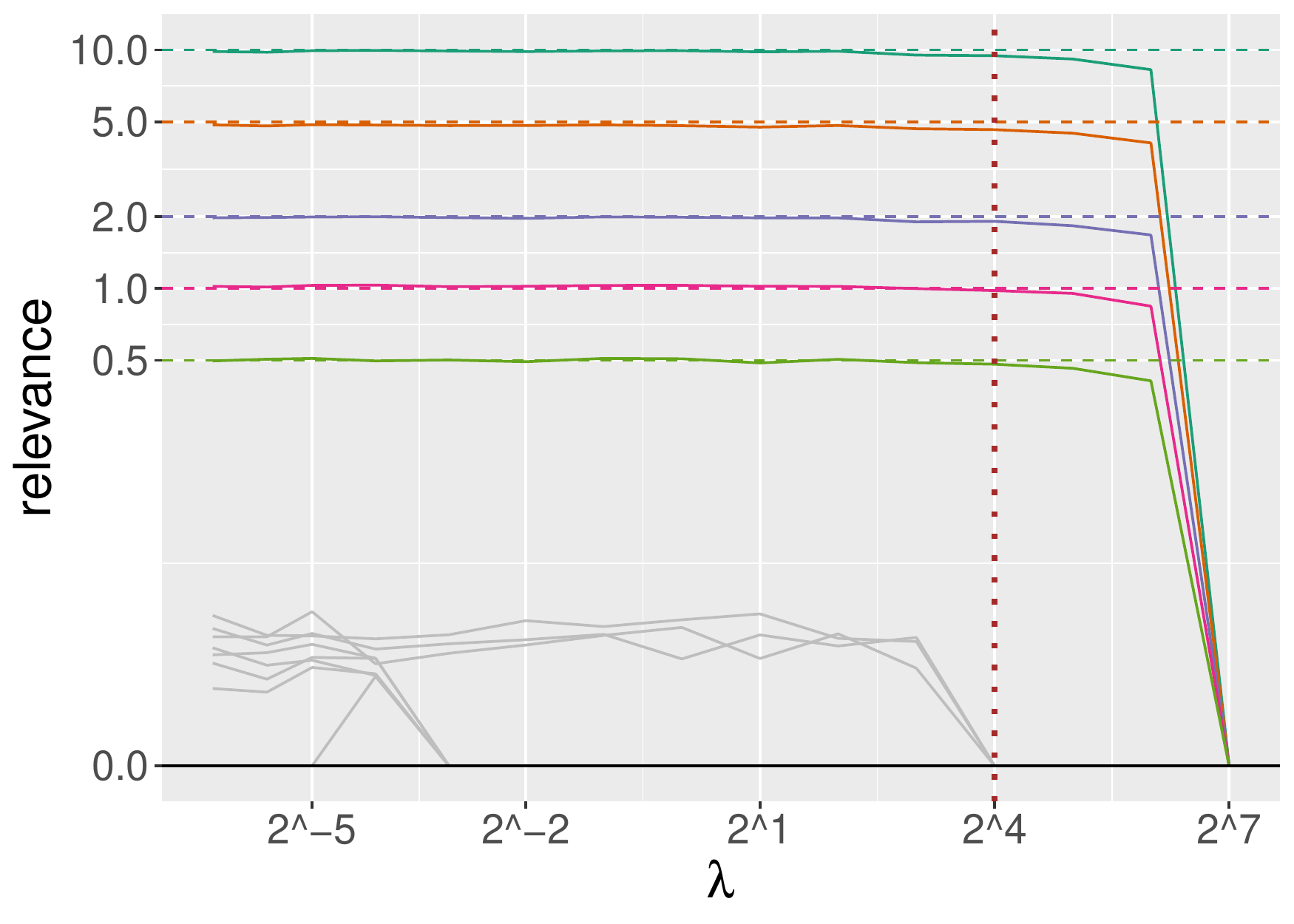}
	\caption{Dependent covariates}
	\label{fig:QCCD_bridge_dep_5000_1000}
	\end{subfigure}%
	
    \caption{Regularization path computed by VGPR using simulated independent or dependent covariates. The relevance parameters of the true covariates are color-coded, and their true values are marked by horizontal colored dashed lines. The fake covariates, whose true relevance parameters are zero, are colored in grey. The vertical red dotted lines mark the optimal model indicated by the stopping condition.}
    \label{fig:reg_path}
\end{figure}
The covariance kernels used for dataset simulation in Sections~\ref{sec:GPR_var_select} and \ref{sec:sim_study} are the Mat\'ern covariance kernel defined by \eqref{equ:matern25} and parameterized by:
\begin{equation}
    \sigma^2 = 1, \quad \tau^2 = 0.05^2, \quad [r_1^2, r_2^2, r_3^2, r_4^2, r_5^2] = [10^2, 5^2, 2^2, 1^2, 0.5^2], \quad r_l^2 = 0 \mbox{ if } l > 5,
    \label{equ:true_val_five}
\end{equation}
unless specified otherwise. The covariates are generated either independently from the Latin hypercube or dependently from a multivariate normal distribution with a constant correlation of $0.9$ and normalized to have a standard deviation of one. Our Vecchia approximation uses a maximum conditioning set size of $m = 100$. A quarter of the responses were set aside to compute the OOS RMSE based on which, the stopping condition was defined as that the OOS RMSE improves less than $1\%$ after any new covariate is selected. The OOS sample size of $n/4$ and the $1\%$ OOS score threshold are used throughout this paper and are generally recommended as default values.

In Figure \ref{fig:reg_path}, the true covariates and their relevance parameters were correctly selected and well estimated, respectively. All fake covariates, except for one when using independent covariates, were filtered out, highlighting the efficacy of VGPR in variable selection even given a large pool of highly correlated covariates. Moreover, the number of optimization parameters was always kept at $\order(d_0)$ until the stopping condition was reached. Also due to the small number of optimization parameters, VGPR completed the model estimation within minutes.

\subsection{Forward-backward selection \label{subsec:forwardbackward}}

To keep the ``active'' set of covariates small when running optimization, VGPR keeps a candidate set of covariates $\zeta \subset \{1,2,\ldots,d\}$ representing the covariates currently selected. Assuming model sparsity, the size of $\zeta$ can be kept much smaller than $d$. Given a current $\zeta$, standard forward selection would fit $\order(d)$ models with covariates $\zeta \cup l$ for each $l \notin \zeta$, but this procedure is prohibitively expensive for large $d$.

Instead, we propose a forward-backward-selection algorithm, provided in Algorithm~\ref{alg:forward_backward}, to find the optimal model under each $\lambda$. The algorithm iteratively performs a forward step and a backward step. The forward step adds to $\zeta$ a small number $k$ of ``promising'' covariates corresponding to the $k$ largest entries in the squared-relevance gradient (SR-gradient), given by the derivatives of $\ell(\bftheta) = \log \dens_{\bftheta}(\by)$ with respect to each $r_l^2$ with $l \notin \zeta$, evaluated at the current estimates of the relevances (i.e., $r_l = 0$ for $l \notin \zeta$). For example, we set $k = 3$ in Figure~\ref{fig:reg_path}. We provide numerical and theoretical support for the forward step in Section \ref{subsec:covariate_select_thm}.
\begin{algorithm}[h]
    \caption{Forward-backward selection}
    \label{alg:forward_backward}
    \KwInput{$\hat{\ell}^{\tilde{\br}}(\bftheta), w_{\lambda}(\bftheta), \bftheta, \zeta, k$}
    \begin{algorithmic}[1]
        \WHILE{OOS score improves}
            \STATE $\underline{\color{blue} \mathcal{S} \gets \mbox{mini-batch subsampling},}$ $\hat{\bg}_{\br^2} \gets \frac{\partial \hat{\ell}^{\tilde{\br}}(\bftheta\ \underline{\color{blue}\mid \mathcal{S}})}{\partial \br^2}$ \label{stp:grad_r}
            \STATE Define $\Delta \zeta$ as the indices of the $k$ largest coefficients in $\hat{\bg}_{\br^2}[-\zeta]$
            \STATE $\zeta \gets \zeta \bigcup \Delta \zeta$, initialize $\br[\Delta \zeta]$, $\tilde{\br} \gets \br$
            \STATE $h^{\tilde{\br}}_{\lambda, \zeta}(\bftheta_{\zeta}) \gets \hat{\ell}^{\tilde{\br}}_{\zeta}(\bftheta_{\zeta}) + \lambda w_{\lambda}(\bftheta_{\zeta})$, $\bftheta_{\zeta} \gets \mbox{QCCD}(h^{\tilde{\br}}_{\lambda, \zeta}, \bftheta_{\zeta}, \bfzero)$  --- see Alg.~\ref{alg:QCCD} \label{stp:fb_par_est}
            \STATE Remove covariates with zero relevance from $\zeta$ \label{line:forward_backward_backstep}
        \ENDWHILE
        \STATE \Return $\bftheta$ and $\zeta$
    \end{algorithmic}
\end{algorithm}
After the forward step, we run a backward step on the new $\zeta$ via our QCCD algorithm (see Section \ref{subsec:QCCD}), which finds the new parameter estimates using a warm start based on the previous estimates and potentially deselects covariates by returning estimates of zero for some SRs.
The forward-backward procedure for a given $\lambda$ value stops (and VGPR moves on to a smaller $\lambda$) based on the same stopping criterion as in Algorithm~\ref{alg:VGPR}, using the OOS score. 

We now provide more notational details on Algorithm \ref{alg:forward_backward}. We use square brackets for indexing, with negative indices corresponding to dropped elements. The parts in blue font (in all algorithms) provide the mini-batching modifications to be discussed in Section~\ref{subsec:minibatch}. MM and NN are implicitly updated at each occurrence of $\tilde{\br} \gets \br$, which improves the accuracy of the scaled Vecchia approximation $\hat{\ell}^{\tilde{\br}}(\bftheta)$. 
In Line~\ref{stp:fb_par_est}, we use the $\zeta$ subscript to indicate the parameter vector, the log-likelihood function, and the objective function defined over the subset of covariates in $\zeta$, as opposed to all covariates, which reduces the number of parameters involved in QCCD.
Notice that $\bftheta_{\zeta}$ is viewed as a subvector of $\bftheta$ and the assignment to the former indicates changes to the latter as well, which implies warm starts and avoids local optima.

\subsection{Numerical and theoretical support for gradient-based covariate selection}
\label{subsec:covariate_select_thm}

The SR-gradient can be used to order the covariates' relevance levels in the ARD model. Specifically, assuming that the SRs of the covariates in $\zeta$ are fixed at their correct values, the derivatives of $\ell(\bftheta) = \log \dens_{\bftheta}(\by)$ with respect to the remaining SRs evaluated at zero can be used to rank the unselected covariates in $\zeta^{\mathrm{C}}$. 

We illustrate this idea using an example of selecting $d_0 = 5$ true covariates from $d = 10^3$ total covariates, shown in Figure~\ref{fig:gradient_magnitude}, with $\sigma^2$ and $\tau^2$ are fixed at their true values. It is evident that true covariates (with $r_{l0}^2 > 0$) have bigger coefficients in the SR-gradient. In fact, the magnitudes of the coefficients reflect the magnitudes of $\{r_{l0}^2\}$. This conclusion is valid even assuming strong dependence among covariates or using the gradient under mini-batch subsampling (see Section~\ref{subsec:minibatch}). The full dataset has $n = 5{,}000$ responses and the mini-batch size is $\check{n}=128$. We used the derivatives under the scaled Vecchia approximation (i.e., $ \hat\ell^{\tilde{r}}(\bftheta)$) to substitute those of $\ell(\bftheta)$, indicating sufficient accuracy from the scaled Vecchia approximation.
\begin{figure}
\centering
	\begin{subfigure}{.5\textwidth}
	\centering
 	\includegraphics[width =.98\linewidth]{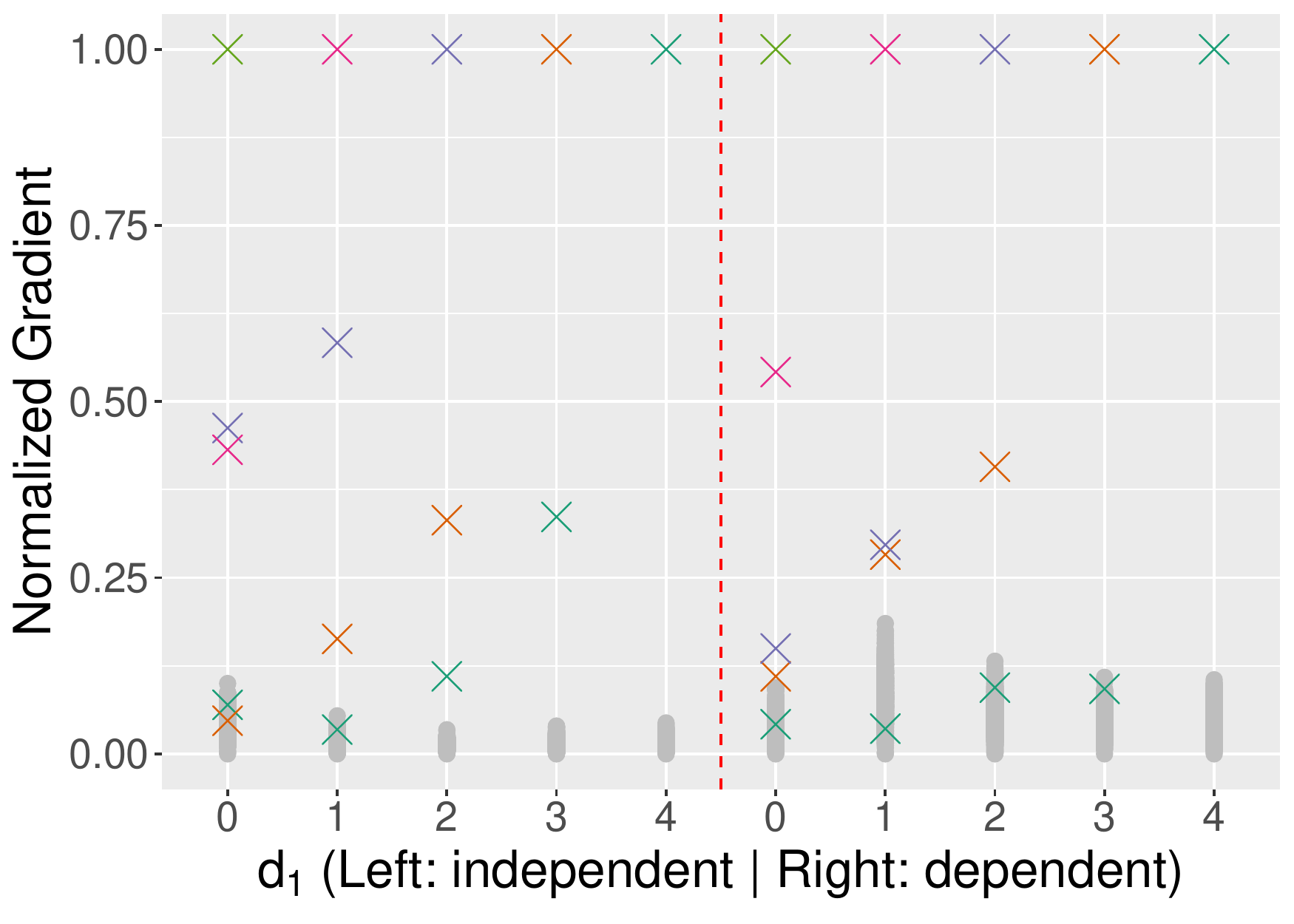}
	\caption{Normalized gradient without mini-batching}
	\label{subfig:gradient_magnitude}
	\end{subfigure}%
    \hfill
	\begin{subfigure}{.5\textwidth}
	\centering
 	\includegraphics[width =.98\linewidth]{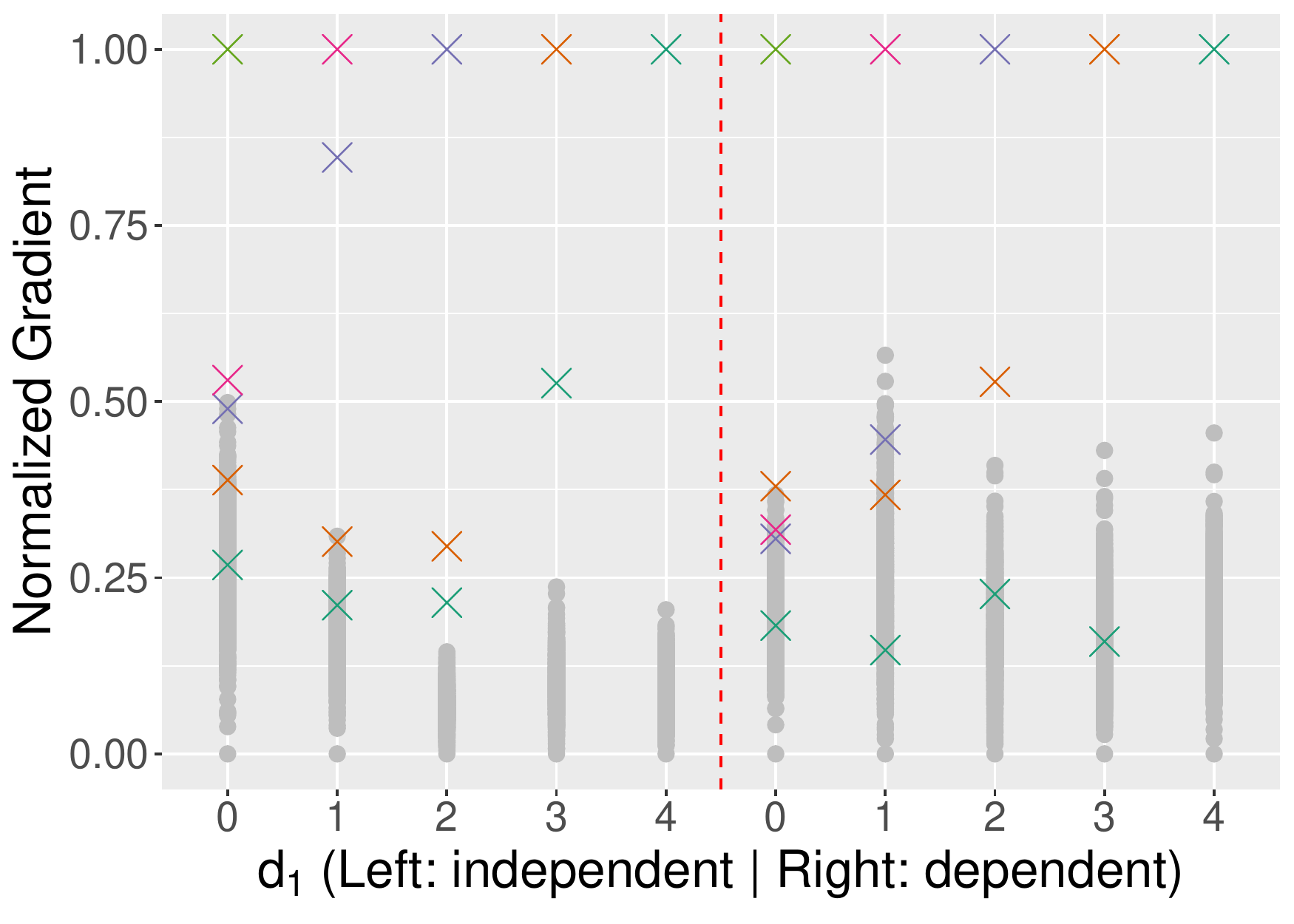}
	\caption{Normalized gradient with mini-batching}
	\label{subfig:gradient_magnitude_minibat}	
	\end{subfigure}%
	
  \caption{Relative magnitudes of the coefficients in the SR-gradient. The number of covariates $d = 10^3$, among which the first five are true (i.e., with positive true SRs). For each $d_1=0,1,2,3,4$, we assume that only the first $d_1$ true covariates are selected and their SRs are correctly estimated. The SRs of unselected covariates are zero. The coefficients in the gradient are normalized to $[0, 1]$. The first five coefficients in the SR-gradient are marked by colored crosses and the rest by grey dots. Only coeffcients corresponding to unselected covariates are plotted to align with goal of variable selection. The red dashed line separates scenarios with independent and dependent covariates. Notice that some colored crosses are covered by grey dots and that the coefficients for unselected true covariates were typically bigger than those for fake covariate.}
  \label{fig:gradient_magnitude}
\end{figure}

In the remainder of this section, we provide theoretical support for why the SR-gradient can be used for variable selection. The following notations are used in the theoretical results and their derivations:
\begin{align*}
    \begin{array}{ll}
        \br_0 & \mbox{the true relevance vector $[r_{10}, r_{20}, \ldots, r_{d0}]$} \\
        d_0 &  \mbox{the number of true covariates (i.e., $r_{l0} > 0$ if $l \le d_0$ $r_{l0} = 0$ otherwise)}\\
        d_1 & \mbox{an integer between $0$ and $d_0$, $0 < d_1 < d_0 < d$} \\
        \br_1 & [r_{10}, \ldots, r_{d_10}, 0, \ldots, 0]^\top \\
        (\sigma_0, \tau_0), (\sigma_1, \tau_1) & \mbox{the true and an arbitrary values for $(\sigma, \tau)$} \\
        \bfSigma, \bfSigma_0, \bfSigma_1 & \mbox{covariance matrix and its values evaluated at $(\sigma_0, \br_0, \tau_0)$ and $(\sigma_1, \br_1, \tau_1)$} \\
        \tilde\bfSigma, \tilde\bfSigma_0, \tilde\bfSigma_1 & \mbox{correlation matrix and its values evaluated at $(1, \br_0, 0)$ and $(1, \br_1, 0)$}
    \end{array}
\end{align*}
In this section, the expectations are taken with respect to both $\{\bx_i\}_{i = 1}^n$ and $\{y_i\}_{i = 1}^n$. 

\begin{proposition}
    \label{thm:sqexp}
    Assume that $K(\bx_i, \bx_j) = \exp(-q^{\br}(\bx_i, \bx_j)^2)$ and that $\{x_{il}\}_{i = 1, \ldots, n, l = 1, \ldots, d}$ have i.i.d.\ normal or uniform distributions. When evaluated at $(\sigma_1, \br_1, \tau_1)$, $E[\frac{\partial \ell}{\partial r_{l_1}^2}] > E[\frac{\partial \ell}{\partial r_{l_2}^2}]$.
\end{proposition}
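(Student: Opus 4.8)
\emph{Step 1: an exact formula for the expected score.} For a zero-mean Gaussian density, $\tfrac{\partial\ell}{\partial r_l^2}=-\tfrac12\tr\!\big(\bfSigma^{-1}\tfrac{\partial\bfSigma}{\partial r_l^2}\big)+\tfrac12\by^\top\bfSigma^{-1}\tfrac{\partial\bfSigma}{\partial r_l^2}\bfSigma^{-1}\by$. Taking $E_\by$ with $\by\sim\normal_n(\bfzero,\bfSigma_0)$, using $E_\by[\by^\top\bM\by]=\tr(\bM\bfSigma_0)$, and evaluating at $(\sigma_1,\br_1,\tau_1)$ gives
\[
E_\by\!\Big[\tfrac{\partial\ell}{\partial r_l^2}\Big]=\tfrac12\tr\!\Big(\bfSigma_1^{-1}\tfrac{\partial\bfSigma}{\partial r_l^2}\big(\bfSigma_1^{-1}\bfSigma_0-\bI\big)\Big).
\]
Because the squared-exponential kernel factorizes over coordinates, $\tfrac{\partial\bfSigma}{\partial r_l^2}=-\bD_l\circ\bfSigma$ entrywise with $(\bD_l)_{ij}=(x_{il}-x_{jl})^2\mathbbm{1}[i\neq j]$, which at $\br_1$ equals $-\bD_l\circ\bfSigma_1$. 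Writing $s_k=r_{l_k0}^2$ with $s_1>s_2$ (the compared covariates being currently unselected, $l_1,l_2>d_1$, and $n\ge2$), the claim is equivalent to showing that $E_\bx\big[\tr(\bfSigma_1^{-1}(\bD_l\circ\bfSigma_1)(\bfSigma_1^{-1}\bfSigma_0-\bI))\big]$ is \emph{smaller} for $l=l_1$ than for $l=l_2$, i.e.\ that the expected derivative is larger for the covariate with the larger true squared relevance.

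\emph{Step 2: integrate out the distinguished coordinate.} The structural fact driving the proof is that for $l>d_1$ the $l$-th input coordinate enters neither $\bfSigma_1$ nor $\bfSigma_1^{-1}$; it enters only $\bD_l$ and, if $l\le d_0$, the single Hadamard factor $\big(e^{-r_{l0}^2(x_{il}-x_{jl})^2}\big)_{ij}$ of $\bfSigma_0$. I would integrate that coordinate out first (conditionally on the rest): the $-\bI$ term and the diagonal part of $\bfSigma_1^{-1}\bfSigma_0\bfSigma_1^{-1}$ become identical for every $l>d_1$ and cancel in the $l_1$-versus-$l_2$ comparison. Expanding the remainder entrywise, the surviving coordinate-$l$ expectation $E\big[(x_{il}-x_{jl})^2e^{-s(x_{pl}-x_{ql})^2}\big]$ depends only on $s=r_{l0}^2$ and on the overlap of the two index pairs, equalling $2v$, $\psi(s)$, $\chi(s)$, $2v\phi(s)$ for the patterns "no shared index / identical pairs / one shared index / disjoint pairs", where $v=\mathrm{Var}(x_{11})$, $\phi(s)=E[e^{-sU}]$, $\psi(s)=E[Ue^{-sU}]$ with $U=(x-x')^2$, and $\chi$ the mixed analogue; all four coincide at $s=0$. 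Averaging the leftover (coordinate-$l$-free) factors over the other coordinates, and applying the same decoupling to the \emph{other} distinguished coordinate, the comparison collapses to
\[
E\!\Big[\tfrac{\partial\ell}{\partial r_{l_1}^2}\Big]-E\!\Big[\tfrac{\partial\ell}{\partial r_{l_2}^2}\Big]=-\tfrac12\big(\Delta_\psi S_{\mathrm{full}}+\Delta_\chi S_{\mathrm{partial}}\big),
\]
with $\Delta_\psi=\psi(s_1)\phi(s_2)-\psi(s_2)\phi(s_1)$, $\Delta_\chi=\chi(s_1)\phi(s_2)-\chi(s_2)\phi(s_1)$, $\bP:=\bfSigma_1^{-1}$, $\bC:=\bfSigma_0$ with its additive noise and its coordinate-$l_1$ and coordinate-$l_2$ Hadamard factors removed (so $\bC$ has strictly positive entries and is positive semidefinite by the Schur product theorem), $S_{\mathrm{full}}=\sum_{i\neq j}E\big[(\bfSigma_1)_{ij}C_{ij}(P_{ij}^2+P_{ii}P_{jj})\big]$, and $S_{\mathrm{partial}}=\sum_{i\neq j}E\big[(\bfSigma_1)_{ij}\sum_{(p,q)}P_{jp}C_{pq}P_{qi}\big]$, the inner sum over $p\neq q$ with exactly one of $p,q$ in $\{i,j\}$.

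\emph{Step 3: tilt monotonicity and the sign of $S_{\mathrm{full}}$.} The ratio $\psi(s)/\phi(s)$ is the mean of $U$ under the exponentially tilted law $\propto e^{-sU}$, so its $s$-derivative equals $-\mathrm{Var}_s(U)<0$, since $U=(x-x')^2$ is non-degenerate for i.i.d.\ normal or uniform inputs; an analogous tilted-covariance computation—using that for these two laws $x\mapsto E_Z[(x-Z)^2e^{-s(x-Z)^2}]/E_Z[e^{-s(x-Z)^2}]$ is increasing in $(x-\mu)^2$ (for the normal case one checks $-\partial_s\log E_Z[e^{-s(x-Z)^2}]=\tfrac{v}{1+2sv}+\tfrac{(x-\mu)^2}{(1+2sv)^2}$, with an elementary coupling argument for the uniform case)—shows $\chi/\phi$ is also strictly decreasing. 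Since $s_1>s_2$, both $\Delta_\psi<0$ and $\Delta_\chi<0$. Moreover $S_{\mathrm{full}}>0$ because $(\bfSigma_1)_{ij}>0$ and $C_{ij}>0$ for $i\neq j$, $P_{ij}^2\ge0$, and $\bP=\bfSigma_1^{-1}$—the inverse of a positive definite matrix—has strictly positive diagonal. Hence, \emph{provided} $S_{\mathrm{partial}}\ge0$, the displayed difference is strictly positive; this also covers the degenerate comparison in which $l_2$ is a null covariate ($s_2=0$), where $\psi(s_2)=\chi(s_2)=2v\phi(s_2)=2v$.

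\emph{Main obstacle.} The crux is the sign of $S_{\mathrm{partial}}$: unlike $S_{\mathrm{full}}$, its summands involve \emph{off-diagonal} entries of $\bfSigma_1^{-1}$, which are not sign-definite, and it aggregates over a triple index (an extra factor of $\order(n)$ terms). I would attack it by rewriting $S_{\mathrm{partial}}$ as the expectation of a trace in which the Schur-product positivity of $\bfSigma_1$ and $\bC$ can be exploited to force a sign; failing a clean argument, one can bound $|\Delta_\chi|\,|S_{\mathrm{partial}}|$ by $|\Delta_\psi|\,S_{\mathrm{full}}$ using the approximate sparsity/decay of $\bfSigma_1^{-1}$ together with the fact that $\psi/\phi$ decreases faster than $\chi/\phi$ near $0$, which secures the conclusion in the regime relevant to the paper. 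A safe fallback is to prove the proposition first for pairwise conditioning sets (effectively $n=2$): then only the "identical pairs" pattern occurs, $S_{\mathrm{partial}}$ and $\Delta_\chi$ are absent, and the inequality reduces \emph{exactly} to the monotonicity of $\psi/\phi$ established in Step 3.
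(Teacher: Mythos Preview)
Your setup through Step~2 parallels the paper's: both reduce to comparing $E_{\bX}\big[\tr(\bfSigma_1^{-1}\bfSigma_{l,1}\bfSigma_1^{-1}\bfSigma_0)\big]$ for $l=l_1$ versus $l=l_2$, and both decompose by the overlap pattern of the pair $\{i,j\}$ from $\bD_l$ with the pair $\{p,q\}$ arising from $\bfSigma_0$. The divergence, and the genuine gap, is exactly your $S_{\mathrm{partial}}$. You try to sign the ``identical pairs'' and ``one shared index'' contributions \emph{separately}; as you correctly diagnose, the latter involves off-diagonal entries of $\bfSigma_1^{-1}$ whose sign you cannot control, and your proposed workarounds (a vague trace rewrite, a heuristic bound using ``approximate sparsity'', or retreating to $n=2$) are not proofs.

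The paper does \emph{not} separate. After rewriting the difference as $\tr\!\big(\cov_{\bX}[\mbox{vec}(-\bD_{l_1}),\mbox{vec}(\exp(-r_{l_10}^2\bD_{l_1}))]\cdot c\,E_{\bX}[\bM]\big)$ with $E_{\bX}[\bM]$ positive definite, it keeps the two overlap patterns together inside that single $n^2\times n^2$ cross-covariance matrix, observes it is symmetric with exactly three distinct entry values $0,a,b$ (no overlap / identical pairs / one shared index), and then uses a \emph{structural comparison}: the ordinary variance matrix $\cov_{\bX}[\mbox{vec}(\bD_{l_1}),\mbox{vec}(\bD_{l_1})]$ has the \emph{same} zero pattern with values $0,\tilde a,\tilde b$ and is automatically positive semidefinite. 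Since the ``identical pairs'' sub-pattern alone is positive semidefinite, the target matrix is a nonnegative combination of that sub-pattern and the reference variance matrix whenever the scalar inequality $0\le b/a\le \tilde b/\tilde a$ holds. This collapses the entire $S_{\mathrm{partial}}$ difficulty to a one-dimensional inequality in $r_{l_10}^2$ involving only $\cov[-(X_1{-}X_2)^2,\exp(-r^2(X_1{-}X_3)^2)]$, $\cov[-(X_1{-}X_2)^2,\exp(-r^2(X_1{-}X_2)^2)]$, and their $r=0$ analogues, which the paper verifies numerically for i.i.d.\ uniform and normal inputs (and accordingly labels the argument a ``partial proof''). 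In short: your tilt-monotonicity handles the $a$-part cleanly, but the $b$-part cannot be signed in isolation; the missing idea is to establish positive semidefiniteness of the \emph{combined} $\{a,b\}$-patterned matrix by comparison with a known positive semidefinite matrix of identical pattern, reducing everything to a single ratio inequality.
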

Proposition~\ref{thm:sqexp} suggests that under the squared exponential kernel, when an arbitrary number of SRs are at their true values while the others at zero, the order of the SR-gradient coefficients indicates the relevance order of the covariates. While the condition on $\br_1$ in Proposition~\ref{thm:sqexp} is somewhat restrictive, we conjecture that when the gradient is evaluated at $\br$ no greater than $\br_0$ coefficient-wise, the above conclusion still holds, which can be readily shown if we assume $E[\frac{\partial \ell}{\partial r_{l_1}^2}] - E[\frac{\partial \ell}{\partial r_{l_2}^2}]$, evaluated at $(\sigma_1, \br, \tau_1)$, changes monotonically with each coefficient in $\br$. In general, numerical examples in Sections~3.1, 3.7, and 4 suggest that the conditions in Proposition~\ref{thm:sqexp} can be relaxed and the result still holds.

Based on Proposition~\ref{thm:sqexp}, two corollaries addressing the initialization of $\br$ and correlated fake covariates, respectively, can be derived. 
\begin{corollary}
    \label{cor:sqexp_at_zero}
    Assume that $K(\bx_i, \bx_j) = \exp(-q^{\br}(\bx_i, \bx_j)^2)$ and that $\{x_{il}\}_{i = 1, \ldots, n, l = 1, \ldots, d}$ have i.i.d.\ normal or uniform distributions. When evaluated at $\br \rightarrow \bfzero$:
    \begin{align*}
        \textstyle E[\frac{\partial \ell}{\partial r_{l_1}^2}] \ge E[\frac{\partial \ell}{\partial r_{l_2}^2}]\mbox{ for }0 < l_1 \le d_0\mbox{ and }d_0 < l_2 \le d.
    \end{align*}
\end{corollary}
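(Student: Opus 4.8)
The plan is to read Corollary~\ref{cor:sqexp_at_zero} as the boundary case of Proposition~\ref{thm:sqexp} in which no covariate has yet been selected, i.e.\ $d_1 = 0$ and $\br_1 = \bfzero$, so that the gradient is evaluated at the origin (matching the $\mathbf{0}^{+}$ initialization in Algorithm~\ref{alg:VGPR}). Since Proposition~\ref{thm:sqexp} is stated for $0 < d_1 < d_0$, the case $\br_1 = \bfzero$ is not literally contained in it, so I would either (i) pass to the limit $\br_1 \to \bfzero$ along strictly positive relevance vectors, noting that the strict inequality of Proposition~\ref{thm:sqexp} relaxes to the weak inequality claimed here, or (ii) re-run the computation underlying Proposition~\ref{thm:sqexp} directly at $\br_1 = \bfzero$. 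I would present (ii), since it exposes the mechanism, and invoke (i) only to license the non-strict inequality.

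For (ii), write $\ell(\bftheta) = -\tfrac12\log\det\bfSigma - \tfrac12\by^\top\bfSigma^{-1}\by - \tfrac n2\log(2\pi)$ and take the expectation over $\by \sim \normal_n(\bfzero,\bfSigma_0)$; the Gaussian score identity gives
\begin{equation*}
    E_{\by}\!\left[\tfrac{\partial\ell}{\partial r_l^2}\right]\Big|_{(\sigma_1,\br_1,\tau_1)} = \tfrac12\,\tr\!\left(\bfSigma_1^{-1}\tfrac{\partial\bfSigma}{\partial r_l^2}\,\bfSigma_1^{-1}(\bfSigma_0 - \bfSigma_1)\right),
\end{equation*}
the same object manipulated in the proof of Proposition~\ref{thm:sqexp}. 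At $\br_1 = \bfzero$ all scaled distances vanish, so $\bfSigma_1 = \sigma_1^2\bJ + \tau_1^2\bI$ with $\bJ = \mathbf{1}\mathbf{1}^\top$ (invertible since $\tau_1 > 0$), and $\partial\bfSigma/\partial r_l^2 = -\sigma_1^2\bD_l$, where $\bD_l$ has entries $(x_{i,l}-x_{j,l})^2$ and zero diagonal. Decomposing $\bD_l = \bar\bD + \bD_l'$ and $\bfSigma_0 = \bar\bfSigma_0 + \bfSigma_0'$ into $\bx$-means and centered parts, and using that $\bar\bD$ does not depend on $l$ and that the cross terms have zero $\bx$-expectation, one is left with
\begin{equation*}
    E_{\bx}E_{\by}\!\left[\tfrac{\partial\ell}{\partial r_l^2}\right]\Big|_{\br_1 = \bfzero} = -\tfrac{\sigma_1^2}{2}\Big(\underbrace{\tr\!\left(\bfSigma_1^{-1}\bar\bD\,\bfSigma_1^{-1}(\bar\bfSigma_0 - \bfSigma_1)\right)}_{\text{independent of }l} + \underbrace{E_{\bx}\,\tr\!\left(\bfSigma_1^{-1}\bD_l'\,\bfSigma_1^{-1}\bfSigma_0'\right)}_{=:\,C_l}\Big).
\end{equation*}

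The comparison then reduces to signing $C_l$. For a fake covariate, $d_0 < l_2 \le d$, the matrix $\bD_{l_2}'$ depends only on the $l_2$-th coordinate while $\bfSigma_0'$ depends only on coordinates $1,\ldots,d_0$; by independence across coordinates $E_{\bx}[\bD_{l_2}'\,\bfSigma_1^{-1}\bfSigma_0'] = E[\bD_{l_2}']\,\bfSigma_1^{-1}E[\bfSigma_0'] = \bfzero$, so $C_{l_2} = 0$ and the gradient expectation equals the common baseline. For a true covariate, $0 < l_1 \le d_0$, the factor $\exp(-r_{l_1 0}^2(x_{i,l_1}-x_{j,l_1})^2)$ appears in $(\bfSigma_0)_{i,j}$ and is strictly decreasing in $(x_{i,l_1}-x_{j,l_1})^2 = (\bD_{l_1})_{i,j}$, so the entrywise covariance of $\bD_{l_1}'$ and $\bfSigma_0'$ is nonpositive; propagated through the two copies of the positive definite $\bfSigma_1^{-1}$ this yields $C_{l_1} \le 0$, hence $-\tfrac{\sigma_1^2}{2}C_{l_1} \ge 0 = -\tfrac{\sigma_1^2}{2}C_{l_2}$, which is the claimed inequality. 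The main obstacle is precisely this last implication: one must verify that sandwiching the entrywise covariance pattern of $\bD_{l_1}'$ and $\bfSigma_0'$ between two copies of $\bfSigma_1^{-1}$ cannot reverse its sign — i.e.\ that the ``shared single index'' cross terms $\cov((\bD_{l_1})_{j,k},(\bfSigma_0)_{m,i})$ with $|\{j,k\}\cap\{m,i\}| = 1$ do not dominate the nonpositive matched terms — which is where the rank-one-plus-$\tau_1^2\bI$ form of $\bfSigma_1$ at the origin (and $\tau_1>0$) makes the bookkeeping tractable; the appearance of ``$\ge$'' rather than ``$>$'' reflects that $C_{l_1}$ can vanish in degenerate covariate configurations, equivalently that the inequality can be read off as the limit in~(i).
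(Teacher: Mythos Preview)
Your fallback route~(i) is exactly what the paper does: it treats $\br\to\bfzero$ by taking $d_1=0$ and substituting for $\tilde\bfSigma_1$ a nonsingular matrix arbitrarily close to $\mathbf{1}_{n\times n}$, then reusing the machinery from the proof of Proposition~\ref{thm:sqexp} verbatim. So the limiting argument you describe in one line matches the paper's actual proof of the corollary.

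Your route~(ii) is a genuinely different presentation, and the reduction is carried out cleanly: at $\br_1=\bfzero$ the matrix $\bfSigma_1=\sigma_1^2\bJ+\tau_1^2\bI$ becomes deterministic, which lets you split off an $l$-independent baseline and isolate $C_l$; the argument that $C_{l_2}=0$ for fake covariates via independence of coordinates is correct and transparent. The gap is in the final step, and it is the same gap that drives the whole proof of Proposition~\ref{thm:sqexp}: showing $C_{l_1}\le 0$ amounts to showing that $\tr\bigl((\bfSigma_1^{-1}\otimes\bfSigma_1^{-1})\,\cov_{\bX}[\mathrm{vec}(-\bD_{l_1}),\mathrm{vec}(\bfSigma_0)]\bigr)\ge 0$, and ``entrywise nonpositive covariance on matched index pairs'' is not enough to conclude this after sandwiching by $\bfSigma_1^{-1}$ on both sides. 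The paper does \emph{not} avoid this difficulty; in its proof of Proposition~\ref{thm:sqexp} it shows the full cross-covariance matrix $\cov_{\bX}[\mathrm{vec}(-\bD_{l_1}),\mathrm{vec}(\exp(-r_{l_10}^2\bD_{l_1}))]$ is itself positive semidefinite, by observing it has the same three-value block pattern as the genuine covariance $\cov_{\bX}[\mathrm{vec}(\bD_{l_1}),\mathrm{vec}(\bD_{l_1})]$ and then verifying (numerically, under uniform and normal designs) that the ratio of the ``one shared index'' entry to the ``two shared indices'' entry does not exceed the corresponding ratio for the variance matrix. Your hope that the rank-one-plus-$\tau_1^2\bI$ structure of $\bfSigma_1$ makes the bookkeeping tractable is reasonable, but it does not circumvent this step: the PSD property of the cross-covariance is what is needed regardless of how nice $\bM=\bfSigma_1^{-1}\otimes\bfSigma_1^{-1}$ is. So if you want route~(ii) to stand on its own, you would still need the structural/ratio argument from Proposition~\ref{thm:sqexp}; otherwise, fall back to~(i), which is the paper's actual argument.
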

\begin{corollary}
    \label{cor:sqexp_corr}
    Assume that $K(\bx_i, \bx_j) = \exp(-q^{\br}(\bx_i, \bx_j)^2)$ and that $\{x_{il}\}_{i = 1, \ldots, n, l = 1, \ldots, d}$ have i.i.d.\ normal distributions. Let $\bx_{d + 1}$ be a new covariate constructed as $\rho_1 \bx_{l_1} + \rho_2 \bx_{l_2}$ with $d_1 < l_1 \le d_0 < l_2 \le d$, $\rho_2 > 0$, and $\rho_1^2 + \rho_2^2 = 1$. When evaluated at $(\sigma_1, \br_1, \tau_1)$:
    \begin{align*}
        \textstyle E[\frac{\partial \ell}{\partial r_{l_1}^2}] > E[\frac{\partial \ell}{\partial r_{d + 1}^2}].
    \end{align*}
\end{corollary}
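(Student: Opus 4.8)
\emph{Proof proposal.} The plan is to write the expected squared-relevance gradient in closed form, observe that it is \emph{linear} in the matrix $\bm{\Delta}_l$ with entries $(x_{il}-x_{jl})^2$, and then let the linear combination defining $\bx_{d+1}$ pass straight through that formula. Starting from $\ell(\bftheta)=-\tfrac12\log\det\bfSigma-\tfrac12\by^\top\bfSigma^{-1}\by+\mathrm{const}$ evaluated at $(\sigma_1,\br_1,\tau_1)$: the derivative of $\bfSigma$ with respect to $r_l^2$ at $\br_1$ has $(i,j)$-entry $-\sigma_1^2(x_{il}-x_{jl})^2\,\tilde\bfSigma_{1,ij}$, i.e. equals $-\sigma_1^2(\bm{\Delta}_l\circ\tilde\bfSigma_1)$ with $\circ$ the entrywise product. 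Taking the conditional expectation over $\by\sim\normal(\bfzero,\bfSigma_0)$ and using $E[\by^\top\!A\by\mid\bx]=\tr(A\bfSigma_0)$ gives, for any $l$,
\[
E\!\left[\tfrac{\partial\ell}{\partial r_l^2}\right]\;=\;-\tfrac{\sigma_1^2}{2}\,E\!\left[\tr\!\big(\bfSigma_1^{-1}(\bm{\Delta}_l\circ\tilde\bfSigma_1)\,\bfSigma_1^{-1}(\bfSigma_0-\bfSigma_1)\big)\right],
\]
which is linear in $\bm{\Delta}_l$.

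Next I would expand $(x_{i,d+1}-x_{j,d+1})^2$ from $x_{i,d+1}=\rho_1 x_{i,l_1}+\rho_2 x_{i,l_2}$ to get $\bm{\Delta}_{d+1}=\rho_1^2\bm{\Delta}_{l_1}+2\rho_1\rho_2\bm{\Delta}^{\times}+\rho_2^2\bm{\Delta}_{l_2}$, where $\bm{\Delta}^{\times}$ has entries $(x_{i,l_1}-x_{j,l_1})(x_{i,l_2}-x_{j,l_2})$. Substituting into the displayed identity and using linearity of the trace and of the expectation, $E[\partial\ell/\partial r_{d+1}^2]$ splits into three terms. The $\bm{\Delta}_{l_1}$ and $\bm{\Delta}_{l_2}$ terms are, by construction, $\rho_1^2$ and $\rho_2^2$ times the very same closed-form expression with $\bm{\Delta}_{l_1}$, respectively $\bm{\Delta}_{l_2}$, in place of $\bm{\Delta}_l$, hence equal $\rho_1^2 E[\partial\ell/\partial r_{l_1}^2]$ and $\rho_2^2 E[\partial\ell/\partial r_{l_2}^2]$. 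The cross term vanishes: since $l_2>d_0$, the covariate $\bx_{l_2}$ is independent of $\bx_1,\dots,\bx_{d_0}$ (hence of $\bfSigma_0,\bfSigma_1,\tilde\bfSigma_1$) and of $\bx_{l_1}$, so conditioning on everything but $\bx_{l_2}$ yields $E[\bm{\Delta}^{\times}\mid\text{rest}]=\bfzero$ and, by linearity, a conditional expectation of zero. This produces the convex-combination identity
\[
E\!\left[\tfrac{\partial\ell}{\partial r_{d+1}^2}\right]\;=\;\rho_1^2\,E\!\left[\tfrac{\partial\ell}{\partial r_{l_1}^2}\right]+\rho_2^2\,E\!\left[\tfrac{\partial\ell}{\partial r_{l_2}^2}\right].
\]

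Finally, with $\rho_1^2+\rho_2^2=1$ I would conclude $E[\partial\ell/\partial r_{l_1}^2]-E[\partial\ell/\partial r_{d+1}^2]=\rho_2^2\big(E[\partial\ell/\partial r_{l_1}^2]-E[\partial\ell/\partial r_{l_2}^2]\big)$, which is strictly positive since $\rho_2>0$ and, by Proposition~\ref{thm:sqexp} applied to the unselected true covariate $l_1$ (with $d_1<l_1\le d_0$) and the fake covariate $l_2$ (with $l_2>d_0$; both have zero relevance under $\br_1$), the bracket is positive. I expect the only real obstacle to be the bookkeeping in the middle step — justifying that $\bx_{l_2}$ may be integrated out inside the trace, that the cross term is \emph{exactly} zero, and that the surviving $\bm{\Delta}_{l_2}$ term is \emph{exactly} $\rho_2^2 E[\partial\ell/\partial r_{l_2}^2]$ (getting the constant, not merely the sign, right); everything else is the standard Gaussian-derivative calculation and a one-line invocation of Proposition~\ref{thm:sqexp}. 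The same computation in fact shows that the expected gradient of any unit-norm linear combination of covariates is the matching convex combination of their expected gradients; normality enters only to give $\bx_{d+1}$ the same marginal law as the other covariates, the argument itself needing merely that $\bx_{l_2}$ be independent of $\bx_1,\dots,\bx_{d_0}$.
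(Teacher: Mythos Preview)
Your argument is correct and shares the paper's core idea: write the expected gradient in a form linear in the squared-difference matrix $\bD_l$, expand $\bD_{d+1}=\rho_1^2\bD_{l_1}+2\rho_1\rho_2\bD^\times+\rho_2^2\bD_{l_2}$, and kill the cross term by conditioning on everything but $\bx_{l_2}$. The packaging differs slightly. The paper, working from equation~\eqref{equ:appendix1} in the proof of Proposition~\ref{thm:sqexp}, uses normality to assert $E_\bX[\mbox{vec}(\bD_{d+1})]=E_\bX[\mbox{vec}(\bD_{l_1})]$ and reduces the difference directly to $\rho_2^2$ times the covariance--trace quantity already shown positive inside that proof. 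You instead isolate the clean convex-combination identity
\[
E\!\left[\tfrac{\partial\ell}{\partial r_{d+1}^2}\right]=\rho_1^2\,E\!\left[\tfrac{\partial\ell}{\partial r_{l_1}^2}\right]+\rho_2^2\,E\!\left[\tfrac{\partial\ell}{\partial r_{l_2}^2}\right]
\]
and then invoke Proposition~\ref{thm:sqexp} as a black box. Your route is a touch more modular: the identity itself needs only independence of $\bx_{l_2}$ from $\bx_1,\ldots,\bx_{d_0}$ (as you note), not normality, and it yields the more general statement that the expected SR-gradient of any unit-norm linear combination is the matching convex combination of the individual expected gradients. The paper's route is terser because it stays inside the machinery already built for Proposition~\ref{thm:sqexp}.
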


Noticing that $\zeta = \phi$ can be closely approximated by $\br \rightarrow \mathbf{0}$, Corollary~\ref{cor:sqexp_at_zero} indicates that SRs should be initialized to small magnitudes but big enough to avoid numerical singularity (e.g., $10^{-8}$), which is denoted by $\mathbf{0}^{+}$ in Algorithm~\ref{alg:VGPR}. Corollary~\ref{cor:sqexp_corr} suggests that the order of the SR-gradient coefficients can distinguish fake covariates that are correlated with true covariates. Theoretical support for the previous proposition becomes more challenging under general covariance kernels, due to the lack of the separability property and the straight-forward derivative formula. Proposition~\ref{thm:all_kernel} aims to reach the same conclusion for general ARD kernels but uses a first-order approximation of $\bfSigma_0$.
\begin{proposition}
    \label{thm:all_kernel}
    Assume that $\{x_{il}\}_{i = 1, \ldots, n, l = 1, \ldots, d}$ have i.i.d.\ distributions and that $\{r_{l0}\}_{l = d_1 + 1}^{d_0}$ are small enough s.t.\ $\tilde\bfSigma_0$ can be closely approximated by the first-order Taylor expansion of $\tilde\bfSigma$ at $\br_1$:
    \begin{align*}
    \textstyle     \tilde\bfSigma_0 \approx \tilde\bfSigma_1 + \sum_{l = d_1 + 1}^{d_0} \left. \frac{\partial \tilde\bfSigma}{\partial r_l^2} \right|_{\br = \br_1} r_{l0}^2.
    \end{align*}
    When evaluated at $(\sigma_1, \br_1, \tau_1)$:
    \begin{align*}
     \textstyle    E[\frac{\partial \ell}{\partial r_{l_1}^2}] > E[\frac{\partial \ell}{\partial r_{l_2}^2}]\mbox{ for }d_1 < l_1 \le d_0\mbox{ and }d_0 < l_2 \le d.
    \end{align*}
\end{proposition}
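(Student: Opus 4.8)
The plan is to reduce the comparison to a single positivity statement about the expected exact score, and then establish that positivity with a covariance-of-squared-differences computation. First I would write the exact Gaussian score $\partial\ell/\partial r_l^2=-\tfrac12\tr(\bfSigma_1^{-1}\bD_l)+\tfrac12\by^\top\bfSigma_1^{-1}\bD_l\bfSigma_1^{-1}\by$, where $\bD_l:=\partial\bfSigma/\partial r_l^2\big|_{(\sigma_1,\br_1,\tau_1)}=\sigma_1^2\,\partial\tilde\bfSigma/\partial r_l^2\big|_{\br_1}$, and take the conditional expectation over $\by\sim\normal_n(\bfzero,\bfSigma_0)$ given $\bX$; since $E[\by^\top\bM\by\mid\bX]=\tr(\bM\bfSigma_0)$ for any $\bX$-measurable $\bM$, this gives $E[\partial\ell/\partial r_l^2\mid\bX]=\tfrac12\tr\!\big(\bfSigma_1^{-1}\bD_l\bfSigma_1^{-1}(\bfSigma_0-\bfSigma_1)\big)$. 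Plugging in the assumed first-order expansion (and recalling $\bfSigma=\sigma^2\tilde\bfSigma+\tau^2\bI$) yields $\bfSigma_0-\bfSigma_1\approx(\sigma_0^2-\sigma_1^2)\tilde\bfSigma_1+(\tau_0^2-\tau_1^2)\bI+\sigma_0^2\sum_{l'=d_1+1}^{d_0}r_{l'0}^2\,\tilde\bD_{l'}$, with $\tilde\bD_{l'}:=\partial\tilde\bfSigma/\partial r_{l'}^2\big|_{\br_1}$, which splits $E[\partial\ell/\partial r_l^2\mid\bX]$ into three blocks.

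Next I would exploit that for any unselected coordinate $l>d_1$ the $(j,k)$ entry of $\tilde\bD_l$ is $g_{jk}(x_{jl}-x_{kl})^2$ with $g_{jk}=\tilde K'(q^{\br_1}_{jk})/(2q^{\br_1}_{jk})$, whereas $\bfSigma_1$, $\tilde\bfSigma_1$ and every $g_{jk}$ depend only on the first $d_1$ covariate columns. Taking the remaining expectation over $\bX$ and using that $(x_{jl}-x_{kl})^2$ is independent of anything built from those columns (and, for $l\ne l'$, of $(x_{jl'}-x_{kl'})^2$), the $\tilde\bfSigma_1$ block and the $\bI$ block contribute one and the same value for every unselected $l$, and $E_\bX[\tr(\bfSigma_1^{-1}\tilde\bD_l\bfSigma_1^{-1}\tilde\bD_{l'})]$ takes a single common value $\mathcal B$ whenever $l\ne l'$ are both larger than $d_1$. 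Hence in the difference everything cancels except the ``diagonal'' term $l'=l_1$ of the last block, present because $l_1\in\{d_1{+}1,\dots,d_0\}$ is one of the summation indices but absent for $l_2>d_0$, so that
$$E\Big[\tfrac{\partial\ell}{\partial r_{l_1}^2}\Big]-E\Big[\tfrac{\partial\ell}{\partial r_{l_2}^2}\Big]=\tfrac{\sigma_1^2\sigma_0^2}{2}\,r_{l_10}^2\,\Delta,\qquad \Delta:=E_\bX\big[\tr(\bfSigma_1^{-1}\tilde\bD_{l_1}\bfSigma_1^{-1}\tilde\bD_{l_1})\big]-\mathcal B .$$
As $\sigma_0^2,\sigma_1^2,r_{l_10}^2>0$, it remains to show $\Delta>0$.

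For $\Delta$ I would condition on the first $d_1$ columns and integrate out column $l_1$. Expanding the two traces entrywise and noting that the cross-moments defining $\mathcal B$ factor (being in distinct, independent columns), one gets $\Delta=E_\bX\big[\sum_{j\ne k,\,m\ne i}(\bfSigma_1^{-1})_{ij}\,g_{jk}\,(\bfSigma_1^{-1})_{km}\,g_{mi}\,\gamma_{jk,mi}\big]$ with $\gamma_{jk,mi}=\mathrm{cov}\big((x_{jl_1}{-}x_{kl_1})^2,(x_{ml_1}{-}x_{il_1})^2\big)$. A short moment identity gives $\gamma_{jk,mi}=(\kappa_4-v^2)(\delta_{jm}+\delta_{ji}+\delta_{km}+\delta_{ki})+4v^2\,\mathbf{1}[\{j,k\}=\{m,i\}]\ge0$, where $v$ and $\kappa_4$ are the variance and fourth moment of a covariate entry (and $\kappa_4\ge v^2$ by Jensen). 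Substituting this and using symmetry of $\bfSigma_1^{-1}$ and of $(g_{jk})$, each surviving index pattern collapses to a manifestly nonnegative quantity: the $\delta_{jm}$ and $\delta_{ki}$ patterns each give $(\kappa_4-v^2)\sum_j\big(\sum_{\ell\ne j}(\bfSigma_1^{-1})_{j\ell}g_{j\ell}\big)^2$; the $\delta_{ji}$ and $\delta_{km}$ patterns each give $(\kappa_4-v^2)\sum_j(\bfSigma_1^{-1})_{jj}\,(\bG'\bfSigma_1^{-1}\bG')_{jj}$ with $\bG'$ the matrix $(g_{jk})$ zeroed on the diagonal and $\bG'\bfSigma_1^{-1}\bG'\succeq0$; and the indicator pattern gives $4v^2\big(\sum_{j\ne k}(\bfSigma_1^{-1})_{jk}^2g_{jk}^2+\sum_{j\ne k}(\bfSigma_1^{-1})_{jj}(\bfSigma_1^{-1})_{kk}g_{jk}^2\big)$. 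All are nonnegative, and the last sum is strictly positive almost surely since $(\bfSigma_1^{-1})_{jj}>0$ and $g_{jk}\ne0$ a.s.\ (covariates distinct a.s.\ and $\tilde K'$ nonvanishing a.e., as for the Mat\'ern and squared-exponential kernels), so $\Delta>0$. I expect the crux to be exactly this last regrouping --- verifying that after the covariance substitution every index pattern is a sum of squares or a trace of a positive-semidefinite matrix against a nonnegative diagonal --- while the reductions in the first two paragraphs are routine once the vanishing coordinates of $\br_1$ beyond $d_1$ are used to decouple the fake and unselected-true columns.
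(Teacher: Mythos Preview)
Your reduction in the first two paragraphs matches the paper's proof essentially line for line: both compute the expected score via $E_\by[\by^\top\bM\by]=\tr(\bM\bfSigma_0)$, insert the first–order expansion of $\tilde\bfSigma_0$, and use the i.i.d.\ structure of the unselected columns to kill every term except the ``diagonal'' contribution $l'=l_1$, arriving at a positive multiple of what you call $\Delta$. One notational improvement in your version is that you correctly write the derivative of a general isotropic ARD kernel as $(\tilde\bD_l)_{jk}=g_{jk}(x_{jl}-x_{kl})^2$ with $g_{jk}=\tilde K'(q_{jk}^{\br_1})/(2q_{jk}^{\br_1})$; the paper records it as $-\tilde\bfSigma_1\odot\bD_l$, which is literally the squared-exponential case but whose structure (a factor depending only on columns $1{:}d_1$ times a column-$l$ squared difference) is all that the argument needs.

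Where you differ is in establishing $\Delta>0$. The paper vectorises the trace, writes $\Delta$ (up to the positive constant) as the trace of the product of the positive-definite matrix $E_\bX[\bfSigma_1^{-1}\otimes\bfSigma_1^{-1}]$ and the covariance matrix $\cov_\bX[\mathrm{vec}(\tilde\bfSigma_1\odot\bD_{l_1}),\mathrm{vec}(\tilde\bfSigma_1\odot\bD_{l_1})]$, and invokes the fact that $\tr(\text{PD}\cdot\text{nonzero PSD})>0$. You instead condition on columns $1{:}d_1$, compute the explicit moment identity $\gamma_{jk,mi}=(\kappa_4-v^2)(\delta_{jm}+\delta_{ji}+\delta_{km}+\delta_{ki})+4v^2\,\mathbf{1}[\{j,k\}=\{m,i\}]$, and then regroup the resulting index sums into sums of squares and diagonal traces against PSD matrices. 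Your computation is correct; the five patterns and their nonnegativity are exactly as you state, and the strict positivity from the $(m,i)=(k,j)$ case of the indicator block is valid for the kernels in question. Conceptually your argument is the ``by-hand'' verification that the conditional covariance matrix (with respect to column $l_1$) is PSD and nonzero, whereas the paper simply observes that it is a variance matrix. The paper's route is shorter and avoids the index combinatorics; your route is more explicit, handles the general kernel cleanly via $g_{jk}$, and sidesteps a subtlety in the paper's Kronecker step (the full-$\bX$ factorisation $E[\bM\bu\bv^\top]=E[\bM]\,\cov[\bu,\bv]$ requires the conditioning on columns $1{:}d_1$ that you carry out explicitly).
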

The first-order approximation typically holds when $r_{l0} \rightarrow 0, l = d_{1} + 1, \ldots, d_0$ while on the other hand, the expectation of the derivative of $r_l^2$ evaluated at $r_l = 0$ is intuitively positively correlated with $r_{l0}$ (i.e., $\left. E[\frac{\partial \ell}{\partial r_{l}^2}] \right |_{r_{l} = 0} \nearrow r_{l0}$). The two aspects collectively support that the order of the gradient coefficients is indicative for the order of relevance levels of the covariates under general ARD covariance kernels.

\subsection{Quadratic constrained coordinate descent}
\label{subsec:QCCD}

We introduce our quadratic constrained coordinate descent (QCCD) algorithm in the context of minimizing a general objective function $h(\bftheta)$, whose gradient and (positive-definite) negative FIM, denoted by $\bg$ and $\bH$, respectively, can be computed. QCCD is described in Algorithm~\ref{alg:QCCD} with the assumption that parameter constraints are given by their lower bounds $\bb$, but broader constraints on $\bftheta$ can be similarly accommodated.
\begin{algorithm}[h]
    \caption{Quadratic constrained coordinate descent (QCCD)}
    \label{alg:QCCD}
    \KwInput{$h(\cdot), \bftheta_{0}, \bb$}
    \begin{algorithmic}[1]
        \STATE $\bftheta \gets \bftheta_{0}$, $\underline{\color{blue}\alpha \gets 1}$ 
        \WHILE{$\mbox{Not Converged}$}
            \STATE $\underline{\color{blue}\mathcal{S} \gets \mbox{mini-batch subsampling}}$, $\bg \gets \nabla h(\bftheta\ \underline{\color{blue}\mid \mathcal{S}})$, $\bH \gets E[\nabla^2 h(\bftheta\ \underline{\color{blue}\mid \mathcal{S}})]$ \label{stp:QCCD_update1}
            \STATE  $\bftheta_{\mbox{\scriptsize CCD}} \gets \mbox{CCD}(\bftheta, \underline{\color{blue} \alpha}\bg, \bH, \bb)$  --- see Alg.~\ref{alg:CCD} \label{line:CCD_return}
            \STATE $\beta \gets \mbox{argmax}_{\beta \in (0, 1]} \mbox{Armijo}(\beta) > c$, $\bftheta_{\mbox{\scriptsize NEW}} \gets \bftheta + \beta(\bftheta_{\mbox{\scriptsize CCD}} - \bftheta)$
            \color{blue}
            \STATE \underline{\textbf{if} stationarity is detected \label{stp:step_temper_bgn}\textbf{then}}
                \STATE \hspace{\algorithmicindent} \underline{$\alpha \gets \alpha / 2$} 
                \STATE \underline{\textbf{end if}} \label{stp:step_temper_end}
            \color{black}
            \STATE $\bftheta \gets \bftheta_{\mbox{\scriptsize NEW}}$ \label{stp:QCCD_update2}
        \ENDWHILE
        \STATE \Return $\bftheta$
    \end{algorithmic}
\end{algorithm}
$\bftheta_0$ denotes the initial parameter values. Intuitively, QCCD iterates between building a quadratic approximation at the current $\bftheta$,
\begin{equation}
    \hat{h}(\bftheta_{\mbox{\scriptsize NEW}}) = h(\bftheta) + \bg^\top (\bftheta_{\mbox{\scriptsize NEW}} - \bftheta) + \frac{1}{2} (\bftheta_{\mbox{\scriptsize NEW}} - \bftheta)^\top \bH (\bftheta_{\mbox{\scriptsize NEW}} - \bftheta),
    \label{equ:quad_approx}
\end{equation}
and finding the minimum of $\hat{h}(\bftheta_{\mbox{\scriptsize NEW}})$ subject to the constraints on $\bftheta$ using constrained coordinate descent (CCD), described in Algorithm~\ref{alg:CCD}.
\begin{algorithm}[h]
    \caption{Constrained coordinate descent (CCD)}
    \label{alg:CCD}
    \KwInput{$\bftheta, \bg, \bH, \bb$}
    \begin{algorithmic}[1]
        \STATE $\bd \gets \bg - \bH \bftheta$
        \WHILE{$\mbox{Not Converged}$}
            \FOR{$i$ in $1 : \text{length}(\bftheta)$}
                \STATE $\bftheta[i] \gets \max\big((-\bd[i] - \bH[i, -i] \cdot \bftheta[-i]) / \bH[i, i], \bb[i]\big)$
            \ENDFOR
        \ENDWHILE
        \STATE \Return $\bftheta$
    \end{algorithmic}
\end{algorithm}

The CCD algorithm cyclically considers each parameter of $\bftheta$ in a constrained univariate quadratic optimization, where the minimum is analytically available and can be equal to the boundary value. The minimum returned by CCD is subsequently used in a line search subject to the Armijo condition that compares the ratio: 
$$
    \frac{\beta(\bftheta - \bftheta_{\mbox{\scriptsize CCD}}) \cdot \bg}{ h(\bftheta) - h((1 - \beta)\bftheta + \beta \bftheta_{\mbox{\scriptsize CCD}})},
$$ 
with a threshold $c$ to achieve `sufficient decrease' of the objective function (i.e., to avoid unreasonably large steps) and $\bftheta_{\mbox{\scriptsize NEW}}$ is guaranteed to exist subject to mild regularity conditions \citep{kressner2015advanced}.
QCCD is similar to the cyclical coordinate descent algorithm \citep{friedman2010regularization} in terms of building a quadratic approximation and using coordinate descent but has two improvements, namely, the Armijo line search condition and the incorporation of parameter constraints. 

QCCD has the same theoretical convergence rate as Fisher scoring because both find the minimum of the same quadratic approximation, but the former's ability to reach boundary values makes covariate deselection (i.e., $r_l$ being optimized to zero) more straight-forward. 
\begin{figure}
\centering
	\begin{subfigure}{.5\textwidth}
	\centering
 	\includegraphics[width =.98\linewidth]{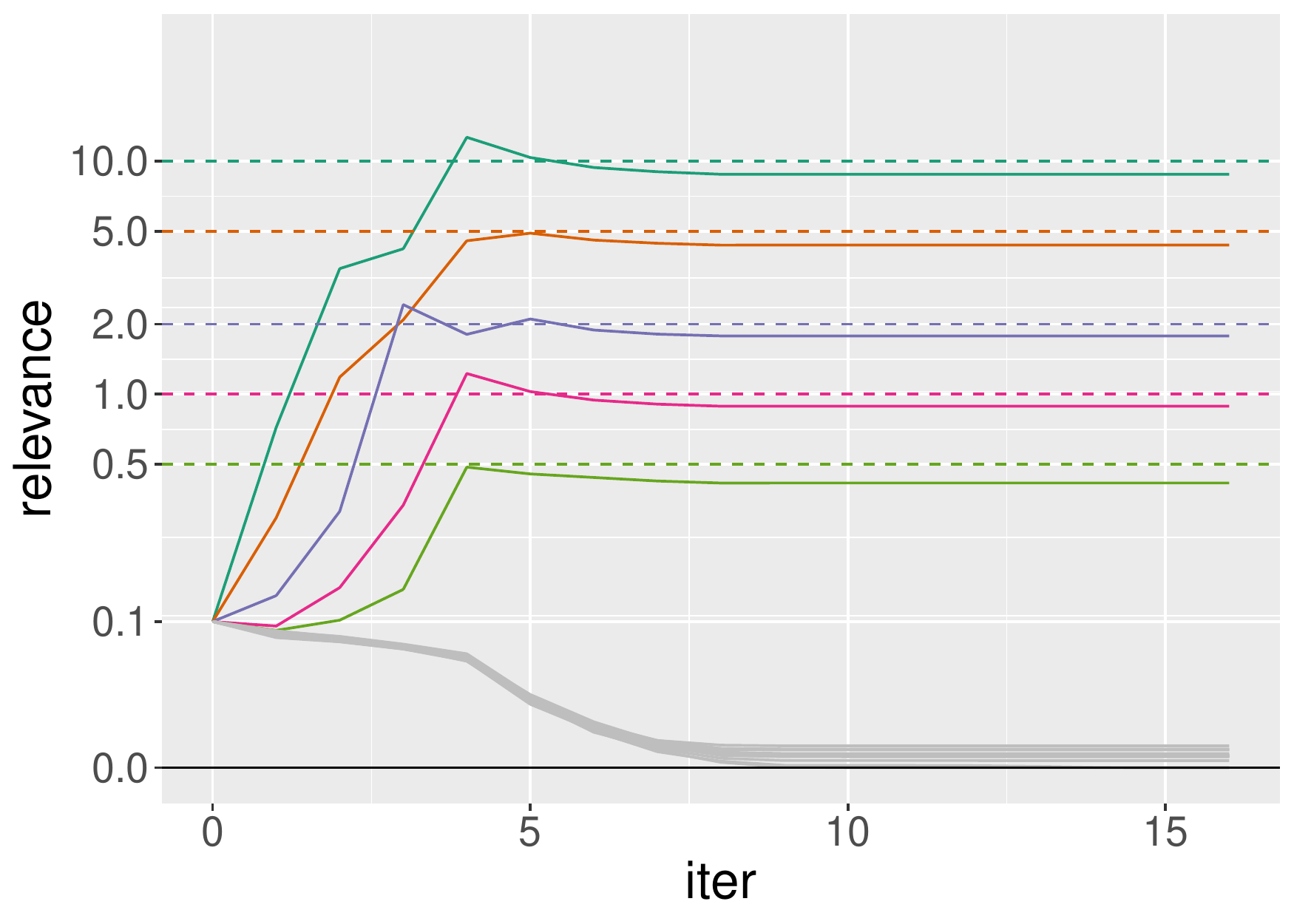}
	\caption{Fisher scoring with $d = 20$ covariates}
	\label{fig:Fisher_scoring_d_20}
	\end{subfigure}%
    \hfill
	\begin{subfigure}{.5\textwidth}
	\centering
 	\includegraphics[width =.98\linewidth]{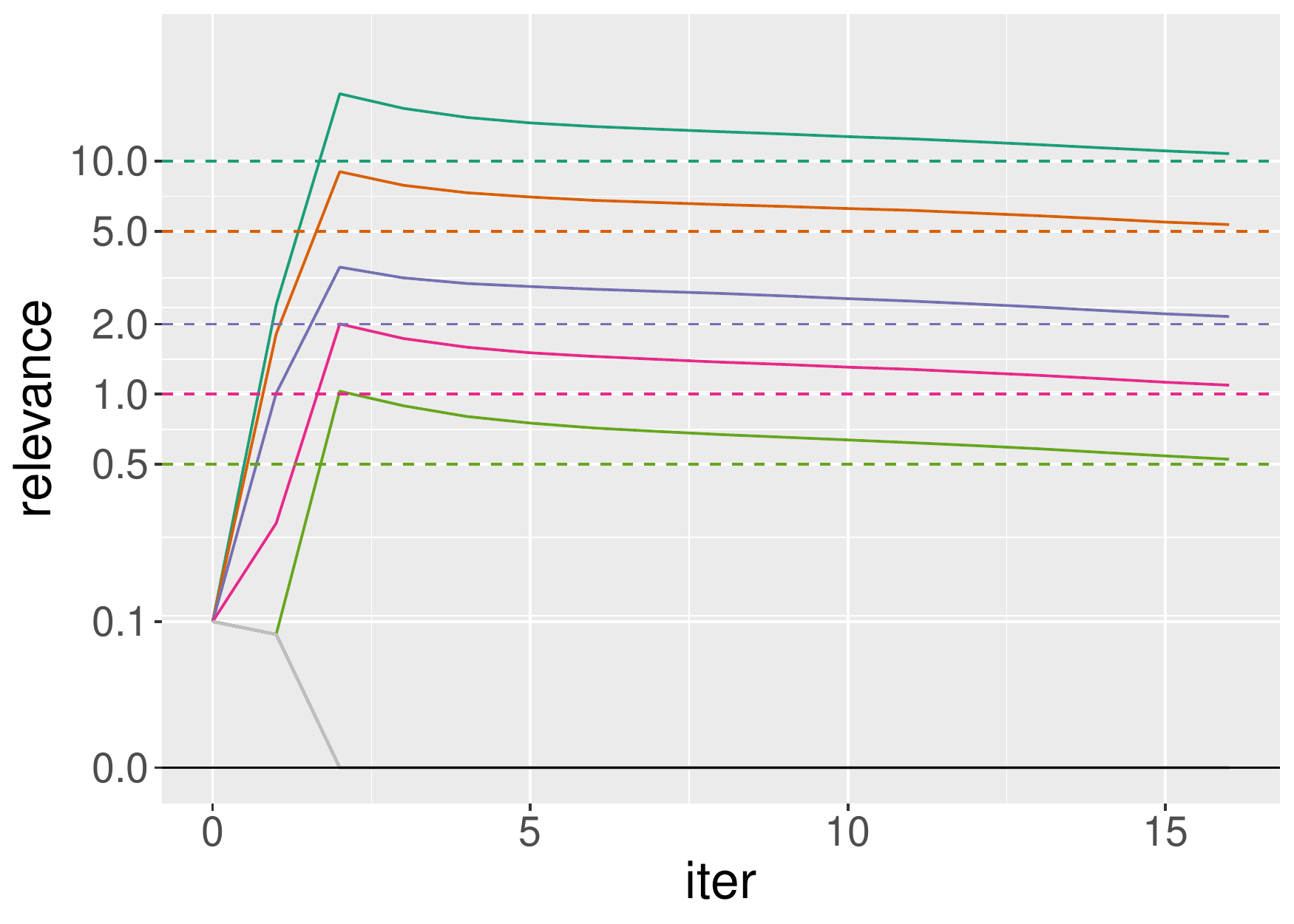}
	\caption{QCCD with $d = 20$ covariates}
	\label{fig:QCCD_small_d}	
	\end{subfigure}%
	
	\smallskip
	
	\begin{subfigure}{.5\textwidth}
	\centering
 	\includegraphics[width =.98\linewidth]{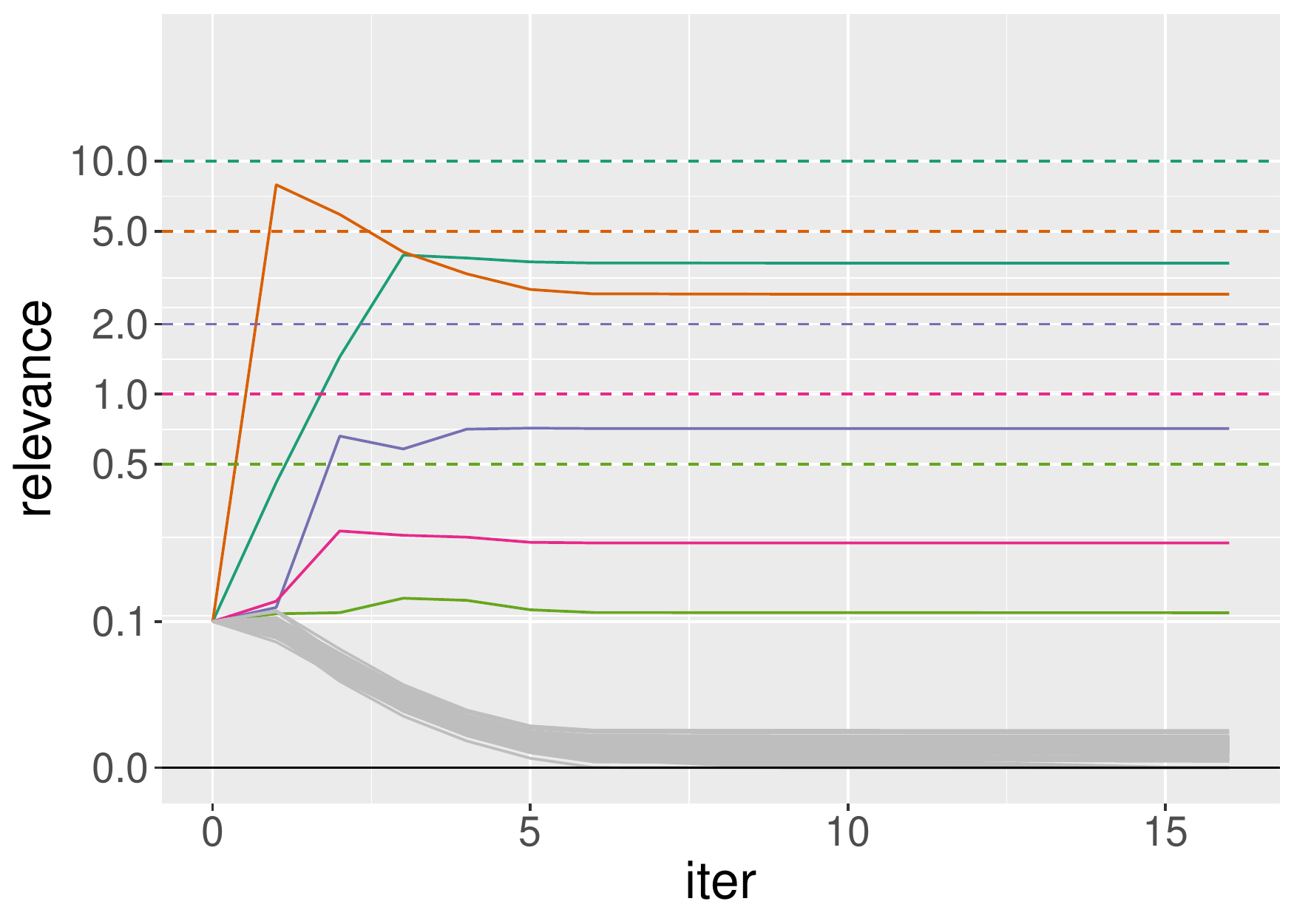}
	\caption{Fisher scoring with $d = 100$ covariates}
	\label{fig:Fisher_scoring_d_100}
	\end{subfigure}%
    \hfill
	\begin{subfigure}{.5\textwidth}
	\centering
 	\includegraphics[width =.98\linewidth]{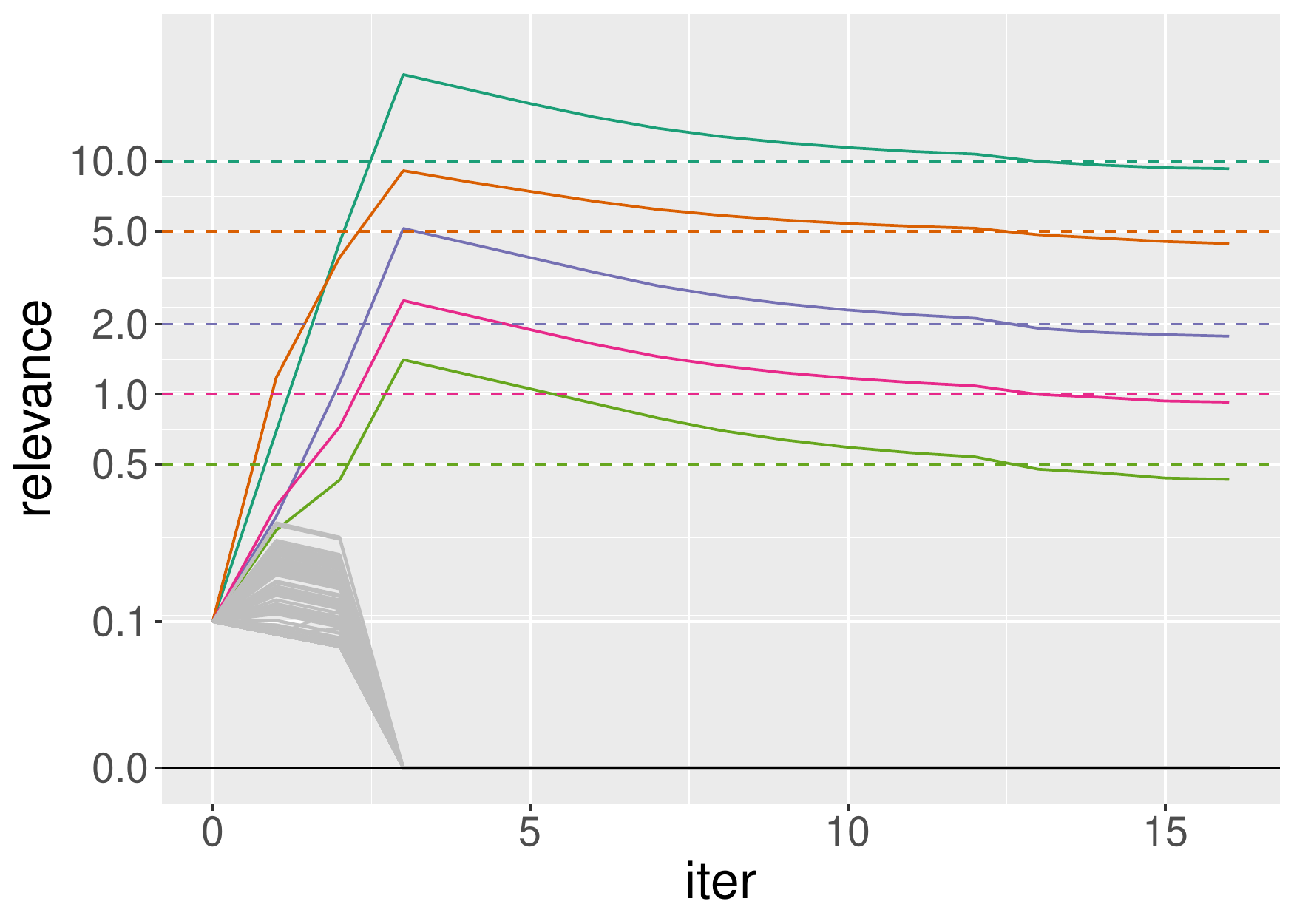}
	\caption{QCCD with $d = 100$ covariates}
	\label{fig:QCCD_d_100}	
	\end{subfigure}
  \caption{Convergence of Fisher scoring and QCCD algorithms. $y$-axis is the relevance on the pseudo-log scale. The true covariates, with relevance $r_l > 0$, are color-coded with their true values marked by the colored dashed lines. The fake covariates, with relevance $r_l = 0$, are colored in grey.}
  \label{fig:cmp_Fisher_QCCD}
\end{figure}
Figure~\ref{fig:cmp_Fisher_QCCD} compares the performance of Fisher scoring and QCCD when $d$ is relatively small (i.e., $d \le 100$). Covariates were independently generated at $n = 10^4$ locations and a bridge penalty with $\lambda = 32$ was used in the objective function, which will be further discussed in Section~\ref{subsec:adapt_bridge}. The relevance vector $\tilde \br$ used for MM and NN was updated together with the updates of $\bftheta$. The parameter estimates from QCCD were closer to the truth than those from Fisher scoring. Further, QCCD was able to deselect all fake covariates, achieving $r_l = 0$ for all $l > 5$, while Fisher scoring was unable to deselect any covariate without setting a truncation level. The ability to automatically deselect covariates becomes increasingly important when addressing GP regressions with larger numbers of covariates.


\subsection{Bridge penalty and its extension}
\label{subsec:adapt_bridge}

The desired properties of the penalty function for GP regression can be different from those for linear regression. \cite{yi2011penalized} compared several penalties in GP regression that include Lasso, SCAD, and bridge penalties, concluding that the bridge penalty has overall the best performance. This agrees with our analysis that unlike linear regression, GP regression automatically avoids improperly large magnitudes of $\br$. Therefore, a penalty function that becomes flat more quickly as the parameter magnitude increases is more suitable for GP regression, leading to higher model sparsity and smaller estimation bias.

However, one issue with the bridge penalty is that its derivative is infinite at zero, and so it is impossible to escape this local optimum for any parameter that reaches zero during optimization. This is especially problematic for the mini-batching procedure to be introduced later, where zero can be reached erroneously due to a ``bad'' batch. Hence, we adopt an iterative adaptive bridge penalty that amounts to a combination of the classic bridge penalty and the iterative adaptive technique in \cite{ziel2016iteratively} and \cite{sun2010genomewide}:
\begin{align}
    w_{\lambda}(\bftheta) = \lambda \sum_{l = 1}^d(c_{\iota, l}^{\kappa} + r_l^2)^{\gamma},\label{equ:adbridge}
\end{align}
where $\iota$ is the iteration number during optimization and $c_{\iota, l}^{\kappa}$ is the sum of the parameter $r_l^2$ over the previous $\kappa$ iterations. In addition to allowing parameters to escape zero values, this adaptive bridge penalty also has the advantage that bigger $r_l$ tends to have bigger $c_{\iota, l}^{\kappa}$, hence weaker penalty and smaller bias. Notice that $\kappa = 0$ corresponds to the classic bridge penalty used in Sections~\ref{subsec:vgpr} and \ref{subsec:QCCD} and that when computing $\hat{\bH}_{\bftheta}^{\tilde{\br}}$, we ignore the second-order information of the penalty function to guarantee the non-negative definiteness of the FIM, which is equivalent to applying a linear approximation to $w_{\lambda}(x)$.

In this paper, we fix $\gamma$ at $0.25$ and select $\kappa$ based on how likely the relevance parameters of the true covariates are to reach zero during optimization; see Section~\ref{subsec:sensitivity} for a more detailed analysis.


\subsection{Mini-batching for Vecchia approximation}
\label{subsec:minibatch}

Although the Vecchia approximation has reduced the complexity of model estimation to be linear in $n$, we aim to further improve the computation efficiency of VGPR through mini-batch subsampling that has created considerable success in stochastic gradient descent. In this section, we propose a subsampling method specific to the Vecchia approximation that reduces the complexity to be linear in the batch size $\check{n}$ and leads to unbiased estimating equations. Specifically, we propose to sample the summands of the scaled Vecchia log-likelihood $\hat{\ell}^{\tilde{\br}}(\bftheta)$:
\begin{align}
    \label{equ:mini-batch subsampling}
    \check{\ell}^{\tilde{\br}}(\bftheta) = \sum_{i \in \mathcal{S} \subset \{1, \ldots, n\}} \log p_{\bftheta}(y_i | \by_{c(i)}),
\end{align}
with equal probability and without replacement. Here, $\mathcal{S}$ is the mini-batch index set of size $\check{n}$ and we use $\check{\ell}^{\tilde{\br}}(\bftheta)$ and $\check{h}_{\lambda}^{\tilde{\br}}(\bftheta)$ to denote the counterparts of $\hat{\ell}^{\tilde{\br}}(\bftheta)$ and $h_{\lambda}^{\tilde{\br}}(\bftheta)$ under mini-batch subsampling. 

This mini-batch subsampling can be applied to covariate selection and parameter estimation through slight modifications to Algorithms~\ref{alg:forward_backward} and \ref{alg:QCCD}, respectively, as indicated by their blue underscored components. To avoid oscillation around the optimum, which is a common issue for mini-batch subsampling, we apply the technique introduced in \cite{chee2018convergence} to our QCCD algorithm for the detection of stationarity as indicated in Lines~\ref{stp:step_temper_bgn} to \ref{stp:step_temper_end} of Algorithm~\ref{alg:QCCD}. Specifically, the detection depends on the running sum of the inner product of successive stochastic gradients, and the learning rate $\alpha$ is halved upon detection of convergence; refer to Algorithm~1 of \cite{chee2018convergence} for more details.

One advantage of this mini-batch subsampling based on the Vecchia approximation is having unbiased gradient estimators:
\begin{align}
    \label{equ:unbiased_grad_est}
    E[\nabla \check{\ell}^{\tilde{\br}}(\bftheta)] = \nabla \left( E[\sum_{i = 1}^{n}  \log p_{\bftheta}(y_i | \by_{c(i)}) \delta_{i \in \mathcal{S}}] \right) = \nabla \left( \frac{\check{n}}{n} \sum_{i = 1}^{n}  \log p_{\bftheta}(y_i | \by_{c(i)}) \right) = \frac{\check{n}}{n} \nabla \hat{\ell}^{\tilde{\br}}(\bftheta),
\end{align}
which is generally not the case for other mini-batch subsampling methods used in GP regression such as \citet{chen2020stochastic}. In \eqref{equ:unbiased_grad_est}, the expectation is taken with respect to $\mathcal{S} \subset \{1, \ldots, n\}$. The unbiased property of the mini-batch subsampling is relative to $\nabla \hat{\ell}^{\tilde{\br}}(\bftheta)$ as opposed to $\nabla \ell(\bftheta)$; however, optimizing the Vecchia log-likelihood generally leads to the correct values for $\bftheta$:
\begin{proposition}
    Assuming that $\by$ is a realization of a Gaussian process with zero mean and a covariance structure parameterized by $\bftheta = \bftheta_0$ and that $\hat{\ell}^{\tilde{\br}}(\bftheta)$ is its Vecchia-type log-likelihood, the true parameter value $\bftheta = \bftheta_0$ maximizes the expectation of $\hat{\ell}^{\tilde{\br}}(\bftheta)$ with respect to $\by$: $\bftheta_0 \in \mbox{argmax}_{\bftheta}E[\hat{\ell}^{\tilde{\br}}(\bftheta)]$.
    \label{thm:Vecchia_unbiased}
\end{proposition}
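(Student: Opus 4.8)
The plan is to exploit that, for a fixed $\tilde\br$, the scaled Vecchia log-likelihood is a sum of \emph{bona fide} conditional Gaussian log-densities,
\[
  \hat{\ell}^{\tilde{\br}}(\bftheta) \;=\; \sum_{i=1}^{n} \log \dens_{\bftheta}\big(y_i \mid \by_{c(i)}\big),
\]
in which the conditioning sets $c(i)$ (obtained from the MM ordering and NN conditioning based on $q^{\tilde\br}$) depend only on $\tilde\br$ and are therefore held fixed as $\bftheta$ varies. Since $\by \sim \normal_n(\bfzero, \bfSigma_{\bftheta_0})$, the true conditional law of $y_i$ given $\by_{c(i)}$ is precisely the Gaussian conditional extracted from $\bfSigma_{\bftheta_0}$, i.e.\ it coincides with $\dens_{\bftheta_0}(y_i \mid \by_{c(i)})$. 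Each summand can thus be compared, in expectation, with its ``true'' counterpart via Gibbs' inequality (equivalently, nonnegativity of the Kullback--Leibler divergence between two Gaussians), and summing these comparisons over $i$ yields the result. This is the familiar argument that a composite likelihood assembled from correctly specified conditional densities has the true parameter as a maximizer of its expectation.

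Concretely, I would proceed as follows. (i) Treat the covariates $\{\bx_i\}_{i=1}^n$ (hence the sets $c(i)$ and the matrix $\bfSigma_{\bftheta_0}$) as fixed, exactly as in the statement; the argument also survives any further averaging over $\{\bx_i\}$. (ii) Write $E\big[\hat{\ell}^{\tilde{\br}}(\bftheta)\big] = \sum_{i=1}^{n} E\big[\log \dens_{\bftheta}(y_i \mid \by_{c(i)})\big]$ and, for each $i$, apply the tower rule conditioning on $\by_{c(i)}$, with the inner expectation taken under the true conditional $\dens_{\bftheta_0}(\cdot \mid \by_{c(i)})$. (iii) For each $i$ and each realization of $\by_{c(i)}$,
\[
  E\big[\log \dens_{\bftheta_0}(y_i \mid \by_{c(i)}) \mid \by_{c(i)}\big] - E\big[\log \dens_{\bftheta}(y_i \mid \by_{c(i)}) \mid \by_{c(i)}\big] \;=\; \mathrm{KL}\!\big(\dens_{\bftheta_0}(\cdot \mid \by_{c(i)})\,\big\|\,\dens_{\bftheta}(\cdot \mid \by_{c(i)})\big) \;\ge\; 0,
\]
which is finite because both conditionals are nondegenerate univariate Gaussians. (iv) Average over $\by_{c(i)}$ and sum over $i$ to obtain $E[\hat{\ell}^{\tilde{\br}}(\bftheta_0)] \ge E[\hat{\ell}^{\tilde{\br}}(\bftheta)]$ for every $\bftheta$, i.e.\ $\bftheta_0 \in \mbox{argmax}_{\bftheta} E[\hat{\ell}^{\tilde{\br}}(\bftheta)]$. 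Terms with $c(i) = \emptyset$ are handled identically, with the conditional density replaced by the marginal $\dens_{\bftheta}(y_i)$.

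The only step that requires genuine care is the identification in (iii) of the reference measure of the inner expectation with $\dens_{\bftheta_0}(\cdot \mid \by_{c(i)})$: this rests on the consistency of the Gaussian family under marginalization and conditioning, namely that the conditional of $y_i$ given $\by_{c(i)}$ derived from the true joint $\normal_n(\bfzero, \bfSigma_{\bftheta_0})$ is exactly the Gaussian conditional that the Vecchia factorization inserts as $\dens_{\bftheta_0}(y_i \mid \by_{c(i)})$. A secondary bookkeeping point is that the conditioning sets must be treated as fixed (they are determined by the external quantity $\tilde\br$, not by $\bftheta$), so that the per-$i$ decomposition of $\hat{\ell}^{\tilde{\br}}$ does not itself vary with the parameter being optimized; otherwise the term-by-term Gibbs argument would break down. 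Integrability poses no difficulty since every density in sight is Gaussian with finite moments, and, as the statement only asserts that $\bftheta_0$ is \emph{a} maximizer, no identifiability or uniqueness argument is needed.
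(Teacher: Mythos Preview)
Your argument is correct and essentially identical to the paper's own proof: both decompose $\hat{\ell}^{\tilde{\br}}(\bftheta)$ into $\sum_i \log \dens_{\bftheta}(y_i \mid \by_{c(i)})$, apply the tower rule conditioning on $\by_{c(i)}$, and invoke nonnegativity of the KL divergence term by term. Your write-up is in fact somewhat more explicit than the paper's about why the conditioning sets (fixed by $\tilde\br$) do not vary with $\bftheta$ and why the true conditional of $y_i$ given $\by_{c(i)}$ coincides with $\dens_{\bftheta_0}(\cdot \mid \by_{c(i)})$.
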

\begin{corollary}
    \label{cor:unbias_est_eq}
    $\nabla \hat{\ell}^{\tilde{\br}}(\bftheta) = \bfzero$ are unbiased estimating equations assuming that $\hat{\ell}^{\tilde{\br}}(\bftheta)$ is first-order differentiable.
\end{corollary}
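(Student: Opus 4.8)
The plan is to read Corollary~\ref{cor:unbias_est_eq} as the first-order consequence of Proposition~\ref{thm:Vecchia_unbiased}: calling ``$\nabla \hat{\ell}^{\tilde{\br}}(\bftheta) = \bfzero$'' an \emph{unbiased} estimating equation means exactly that $E_{\bftheta_0}\big[\nabla \hat{\ell}^{\tilde{\br}}(\bftheta_0)\big] = \bfzero$, so the task is to differentiate the expected Vecchia log-likelihood at its maximizer $\bftheta_0$ and push the gradient through the expectation. Accordingly I would (i) establish Proposition~\ref{thm:Vecchia_unbiased}, and then (ii) deduce the vanishing expected score, which is the corollary.

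For (i), the key is that the Vecchia log-likelihood is, by construction, a sum of \emph{exact} conditional log-densities under the working model $\bftheta$,
\[
\hat{\ell}^{\tilde{\br}}(\bftheta) = \sum_{i=1}^{n} \log \dens_{\bftheta}\!\left(y_i \mid \by_{c(i)}\right),
\]
with the ordering and the conditioning sets $\{c(i)\}$ fixed, as they depend only on $\tilde{\br}$ and on $\{\bx_i\}$, which we hold fixed. Taking the expectation under the true law $\dens_{\bftheta_0}$ and using linearity, it suffices to show that each summand is maximized over $\bftheta$ at $\bftheta = \bftheta_0$. Fixing $i$ and conditioning on $\by_{c(i)}$ via the tower property, the inner expectation $E_{\,y_i \mid \by_{c(i)}}\!\big[\log \dens_{\bftheta}(y_i \mid \by_{c(i)})\big]$ (all expectations under $\bftheta_0$) equals the negative cross-entropy of $\dens_{\bftheta}(\cdot \mid \by_{c(i)})$ relative to the true conditional $\dens_{\bftheta_0}(\cdot \mid \by_{c(i)})$, i.e.
\[
-H\!\big(\dens_{\bftheta_0}(\cdot \mid \by_{c(i)})\big) - D_{\mathrm{KL}}\!\big(\dens_{\bftheta_0}(\cdot \mid \by_{c(i)}) \,\big\|\, \dens_{\bftheta}(\cdot \mid \by_{c(i)})\big),
\]
which by non-negativity of the Kullback--Leibler divergence (Gibbs' inequality) is at most $-H(\dens_{\bftheta_0}(\cdot \mid \by_{c(i)}))$, with equality at $\bftheta = \bftheta_0$. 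All conditionals involved are Gaussian with positive-definite conditional variances, so the entropies and cross-entropies are finite and the manipulations are justified. Taking the outer expectation over $\by_{c(i)}$ preserves the inequality, and summing over $i$ gives $E[\hat{\ell}^{\tilde{\br}}(\bftheta)] \le E[\hat{\ell}^{\tilde{\br}}(\bftheta_0)]$ for every $\bftheta$, i.e.\ $\bftheta_0 \in \mbox{argmax}_{\bftheta} E[\hat{\ell}^{\tilde{\br}}(\bftheta)]$. (If the $\{\bx_i\}$ are random, the same pointwise inequality holds conditionally and then integrates.)

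For (ii), since $\hat{\ell}^{\tilde{\br}}$ is first-order differentiable in $\bftheta$ and --- being a finite sum of log Gaussian densities --- its gradient has Gaussian-type tails dominated by an integrable envelope uniformly on a neighborhood of $\bftheta_0$, dominated convergence permits interchanging gradient and expectation: $\nabla E[\hat{\ell}^{\tilde{\br}}(\bftheta)] = E[\nabla \hat{\ell}^{\tilde{\br}}(\bftheta)]$. Proposition~\ref{thm:Vecchia_unbiased} makes $\bftheta_0$ an interior maximizer of this differentiable function, so the first-order optimality condition yields $\nabla E[\hat{\ell}^{\tilde{\br}}(\bftheta)]\big|_{\bftheta = \bftheta_0} = \bfzero$, hence $E_{\bftheta_0}[\nabla \hat{\ell}^{\tilde{\br}}(\bftheta_0)] = \bfzero$, which is the claimed unbiasedness.

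The parts deserving care are: (a) recognizing that one cannot simply invoke the global KL argument, because $\hat{\dens}_{\bftheta_0}$ is \emph{not} the true density of $\by$, so the comparison must be made factor-by-factor using the exactness of each conditional $\dens_{\bftheta}(y_i \mid \by_{c(i)})$ under model $\bftheta$ --- after which the Gibbs step is routine; (b) the regularity used to swap $\nabla$ and $E$ in step (ii); and (c) the tacit interiority of $\bftheta_0$, since if some true relevances are exactly zero then $\bftheta_0$ lies on the boundary, the optimality condition becomes a KKT condition, and the ``first-order differentiable'' hypothesis is what should be read as excluding that case (or restricting the identity to the active coordinates). Item (a) is the conceptual crux; (b) and (c) are the technical ones.
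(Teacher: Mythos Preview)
Your proposal is correct and matches the paper's approach: the paper proves Proposition~\ref{thm:Vecchia_unbiased} by the same term-by-term KL/Gibbs argument on the exact conditionals and then simply declares the corollary ``straight-forward,'' which is precisely your step~(ii) of differentiating $E[\hat{\ell}^{\tilde{\br}}(\bftheta)]$ at its interior maximizer and exchanging $\nabla$ with $E$. If anything, your discussion of the regularity for the swap and the boundary/KKT caveat goes beyond what the paper spells out.
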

\noindent The proof of Proposition~\ref{thm:Vecchia_unbiased} is in the Appendix, based on which the proof of Corollary~\ref{cor:unbias_est_eq} is straight-forward. \citet{Stein2004} showed that the Vecchia approximation of the restricted log-likelihood leads to unbiased estimating equations; here, we provide a stronger result for the Vecchia approximation of the log-likelihood. 

We numerically compared our subsampling strategy to two other strategies in terms of the bias and the variance of their gradient estimators. Comparison method I selects $\check{n}$ responses from $\{(y_i)\}_{i = 1}^n$ with equal probability and without replacement, and then the scaled Vecchia approximation for the GP defined over $\{(y_i)\}_{i \in \mathcal{S}}$ is used for computing the SR-gradient. Comparison method II is similar to what we proposed in \eqref{equ:mini-batch subsampling}, sampling the summands of $\hat{\ell}^{\tilde{\br}}(\bftheta)$ but with probabilities proportional to $i^{-1/d}$ and without replacement, which compensates the $\order(i^{-1/d})$  decrease of $\min_{j \in c(i)} \| \bx_i - \bx_j \|$ \citep[e.g.,][]{Katzfuss2021} into consideration and balances the presences of short-range and long-range distances.
\begin{figure}
    \centering
    \begin{subfigure}{.5\textwidth}
        \centering
         \includegraphics[width =.98\linewidth]{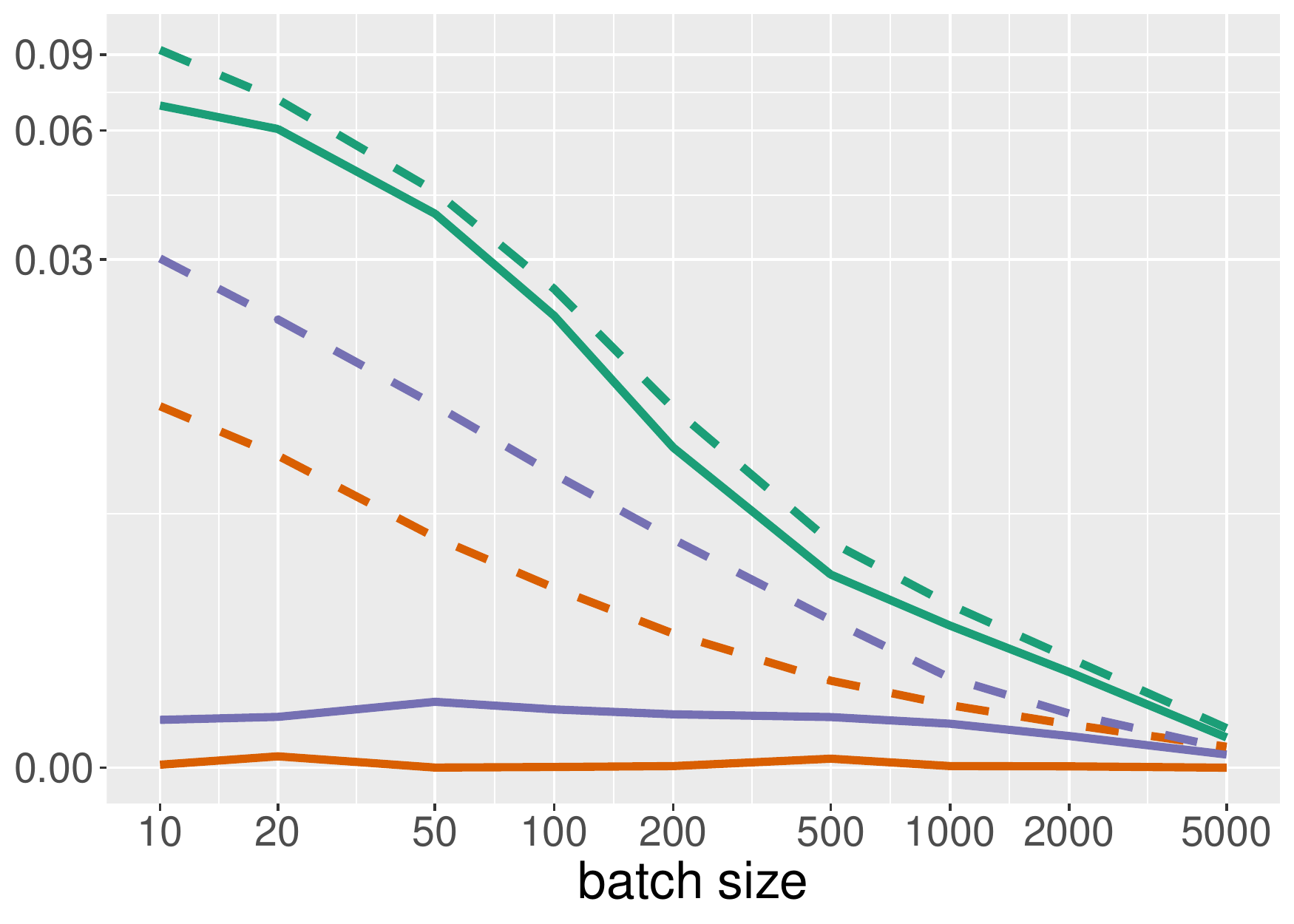}
        \caption{Absolute bias and RMSE of $\frac{\partial \hat{\ell}^{\tilde{\br}}(\bftheta)}{\partial r_1^2}$}
        \label{fig:grad_bias_SR1}
        \end{subfigure}%
        \hfill
    \begin{subfigure}{.5\textwidth}
        \centering
         \includegraphics[width =.98\linewidth]{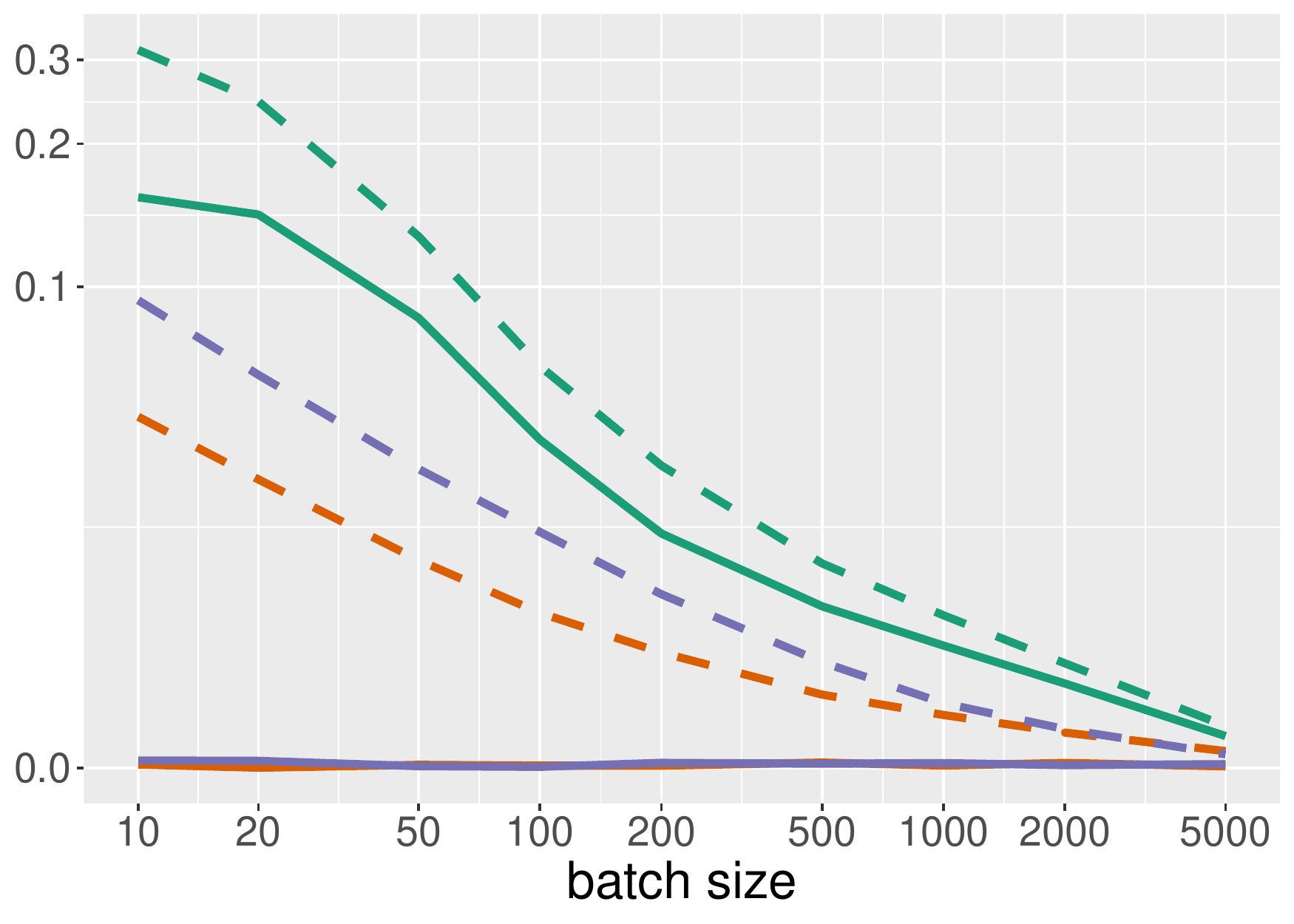}
        \caption{Absolute bias and RMSE of $\frac{\partial \hat{\ell}^{\tilde{\br}}(\bftheta)}{\partial r_2^2}$}
        \label{fig:grad_bias_SR2}	
    \end{subfigure}%
    \caption{Absolute bias (solid) and RMSE (dashed) of the SR-gradient estimators of the three mini-batch subsampling methods. Red, green, and blue represent our proposed mini-batch subsampling of \eqref{equ:mini-batch subsampling}, comparison method I and comparison method II, respectively.}
    \label{fig:cmp_sampling}
\end{figure}
Figure~\ref{fig:cmp_sampling} compares the three mini-batch subsampling methods using a GP defined over $n = 10^4$ locations in $\mathbb{R}^2$ whose true parameters $\bftheta$ are $(\sigma^2, r_1, r_2, \tau^2) = (1, 1, 0.5, 0.05^2)$ and assumed known. The numbers of mini-batches averaged over are $5{,}000$ if the batch size is smaller than $500$ and $500$ otherwise. Our proposed sampling method had the smallest empirical absolute bias and RMSE, highlighting its advantage as the SR-gradient estimator. Comparison method I, as a most intuitive mini-batch subsampling method, leads to a poor gradient estimator because the responses not selected in $\mathcal{S}$ are ignored, losing significant amount of information compared with the other two methods. While it is desirable to reduce the dependence within each mini-batch, the smaller bias and variance of our proposed method over Comparison method II suggests finding sampling probabilities that lead to smaller variance is non-trivial and may lead to nonzero bias.

Figure~\ref{fig:reg_path_minibat} shows regularization paths in the same setting as in Figure~\ref{fig:reg_path}, except using mini-batch subsampling with a batch size of $128$ and increasing $\kappa$ in the penalty function from zero to two. The choice of batch size poses a trade-off between computation efficiency and variability of the gradient estimator, which may depend on the training dataset and computation capacity; in general, larger batches improve the convergence stability but increase the computational cost. A discussion on the choice of $\kappa$ is provided in Section~\ref{subsec:sensitivity}.
\begin{figure}
\centering
	\begin{subfigure}{.5\textwidth}
	\centering
 	\includegraphics[width =.98\linewidth]{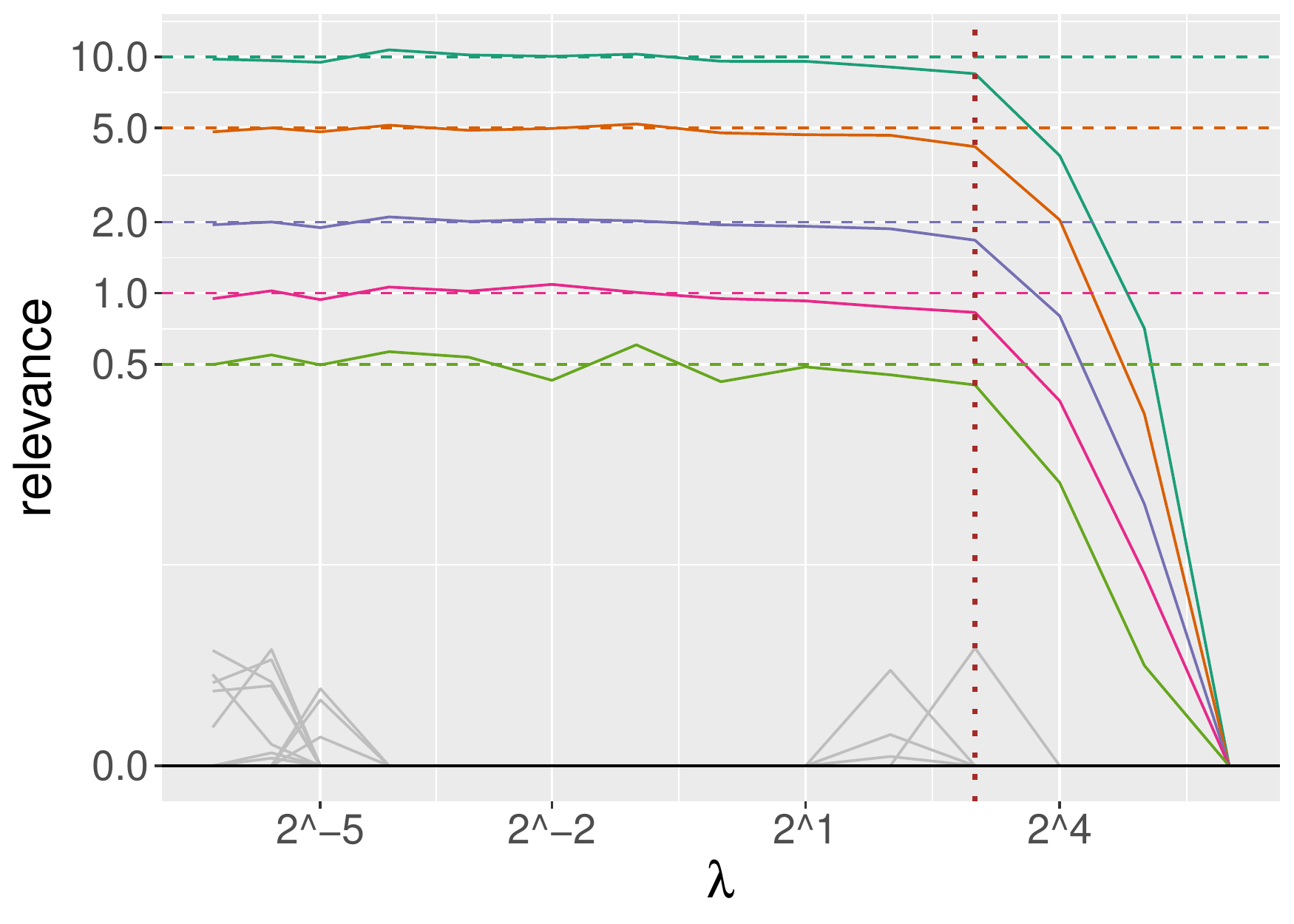}
	\caption{Independent covariates}
	\label{fig:QCCD_bridge_minibat_5000_1000}
	\end{subfigure}%
    \hfill
	\begin{subfigure}{.5\textwidth}
	\centering
 	\includegraphics[width =.98\linewidth]{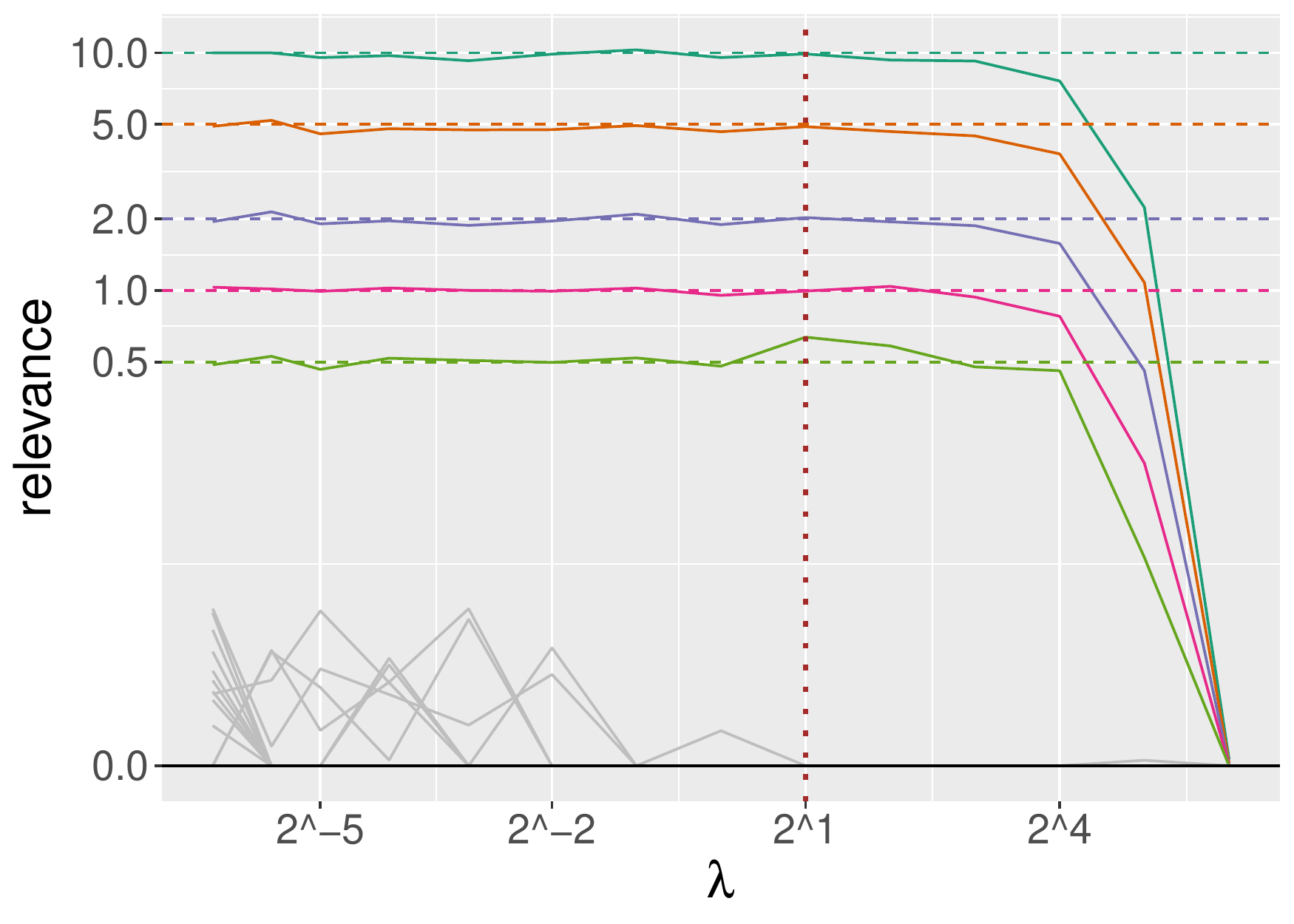}
	\caption{Dependent covariates}
	\label{fig:QCCD_bridge_minibat_dep_5000_1000}
	\end{subfigure}%
	
    \caption{Regularization path computed by VGPR with mini-batch subsampling, using independent or dependent covariates. The relevance parameters of the true covariates are color-coded, and their true values are marked by horizontal colored dashed lines. The fake covariates, whose true relevance parameters are zero, are colored in grey. The vertical red dotted lines mark the optimal model indicated by the stopping condition.}
    \label{fig:reg_path_minibat}
\end{figure}
The estimated models indicated by the red dashed lines were almost the same as those in Figure~\ref{fig:reg_path} while the computation time was reduced by more than $90\%$. When considering the overall sparsity patterns in Figures~\ref{fig:reg_path} and \ref{fig:reg_path_minibat}, the combination of mini-batch subsampling and the iterative adaptive bridge penalty leads to a stronger capacity of deselecting fake covariates, because different mini-batches tend to select the same set of true covariates but different sets of fake covariates, inducing bigger variance on the gradient estimators of the fake covariates while $c_{\iota, l}^{\kappa}$ in \eqref{equ:adbridge} is smaller for the fake covariates, indicating stronger penalization. 

\subsection{Complexity analysis}
\label{subsec:complexity}

In this section, we analyze the computation gains from using the Vecchia approximation, the VGPR algorithm introduced in Algorithm~\ref{alg:VGPR}, and the mini-batching technique from Section~\ref{subsec:minibatch}. The Vecchia approximation reduces the complexity of computing the log-likelihood and its gradient from $\order(n^3)$ and $\order(n^3d)$ to $\order(nm^3)$ and $\order(ndm^3)$, respectively; refer to \cite{Guinness2019} for the gradient computation under the Vecchia approximation. Based on the intermediate results from the gradient computation, the FIM of the Vecchia log-likelihood needs only $\order(nd^2m^2)$ additional operations. The VGPR algorithm reduces the number of covariates involved in optimization, reducing the $d$ in aforementioned complexities to $|\zeta|$, with $|\zeta| \approx d_0 \ll d$. Finally, the mini-batching technique further reduces the $\order(n)$ complexities to $\order(\check{n})$, leading to $\order(\check{n}m^3)$, $\order(\check{n}d_0m^3)$, and $\order(\check{n}d_0^2m^2)$ complexities for computing the objective function, its gradient, and FIM, respectively. The SR-gradient of all $d$ covariates is needed in Algorithm~\ref{alg:forward_backward} to select $k$ new covariates at the cost of $\order(\check{n}dm^3)$, but its computation frequency is negligible compared with the number of gradient computations needed by QCCD and it is typically a minor component in the overall computation cost. 

For very large $n$, it is also possible to reduce the cost of MM and NN by replacing them by random ordering and the index-based-on-inverted-file (IVF) method \citep[implemented in the Faiss library of][]{JDH17}, respectively. The cost of NN could be further reduced by computing the $m$ nearest neighbors on-the-fly only for the responses in the mini-batch $\mathcal{S}$.

GP prediction also benefits significantly from the techniques introduced in Section~\ref{sec:GPR_var_select}. The nearest neighbors of each test point can be computed much faster in $d_0$ dimensions than in $d$ dimensions, based on which posterior inference at each test point can be achieved in $\order(m^3 + d_0 m^2)$ time using the scaled Vecchia approximation, assuming that the number of selected covariates is $\order(d_0)$.

\subsection{Sensitivity to tuning parameters}
\label{subsec:sensitivity}

The VGPR algorithm includes several tuning parameters that are considered fixed when running the algorithm. We provide some guidance here.
Larger values of the conditioning set size $m$ lead to more accurate approximation of the exact GP and we choose $m = 100$ based on \citet{Katzfuss2020} and for computational feasibility. 
The Armijo constant $c$ in Algorithm~\ref{alg:QCCD} heuristically prevents `overly large' steps; we choose $c = 10^{-4}$ as recommended in Chapter~3 of \cite{wright1999numerical} and used in the \texttt{GpGp} R package \citet{Guinness2019}. 
The learning rate parameter $\alpha$ in Algorithm~\ref{alg:CCD} reduces oscillation around an optimum, hence promoting convergence; we use the same initialization (i.e., $\alpha = 1$) and scaling (i.e., by $1/2$) for $\alpha$ as in \citet{chee2018convergence}, where this oscillation-reduction technique was proposed.  

The number of new covariates selected each iteration ($k$) and the penalty parameters $\kappa$ and $\gamma$ in \eqref{equ:adbridge} are unique to our proposed VGPR algorithm and iterative adaptive bridge penalty, and hence they have not been discussed in the existing literature. Here, we provide some recommendations and a sensitivity analysis on them; see Appendix~\ref{app:sensitivity} for more details. Larger $k$ leads to higher optimization efficiency but also the risk of local optima; we recommend a value between $3$ and $5$. Bigger $\kappa$ corresponds to weaker numerical singularity at $r_l = 0$. We recommend $\kappa > 0$ when mini-batch subsampling is applied and a large $\kappa$ (e.g., 10 or 15) when the GP with ARD kernels is likely a misspecified model. Smaller $\gamma$ causes higher difference in the penalty derivatives at small and large $r_l$. \cite{yi2011penalized} used $\gamma = 0.01$, whereas we recommend a choice between $0.1$ and $0.25$ for a smoother objective function. Based on Appendix~\ref{app:sensitivity}, we conclude that the VGPR algorithm is overall not sensitive to the choice of $k$, $\kappa$, and $\gamma$.

\section{Simulation Study}
\label{sec:sim_study}

\subsection{Simulation setup}

We compared the VGPR algorithm proposed in Algorithm~\ref{alg:VGPR} with methods commonly used in machine learning for variable selection or GP model estimation, namely Lasso regression \citep{Tibshirani1996}, the sparse additive model \citep[SAM;][]{ravikumar2009sparse}, regression trees \citep[Tree;][]{loh2011classification}, penalized GP regression \citep[PGPR][]{yi2011penalized}, kernel interpolation for scalable structured Gaussian processes \citep[KISS;][]{wilson2015kernel}, Vecchia Fisher scoring \citep[Fisher;][]{Guinness2019}, and GPs with forward selection (FWD). We used the scaled Vecchia approximation in Fisher and FWD but the exact GP log-likelihood in PGPR to respect the original algorithm of \citet{yi2011penalized}. For `Tree' and `Lasso', the default setups from the `glmnet' R package \citep{friedman2010regularization} and the `sklearn' Python module \citep{scikit-learn} were used, respectively. KISS generally has high scalability in $n$ but low scalability in $d$. Based on the GPyTorch \cite{gardner2018gpytorch} implementation, when $d > 5$, the kernel function needed to assume an additive structure to be computationally feasible, for which ARD kernels are yet available, hence we only consider KISS as a state-of-the-art competitor for prediction at unknown locations. We generated $d$ independent or dependent covariates at $(n + 5000)$ locations and simulated $(n + 5000)$ GP responses. $5{,}000$ responses were set aside as the testing dataset used to evaluate the four methods' performances. We considered $n \in \{500, 5{,}000, 25{,}000\}$, $d \in \{ 100, 1{,}000\}$, and independent versus dependent covariates, for a total of $12$ simulation scenarios. Methods were compared from three aspects, namely posterior prediction as measured by the RMSE based on the test dataset, misclassification ratios as measured by false positive and false negative ratios, and computation times. 

`PGPR' and `VGPR' use penalty functions, for which we chose the classic bridge penalty and the iterative adaptive bridge penalty as in Section~\ref{subsec:minibatch}, respectively, to compute their regularization paths. Methods involving solution paths, including `VGPR', needed an OOS score in their stopping conditions, for which a quarter or $5{,}000$, whichever is smaller, of the training dataset was set aside and only used in computing the OOS RMSE. Similar to Sections~\ref{subsec:traverse_path} and \ref{subsec:minibatch}, all stopping conditions were defined as producing less than $1\%$ improvement of OOS RMSE after the selection of any new covariate. The OOS RMSE was also used to choose the best model in each iteration of forward selection. Fisher scoring does not require a stopping condition based on an OOS score and hence used the whole training dataset for parameter estimation.

Because Fisher scoring and the conjugate gradient used in \cite{yi2011penalized} are unconstrained optimization algorithms that rely on variable transformations, their parameters, including $\br$, cannot reach exact zeros. We set a cut-off threshold of $10^{-7}$, the same as in \citet{yi2011penalized}, below which the corresponding covariate was viewed as deselected. The initial values for $\sigma^2$, $\{r_l\}_{l = 0}^{d}$, and $\tau^2$, when needed, were $0.25$, $0.1$, and $10^{-4}$, respectively, while for PGPR, ten random initial values, as recommended in \cite{yi2011penalized}, were used for the optimization at each $\lambda$. The maximum numbers of iterations were $100$ for PGPR, Fisher, and FWD, while $200$ for VGPR, as the latter used mini-batch subsampling with $\check{n} = 128 \ll n$. The computation times were measured on an Intel Xeon E5-2680 v4 CPU using 56 cores and capped at a 10-hour limit for each GP replicate.

\subsection{Simulation results}

The comparison results are shown in Figure~\ref{fig:sim_metrics}.
\begin{figure}
    \centering
 	\includegraphics[width =.98\textwidth]{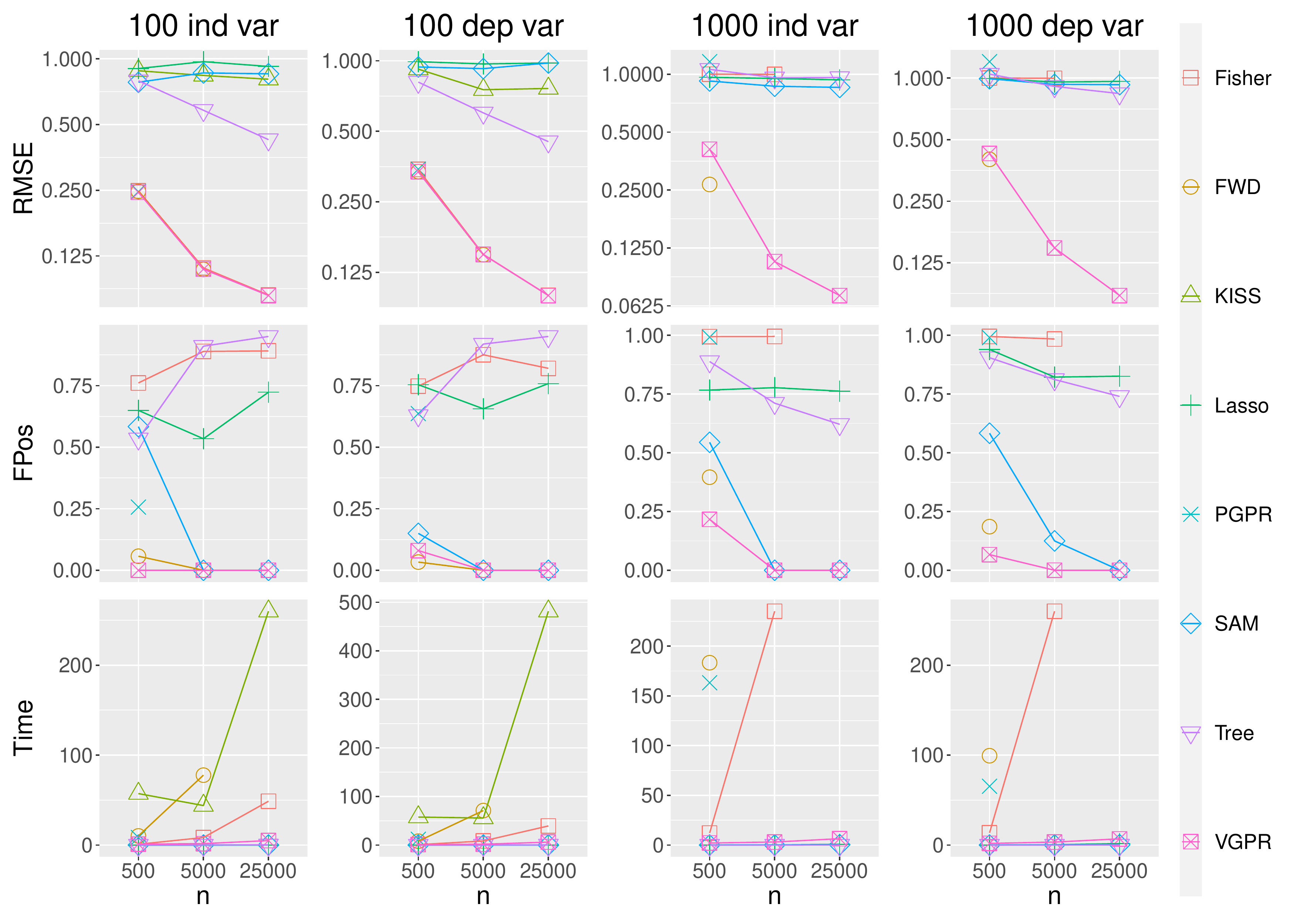}
	
    \caption{Comparison of eight methods for variable selection and/or GP regression, in terms of RMSE, false positive rates (FPos), and computation time in minutes. The results were averaged over five replicates. When $d = 100$, `Fisher', `FWD', `PGPR', and `VGPR' had close RMSE scores when available. The computation times of `Lasso', `SAM', and `Tree' are similar, all faster than `VGPR'.}
    \label{fig:sim_metrics}
\end{figure}
The RMSEs of `Fisher', `FWD', `PGPR', and `VGPR', when computationally feasible, were similar for $d = 100$ but diverged for $d = 1{,}000$, indicating convergence to local optima when the number of optimization parameters was high. Specifically, both `PGPR' and `Fisher' involve $\order(d)$ parameters in optimization, while `FWD' and `VGPR' sequentially increase the number of parameters based on warm starts, which achieved significantly better result for reaching the global optimum. While `FWD' provided slightly more accurate predictions than `VGPR' for $n = 500$, it quickly became computationally infeasible as $n$ or $d$ increased. In contrast, `VGPR' had a better tradeoff between data efficiency and computation scalability. `Lasso', `SAM', and `Tree' were less suitable for the simulated multivariate normal datasets due to model misspecification. While `KISS' is a GP-based model, its idea of finding a (large) common set of pseudo-inputs for all locations became impractical when $d$ is moderately large. In terms of `FPos', which measures the proportion of fake covariates among the selected, `VGPR' outperformed all other methods, achieving zero `FPos' ratios when $n \ge 5{,}000$. This highlights the capability of `VGPR' for deselecting fake covariates, hence the advantages of using QCCD over conjugate gradient and Fisher scoring for simultaneous variable selection and parameter estimation. The false negative ratios were almost constantly zero for all methods, and are hence not shown. Although slightly slower than the compared machine-learning models, `VGPR' tremendously outperformed the other GP-based methods, becoming the only feasible GP-based method when $n = 25{,}000$ and $d = 1{,}000$ under the 10-hour limit.

\section{Application Study}
\label{sec:app_study}

We performed a comparison on several real datasets and data produced by a physical model. Specifically, we compared the methods from Section~\ref{sec:sim_study} that are computationally feasible at $n = 25{,}000$ and $d = 1{,}000$, namely `Lasso', `Tree', `VGPR', and `SAM'. For these examples, our assumed model, GP with ARD covariance kernels, is likely to be misspecified and furthermore, true covariates may not exist in the given covariate pool, and so we used the iterative adaptive bridge penalty with $\kappa$ bigger than those in Sections~\ref{sec:GPR_var_select} and \ref{sec:sim_study} to select the most predictive covariates.

The first dataset was generated from the Piston function \citep[e.g.,][]{Surjanovic2013}, which is a (deterministic) physical model with $d_0 = 7$ true covariates; a total of $d = 10^3$ covariates were simulated at $n = 10^6$ locations. While the underlying model is not a GP, the true covariates were included in the covariate pool, and we chose $\kappa = 5$. The second dataset was the ``relative location of CT slices on axial axis'' (Slice) from the UCI Machine Learning Repository \citep{Dua2019} that has $n = 53{,}500$ images and $d = 386$ features. The images belong to $74$ individuals, among which a quarter were selected as the testing dataset. The third dataset was the ``physicochemical properties of protein tertiary structure'' (CASP) dataset also from the UCI Repository with $n = 45{,}730$ responses and $d = 9$ features. A fourth dataset was the temperature (Temp) data used in \cite{garnett2013active} that contains $7{,}117$ training samples and $3{,}558$ testing samples, each with $d = 106$ features. For the last three datasets, we set $\kappa = 15$ to compensate for model misspecification and the potential lack of true covariates, and supplemented with $(10^3 - d)$ artificial covariates such that a total of $10^3$ covariates were used to compare the three methods' capability of variable selection. Similar to previous experiments, we generated either uncorrelated or correlated covariates but here, the latter was constructed as random linear combinations of original covariates plus independent Gaussian noise. Both the responses $\by$ and the covariates $\bx_i$ were standardized to have zero mean and unit variance. 

Although the set of true covariates was unknown, misclassification ratios, specifically the false positive ratio, could still be estimated based on the number of included artificial covariates. On the other hand, the number of selected covariates is also an important indicator for the quality of variable selection that directly relates to over-fitting and computation efficiency. Table~\ref{tbl:real_data_app} summarizes the three metrics of the four methods under the previously mentioned datasets. 
\begin{table}[h]
    \centering
    \begin{tabular}{l | l | r | r | r || l | l | r | r | r}
        \small
        Dataset & Method & RMSE & nSel & FPos & Dataset & Method & RMSE & nSel & FPos \\
        \hline Piston-I & VGPR & 0.00 & 7 & 0\% & 
         CASP-I & VGPR & 0.78 & 3 & 0\%\\
         & Lasso & 0.17 & 7 & 0\% & & Lasso & 0.85 & 177 & 96\%\\
         & SAM & NA & NA & NA & & SAM & 0.84 & 6 & 0\%\\
         & Tree & 0.92 & 6 & 17\% & & Tree & 0.92 & 6 & 17\%\\
         \hline Piston-D & VGPR & 0.00 & 7 & 0\% & 
         CASP-D & VGPR & 0.75 & 10 & 60\%\\
         & Lasso & 0.17 & 7 & 0\% & & Lasso & 0.85 & 221 & 96\%\\
         & SAM & NA & NA & NA & & SAM & 0.84 & 6 & 0\%\\
         & Tree & 0.71 & 147 & 95\% & & Tree & 0.92 & 6 & 33\%\\
         \hline Slice & VGPR & 0.38 & 64 &  & 
         Temp & VGPR & 0.29 & 6 & \\
         & Lasso & 0.44 & 359 &  & & Lasso & 0.28 & 84 &\\
         & SAM & 0.50 & 118 & & & SAM & 0.29 & 40 &\\
         & Tree & 0.44 & 334 &  & & Tree & 0.32 & 26 & \\
         \hline Slice-I & VGPR & 0.32 & 49 & 18\% & 
         Temp-I & VGPR & 0.29 & 8 & 0\%\\
         & Lasso & 0.44 & 792 & 59\% & & Lasso & 0.29 & 122 & 79\%\\
         & SAM & 0.50 & 118 & 0\% & & SAM & 0.29 & 36 & 8\%\\
         & Tree & 0.51 & 294 & 31\% & & Tree  & 0.40 & 47 & 49\%\\
         \hline Slice-D & VGPR & 0.31 & 50 & 12\%& 
         Temp-D & VGPR & 0.29 & 9 & 0\%\\
         & Lasso & 0.43 & 696 & 54\% & & Lasso & 0.29 & 92 & 70\%\\
         & SAM & 0.49 & 142 & 20\% & & SAM & 0.29 & 36 & 22\%\\
         & Tree & 0.41 & 908 & 64\% & & Tree & 0.39 & 46 & 46\%\\
    \end{tabular}
    \caption{Performance comparison of Lasso linear regression (Lasso), sparse additive model (SAM), regression tree (Tree), and VGPR. `Slice', `Piston', `CASP', and `Temp' are dataset names. `I' and `D' indicates being supplemented by uncorrelated and correlated artificial covariates, respectively. `RMSE' measures the RMSE based on the testing dataset. `nSelect' is the number of selected covariates. `FPos' is short for false positive ratio.}
    \label{tbl:real_data_app}
\end{table}
The `Piston' and the `CASP' datasets had too few original covariates to be used for comparing variable selection, and so corresponding results are not listed. The `SAM' method exceeded our memory capacity (128 GB) when $n = 10^6$ using the `Piston' dataset, and so the results are not available. The optimization setups for `VGPR' were the same as in Section~\ref{sec:sim_study}, except for the change of $\kappa$ and that $k$ was increased from $3$ to $5$ to further improve computation efficiency.

`VGPR' outperformed the other three methods in almost all three aspects (same as in Section \ref{sec:sim_study}), especially in terms of the number of selected covariates and the false positive ratios, highlighting the strength of using the iterative adaptive bridge penalty and QCCD for covariate deselection. For the `Piston' dataset, our GP properly captured its non-linear and continuous features, hence predicting with significantly higher accuracy. `VGPR' had a relatively high false positive ratio when the `CASP-D' dataset was used but considering that there were only nine original covariates, the `FPos' was already high with few fake covariates selected. Besides, the fake covariates in this case were correlated with the original covariates, potentially improving posterior inference as reflected by the lower RMSE of `VGPR'. `Lasso' and `SAM' had comparable RMSE to `VGPR' in modeling the `Temp' dataset but its number of selected covariates and `FPos' were significantly higher. The complexity of `VGPR' is tremendously reduced by the Vecchia approximation, gradient-based covariate selection, and mini-batch subsampling, to achieve a computation time of less than forty minutes for a dataset with $n = 10^6$ and $d = 10^3$, for which `Lasso' and `Tree' used sixteen and eight minutes, respectively. Despite being slower, `VGPR' is arguably as scalable as the other two methods (and much more so than existing GP regression methods) based on the complexity analysis in Section~\ref{subsec:complexity}.

\section{Conclusions}
\label{sec:conclusion}

We provide a highly scalable method, coined VGPR, for variable selection and model estimation in GP regression, suitable for datasets with large numbers of responses $n$ and covariates $d$. ARD covariance kernels naturally combine variable selection and model estimation, while a (scaled) Vecchia approximation provides fast and highly parallel computation of the loglikelihood, its gradient and its FIM. We introduced a forward-backward-selection algorithm that iteratively adds predictive covariates to a candidate set $\zeta$ based on the gradient and removes irrelevant covariates from the candidate set using an efficient QCCD algorithm. We provided theoretical support for the gradient-based covariate-candidate selection. To further speed up our method for even larger $n$, we introduced a mini-batch subsampling method specific to Vecchia-type approximations that has unbiased gradient estimators whose expectations are shown to be zero at the true parameter values. The resulting procedure requires only $\order(\check{n}|\zeta|^2 + \check{n}d)$ time, where $\check{n}$ is the mini-batch size, and hence the computational complexity is essentially independent of $n$. To compensate for the sampling variance of the stochastic gradient estimators under mini-batch subsampling, we also introduced an iterative adaptive bridge penalty.

In our simulation study, VGPR was substantially faster and selected fewer (almost zero) false covariates than other state-of-the-art GP regression methods that can be adapted for variable selection. When using real datasets, VGPR was robust enough to select only a small number of the most predictive covariates, maintaining the lowest misclassification ratios and the best predictive power among standard methods for regression with variable selection. VGPR is able to handle $n = 10^6$ responses with $d = 10^3$ features within $40$ minutes on a standard scientific workstation.
Due to its flexibility and accurate results, we consider VGPR to be a suitable candidate for a default benchmark method for nonlinear regression and variable selection on large datasets.

One possible extension of the results in this paper is variable selection and model estimation for generalized GP models, such as logistic or probit GPs for classification problems. For example, \cite{cao2022scalable} derived the marginal and posterior predictive probabilities of the probit GP. A second idea is to examine if the gradient of the objective function or similarly simple criteria can be used to select new covariates for other regression models, hence achieving a forward selection procedure that tremendously benefits the optimization. 

\footnotesize
\section*{Acknowledgments}

Jian Cao was partially supported by the Texas A\&M Institute of Data Science (TAMIDS) Postdoctoral Project program, Jian Cao and Matthias Katzfuss by National Science Foundation (NSF) Grant DMS--1654083, Matthias Katzfuss and Joe Guinness by NSF Grant DMS--1953005, Matthias Katzfuss by NSF Grant CCF--1934904, and Jian Cao and Marc Genton were partially supported by the King Abdullah University of Science and Technology (KAUST). We would like to thank Felix Jimenez for helpful comments and discussions.

\begin{appendices}
\section{Implementation of FIC, FITC, and PIC \label{app:pic}}

FIC selects the first $m$ locations in MM as the inducing inputs. FITC selects the same locations as the initial values of the $m$ inducing inputs, which is then optimized using the `GPflow' Python package, whose result is used as the final inducing inputs of FITC. PIC selects the first $m / 2$ locations in MM as inducing inputs and divides the responses into disjoint subsets of size $m / 2$. PIC considers the subsets of responses conditionally independent given the inducing inputs as opposed to that responses are conditionally independent, which is assumed by FIC and FITC. In other words, PIC considers also local correlation.

Among the five GP approximations, namely, FIC, FITC, PIC, Vecchia and scaled Vecchia, FIC typically has the lowest cost per likelihood estimation, requiring only $\order(nm^2)$ operations because the conditioning sets remain the same for all responses. PIC has higher computation cost than FIC but its complexity stays at the same level. Given the inducing inputs, FITC is as efficient as FIC but the inducing inputs of FITC require an optimization with $\order(md)$ parameters, which could become the dominant complexity. Vecchia and scaled Vecchia approximations have a complexity of $\order(nm^3)$ for likelihood estimation, which although higher than FIC and PIC, is still linear with $n$ and has a highly parallel implementation. Furthermore, the grouping technique introduced in \cite{Guinness2016a} can reduce the previous complexity to between $\order(nm^2)$ and $\order(nm^3)$ and is already implemented in the `GpGp' R package.

\section{Sensitivity Analysis\label{app:sensitivity}}

We generated $d = 100$ dependent covariates at $10^4$ locations, half of which were used for training and the other half were used to compute the RMSE score. The sensitivity was assessed in terms of RMSE score and number of fake covariates selected. From  Figure~\ref{fig:sensitivity}, we conclude that our proposed VGPR algorithm is largely robust across different values of $k$, $\gamma$, and $\kappa$ within the recommended intervals.
\begin{figure}[h]
  \centering
  \begin{subfigure}{.98\textwidth}
      \includegraphics[width =.5\linewidth]{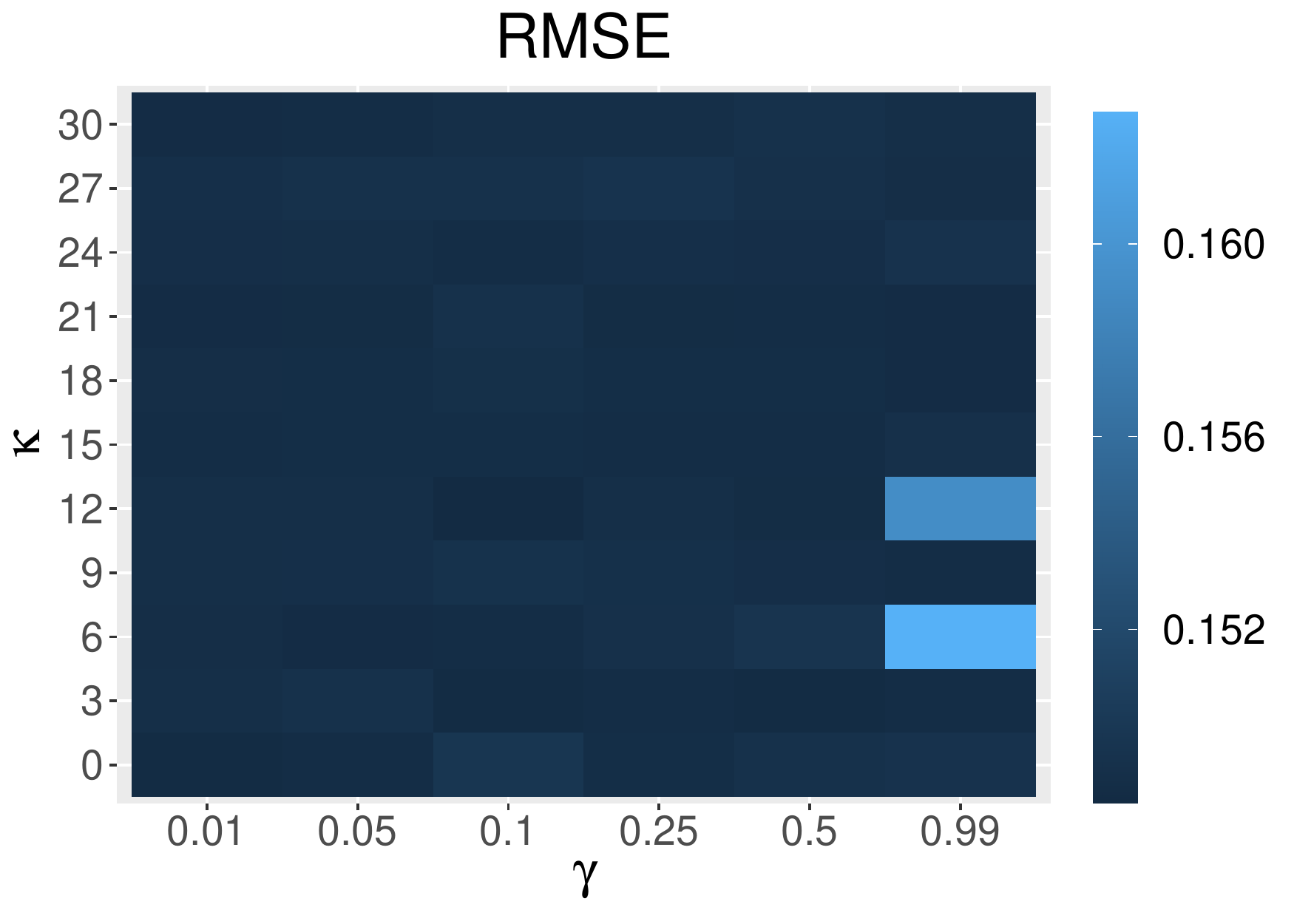}
      \hfill
      \includegraphics[width =.5\linewidth]{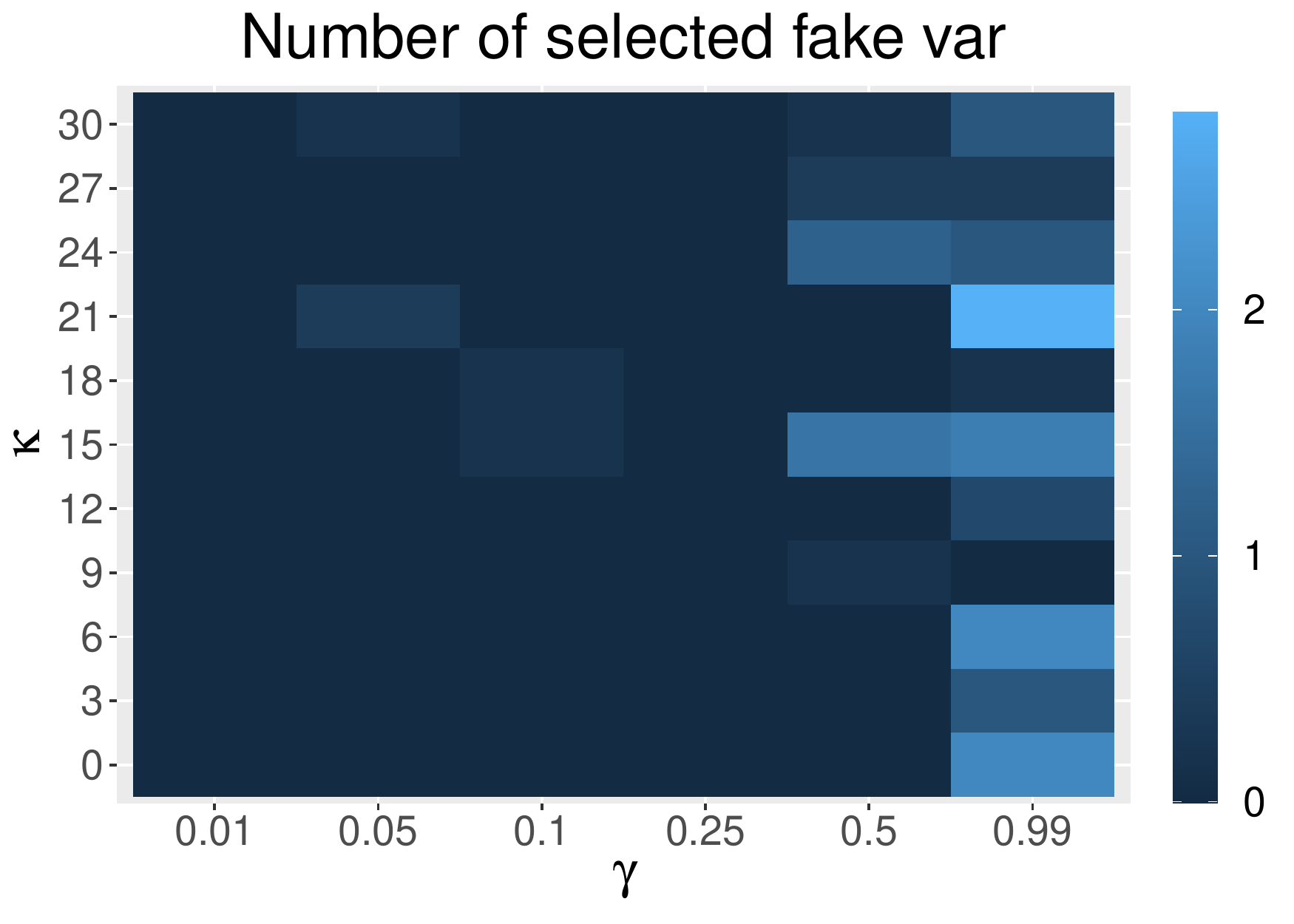}
      \caption{$k = 3$}
  \end{subfigure}
  \begin{subfigure}{.98\textwidth}
      \includegraphics[width =.5\linewidth]{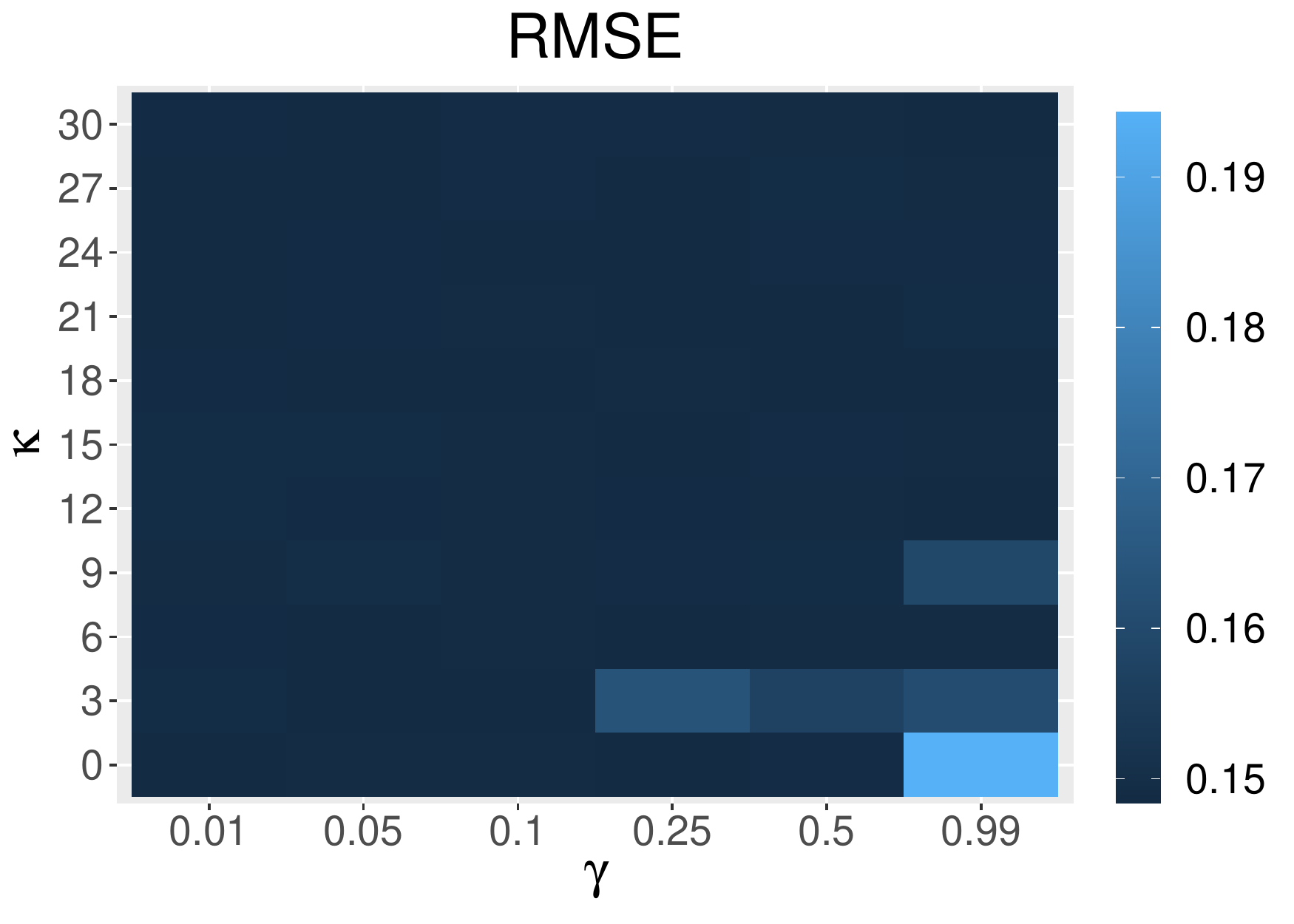}
      \hfill
      \includegraphics[width =.5\linewidth]{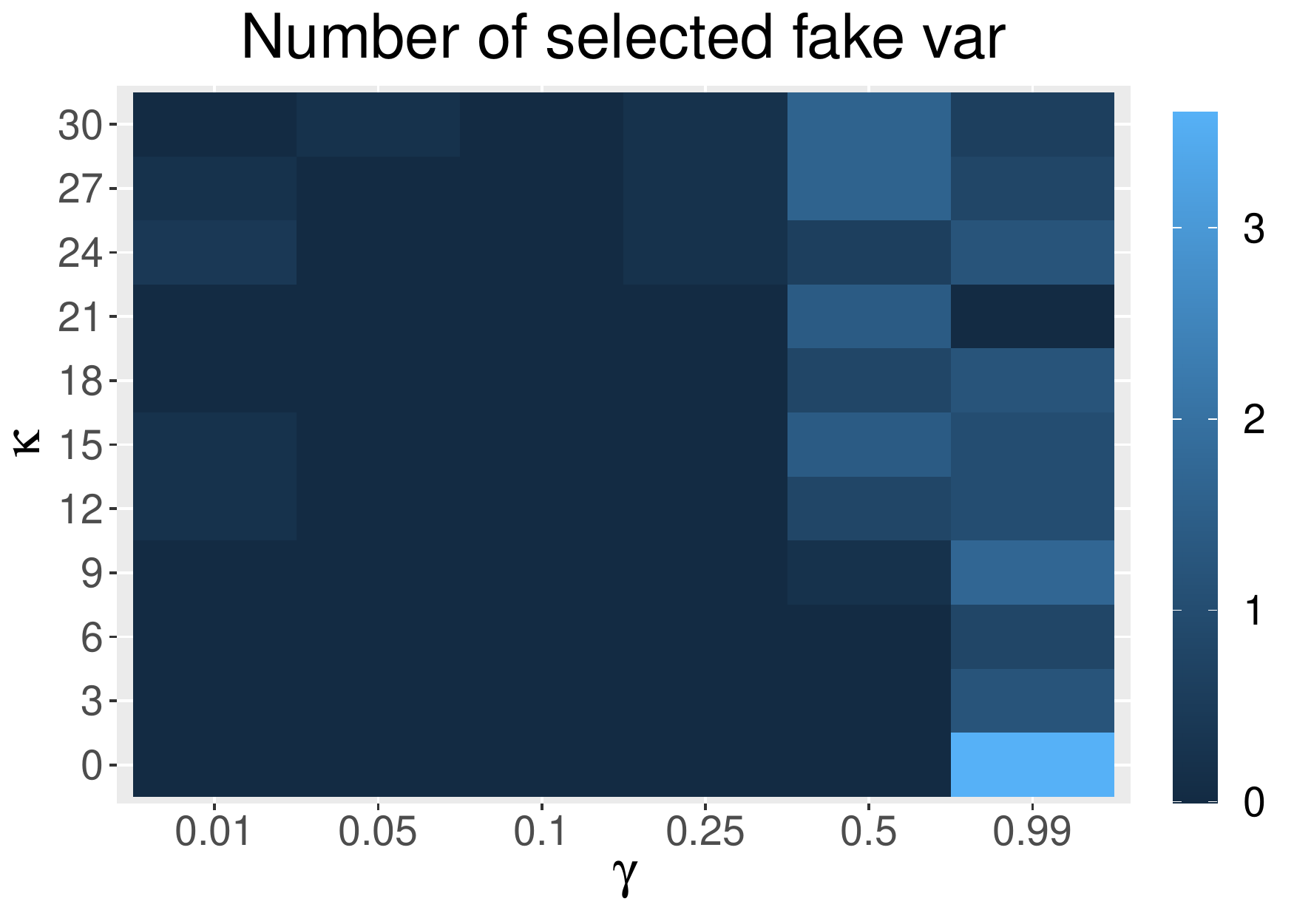}
      \caption{$k = 4$}
  \end{subfigure}
  \begin{subfigure}{.98\textwidth}
      \includegraphics[width =.5\linewidth]{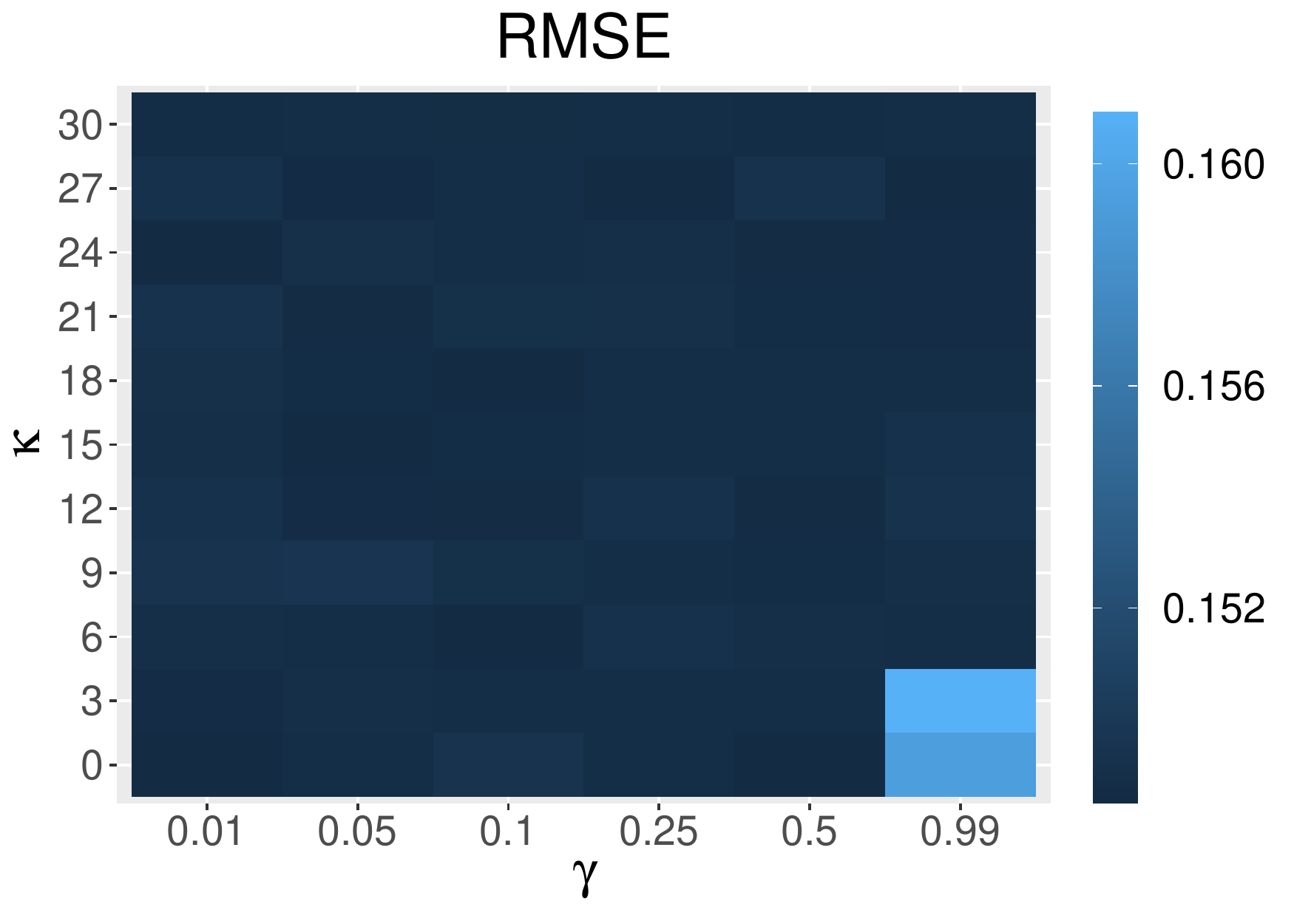}
      \hfill
      \includegraphics[width =.5\linewidth]{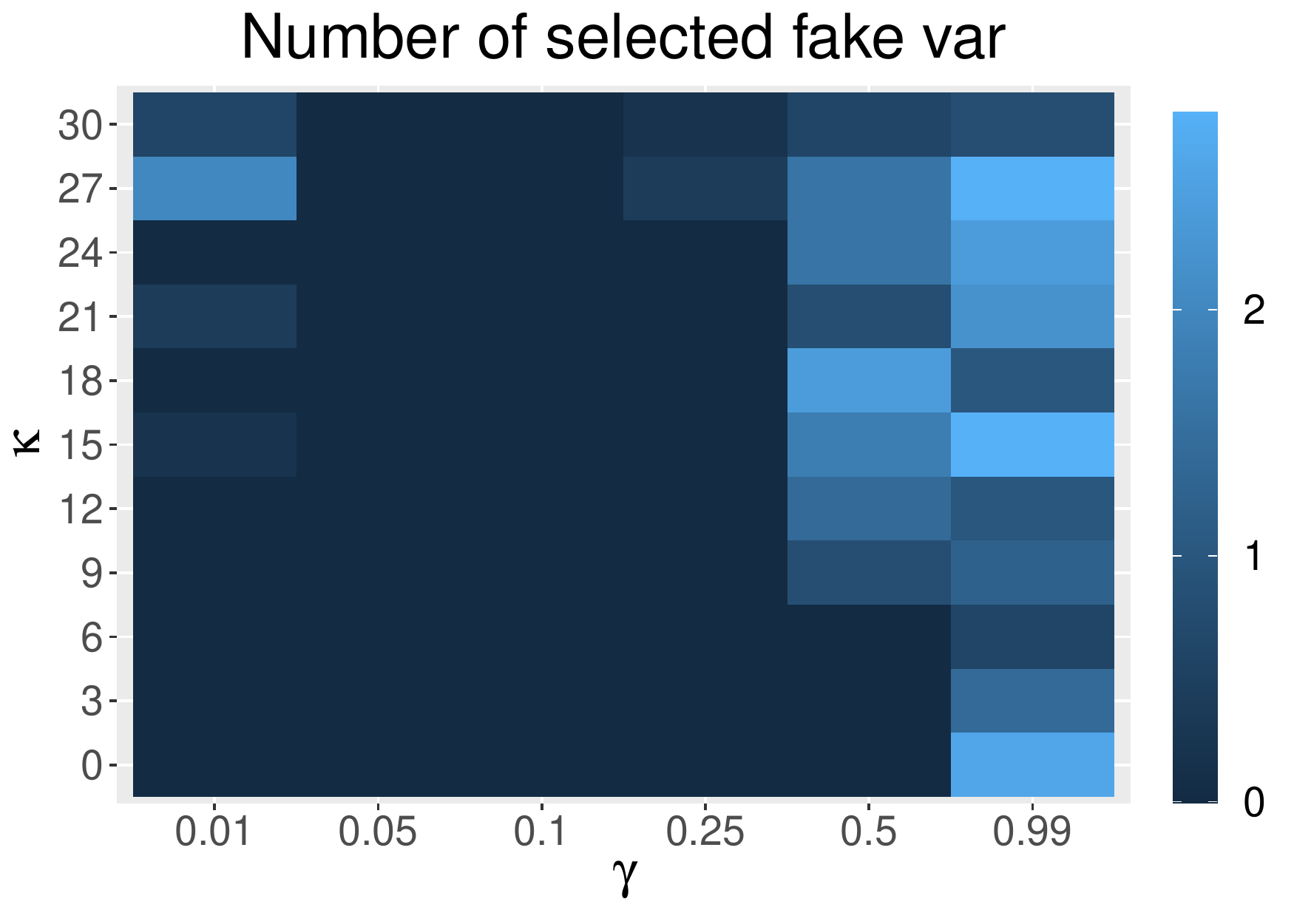}
      \caption{$k = 5$}
  \end{subfigure}
  \caption{Sensitivity of the VGPR algorithm with respect to the number of new covariates selected at each iteration ($k$), and the penalty parameters $(\gamma, \kappa)$ defined in \eqref{equ:adbridge}.}
  \label{fig:sensitivity}
\end{figure}

\section{Proofs\label{app:proofs}}

\begin{proof}[Partial proof of Proposition~\ref{thm:sqexp}]
Ignoring the constant term in $\ell$:
\begin{align*}
    \ell &= -\frac{1}{2} \by^\top \bfSigma^{-1} \by - \frac{1}{2} \log |\bfSigma| \\
    \left. \frac{\partial \ell}{\partial r_l^2} \right|_{(\sigma, \br, \tau) = (\sigma_1, \br_1, \tau_1)} &= \by^\top \bfSigma_1^{-1} \bfSigma_{l, 1} \bfSigma_1^{-1} \by - \mbox{tr}(\bfSigma_1^{-1} \bfSigma_{l, 1}),
\end{align*}
where $\bfSigma_{l, 1} = \left. \frac{\partial \bfSigma}{\partial r_l^2} \right|_{(\sigma, \br, \tau) = (\sigma_1, \br_1, \tau_1)} = - \sigma_1^2 \tilde\bfSigma_1 \odot \bD_l$, $\bD_l$ is an $n \times n$ matrix whose $(i, j)$-th coefficient is $(x_{il} - x_{jl})^2$, and $\odot$ is the Hadamard product. Because $\{x_{il}\}_{i = 1, \ldots, n, l = 1, \ldots, d}$ have i.i.d.\ distributions and $r_{l_1 1} = r_{l_2 1}$, we have:
\begin{align*}
    E[\mbox{tr}(\bfSigma_1^{-1} \bfSigma_{l_1, 1})] =
    E[\mbox{tr}(\bfSigma_1^{-1} (- \sigma_1^2 \tilde\bfSigma_1 \odot \bD_{l_1}))] =
    E[\mbox{tr}(\bfSigma_1^{-1} (- \sigma_1^2 \tilde\bfSigma_1 \odot \bD_{l_2}))] =
    E[\mbox{tr}(\bfSigma_1^{-1} \bfSigma_{l_2, 1})].
\end{align*}
Therefore, we only need to compare $E[\by^\top \bfSigma_1^{-1} \bfSigma_{l_1, 1} \bfSigma_1^{-1} \by]$ and $E[\by^\top \bfSigma_1^{-1} \bfSigma_{l_2, 1} \bfSigma_1^{-1} \by]$. First consider $\{\bx_i\}_{i = 1}^{n}$ as fixed and take expectation with respect to $\by$:
\begin{align*}
    E_{\by}[\by^\top \bfSigma_1^{-1} \bfSigma_{l,1} \bfSigma_1^{-1} \by] = \mbox{tr}(\bfSigma_1^{-1} \bfSigma_{l,1} \bfSigma_1^{-1} \bfSigma_0).
\end{align*}
$\bfSigma_0$ can be also written as:
\begin{align*}
    \bfSigma_0 = \sigma_0^2 \tilde\bfSigma_1 \odot \exp\left(-\sum_{l = d_1 + 1}^{d}(r_{l0}^2 - r_{l1}^2) \bD_{l}\right) + \tau_0^2 \bI_{n} = \sigma_0^2 \tilde\bfSigma_1 \odot \exp(-r_{l_{1}0}^2 \bD_{l_{1}} + \bC) + \tau_0^2 \bI_{n},
\end{align*}
where $\bC = - \sum_{l = \{d_1 + 1, \ldots, d_0\} \backslash l_1} r_{l0}^2 \bD_{l}$. Hence, we can re-write $E_{\by}[\by^\top \bfSigma_1^{-1} \bfSigma_{l,1} \bfSigma_1^{-1} \by]$:
\begin{align*}
    E_{\by}[\by^\top \bfSigma_1^{-1} \bfSigma_{l,1} \bfSigma_1^{-1} \by] &= \mbox{tr}\left(\bfSigma_1^{-1} (- \sigma_1^2 \tilde\bfSigma_1 \odot \bD_l) \bfSigma_1^{-1} (\sigma_0^2 \tilde\bfSigma_1 \odot \exp(-r_{l_{1}0}^2 \bD_{l_{1}} + \bC) + \tau_0^2 \bI_{n})\right) \\
    &= \mbox{tr}\left(\bfSigma_1^{-1} (- \sigma_1^2 \tilde\bfSigma_1 \odot \bD_l) \bfSigma_1^{-1} (\sigma_0^2 \tilde\bfSigma_1 \odot \exp(-r_{l_{1}0}^2 \bD_{l_{1}} + \bC))\right) + c_{\tau},
\end{align*}
where $c_{\tau} = \tau_0^2 \mbox{tr}(\bfSigma_1^{-1} (- \sigma_1^2 \tilde\bfSigma_1 \odot \bD_l) \bfSigma_1^{-1})$ and $E_{\bX}[c_{\tau}]$ remains the same for $l = l_1$ and $l = l_2$ because $\{x_{il}\}_{i = 1, \ldots, n, l = 1, \ldots, d}$ have i.i.d.\ distributions. We can also remove the $\sigma_0^2$ and $\sigma_1^2$ from the equation above with $c_{\sigma} = \sigma_0^2 \sigma_1^2$:
\begin{align}
    \label{equ:grad_expectY}
    E_{\by}[\by^\top \bfSigma_1^{-1} \bfSigma_{l,1} \bfSigma_1^{-1} \by] = c_{\sigma} \mbox{tr}\left(\bfSigma_1^{-1} (- \tilde\bfSigma_1 \odot \bD_l) \bfSigma_1^{-1} (\tilde\bfSigma_1 \odot \exp(-r_{l_{1}0}^2 \bD_{l_{1}} + \bC))\right) + c_{\tau}.
\end{align}
Since $\bfSigma_1^{-1} (- \tilde\bfSigma_1 \odot \bD_l) \bfSigma_1^{-1}$ and $\tilde\bfSigma_1 \odot \exp(-r_{l_{1}0}^2 \bD_{l_{1}} + \bC)$ are symmetric, we have:
\begin{align*}
    E_{\by}[\by^\top \bfSigma_1^{-1} \bfSigma_{l,1} \bfSigma_1^{-1} \by] &= 
    c_{\sigma} \langle \mbox{vec}\left(\bfSigma_1^{-1} (- \tilde\bfSigma_1 \odot \bD_l) \bfSigma_1^{-1}\right),
    \mbox{vec}\left(\tilde\bfSigma_1 \odot \exp(-r_{l_{1}0}^2 \bD_{l_{1}} + \bC)\right) \rangle + c_{\tau}\\
    &= c_{\sigma} \langle (\bfSigma_1^{-1} \otimes \bfSigma_1^{-1})\mbox{vec}(- \tilde\bfSigma_1 \odot \bD_l),
    \mbox{vec}\left(\tilde\bfSigma_1 \odot \exp(-r_{l_{1}0}^2 \bD_{l_{1}} + \bC) \right) \rangle + c_{\tau}\\
    &= c_{\sigma} \langle (\bfSigma_1^{-1} \otimes \bfSigma_1^{-1}) \mbox{diag}(\mbox{vec}(\tilde\bfSigma_1)) \mbox{vec}(- \bD_{l}), 
    \mbox{diag}(\mbox{vec}(\tilde\bfSigma_1)) \mbox{vec}(\exp(-r_{l_{1}0}^2 \bD_{l_{1}} + \bC)) \rangle + c_{\tau},
\end{align*}
where $\langle , \rangle$ is the Euclidean inner product, $\otimes$ is the Kronecker product, and $\mbox{vec}(\cdot)$ is the vectorization of a matrix that stacks the columns of a matrix on top of one another. Use $\bM$ to denote $\mbox{diag}(\mbox{vec}(\tilde\bfSigma_1))(\bfSigma_1^{-1} \otimes \bfSigma_1^{-1})\mbox{diag}(\mbox{vec}(\tilde\bfSigma_1))$, which is a positive definite matrix:
\begin{align*}
    E_{\by}[\by^\top \bfSigma_1^{-1} \bfSigma_{l,1} \bfSigma_1^{-1} \by] &= 
    c_{\sigma} \langle \bM \mbox{vec}(- \bD_{l}), \mbox{vec}(\exp(-r_{l_{1}0}^2 \bD_{l_{1}} + \bC))  \rangle + c_{\tau}.
\end{align*}
Now consider the expectation with respect to $\bX$ and notice that $\bM$, $\bD_{l_1}$, and $\bC$ are mutually independent. Assuming $l = l_1$ or $l = l_2$:
\begin{align}
    E_{\bX}[\langle \bM \mbox{vec}(- \bD_{l}), \mbox{vec}(\exp(-r_{l_1}^2 \bD_{l_1} + \bC)) \rangle] = \mbox{tr}\left( E_{\bX}[\mbox{vec}(- \bD_{l}) \mbox{vec}(\exp(-r_{l_1}^2 \bD_{l_1} + \bC))^\top] E_{\bX}[\bM] \right)
    \label{equ:appendix1}
\end{align}
To show \eqref{equ:appendix1} is bigger when $l = l_1$ than when $l = l_2$, it remains to show that:
\begin{align*}
    \mbox{tr}\left(\cov_{\bX}\left[\mbox{vec}(- \bD_{l_1}), \mbox{vec}\left(\exp(-r_{l_1}^2 \bD_{l_1} + \bC)\right)\right] E_{\bX}[\bM] \right) > 0.
\end{align*}
It suffices to show that $\cov_{\bX}[\mbox{vec}(- \bD_{l_1}), \mbox{vec}(\exp(-r_{l_1}^2 \bD_{l_1} + \bC))]$ is positive semi-definite and has a rank greater than zero because the trace of the multiplication between one positive definite matrix and one non-zero positive semi-definite matrix is positive. It is obvious that $\cov_{\bX}[\mbox{vec}(- \bD_{l_1}), \mbox{vec}(\exp(-r_{l_1}^2 \bD_{l_1} + \bC))]$ has a rank greater than zero. Its coefficients have three types of values:
\begin{align*}
    \cov_{\bX}[- d^{l_1}_{i_1, j_1}, \exp(-r_{l_1}^2 d^{l_1}_{i_2, j_2} + c_{i_2, j_2})] &= \cov_{\bX}[- d^{l_1}_{i_1, j_1}, \exp(-r_{l_1}^2 d^{l_1}_{i_2, j_2})] E_{\bX}[\exp(c_{i_2, j_2})] \\
    &= \left\{
    \begin{array}{cc}
        0 & i_1 = j_1 \mbox{ or } i_2 = j_2  \mbox{ or } |\{i_1, i_2, j_1, j_2\}| = 4 \\
        a & i_1 \neq j_1 \mbox{ and } \{i_1, j_1\} = \{i_2, j_2\} \\
        b & i_1 \neq j_1 \mbox{ and } i_2 \neq j_2 \mbox{ and } |\{i_1, i_2, j_1, j_2\}| = 3
    \end{array}
    \right.,
\end{align*}
where $d^{l_1}_{i, j}$ and $c_{i, j}$ denote the $(i, j)$-th coefficients of $\bD_{l_1}$ and $\bC$, respectively, and $|\cdot|$ denotes the cardinality of a set. First, we can take out $E_{\bX}[\exp(c_{i_2, j_2})]$ and have the following:
\begin{align*}
    \cov_{\bX}[\mbox{vec}(- \bD_{l_1}), \mbox{vec}(\exp(-r_{l_1}^2 \bD_{l_1} + \bC))] 
    =
    \cov_{\bX}[\mbox{vec}(- \bD_{l_1}), \mbox{vec}(\exp(-r_{l_1}^2 \bD_{l_1}))] \times c,
\end{align*}
where $c = E_{\bX}[\exp(c_{i_2, j_2})]$ for $i_2 \neq j_2$. Second, notice that the structures of 
\begin{align*}
    \cov_{\bX}[\mbox{vec}(- \bD_{l_1}), \mbox{vec}(\exp(-r_{l_1}^2 \bD_{l_1}))] \mbox{ and } \cov_{\bX}[\mbox{vec}(\bD_{l_1}), \mbox{vec}(\bD_{l_1})]
\end{align*} 
are the same, except for that the latter has different values for $a$ and $b$, denoted by $\tilde{a}$ and $\tilde{b}$, respectively. Since $\cov_{\bX}[\mbox{vec}(\bD_{l_1}), \mbox{vec}(\bD_{l_1})]$ is positive semi-definite and $a$ and $\tilde{a}$ are positive, to show that $\cov_{\bX}[\mbox{vec}(- \bD_{l_1}), \mbox{vec}(\exp(-r_{l_1}^2 \bD_{l_1}))]$ is also positive semi-definite, it is sufficient to show that $0 \le \frac{b}{a} \le \frac{\Tilde{b}}{\Tilde{a}}$. 
\begin{align}
    \frac{\Tilde{b}}{\Tilde{a}} &= \frac{\mbox{cov}[(X_1 - X_2)^2, (X_1 - X_3)^2]}{\mbox{var}[(X_1 - X_2)^2]}
    \label{equ:thres},\\
    \frac{b}{a} &= \frac{\mbox{cov}[-(X_1 - X_2)^2, \exp(- r^2(X_1 - X_3)^2)]}{\mbox{cov}[-(X_1 - X_2)^2, \exp(- r^2(X_1 - X_2)^2)]}. \label{equ:ratio}
\end{align}
Without loss of generality, we can assume $X_1, X_2, X_3 \overset{i.i.d.}{\sim} U(-\frac{1}{2}, \frac{1}{2})$ or $X_1, X_2, X_3 \overset{i.i.d.}{\sim} N(0, 1)$, under which $\frac{b}{a} \le \frac{\Tilde{b}}{\Tilde{a}}$ can be shown numerically as in Figure~\ref{fig:numeical_ratios}.
\begin{figure}[H]
\centering
	\begin{subfigure}{.49\textwidth}
	\centering
 	\includegraphics[width =.8\linewidth]{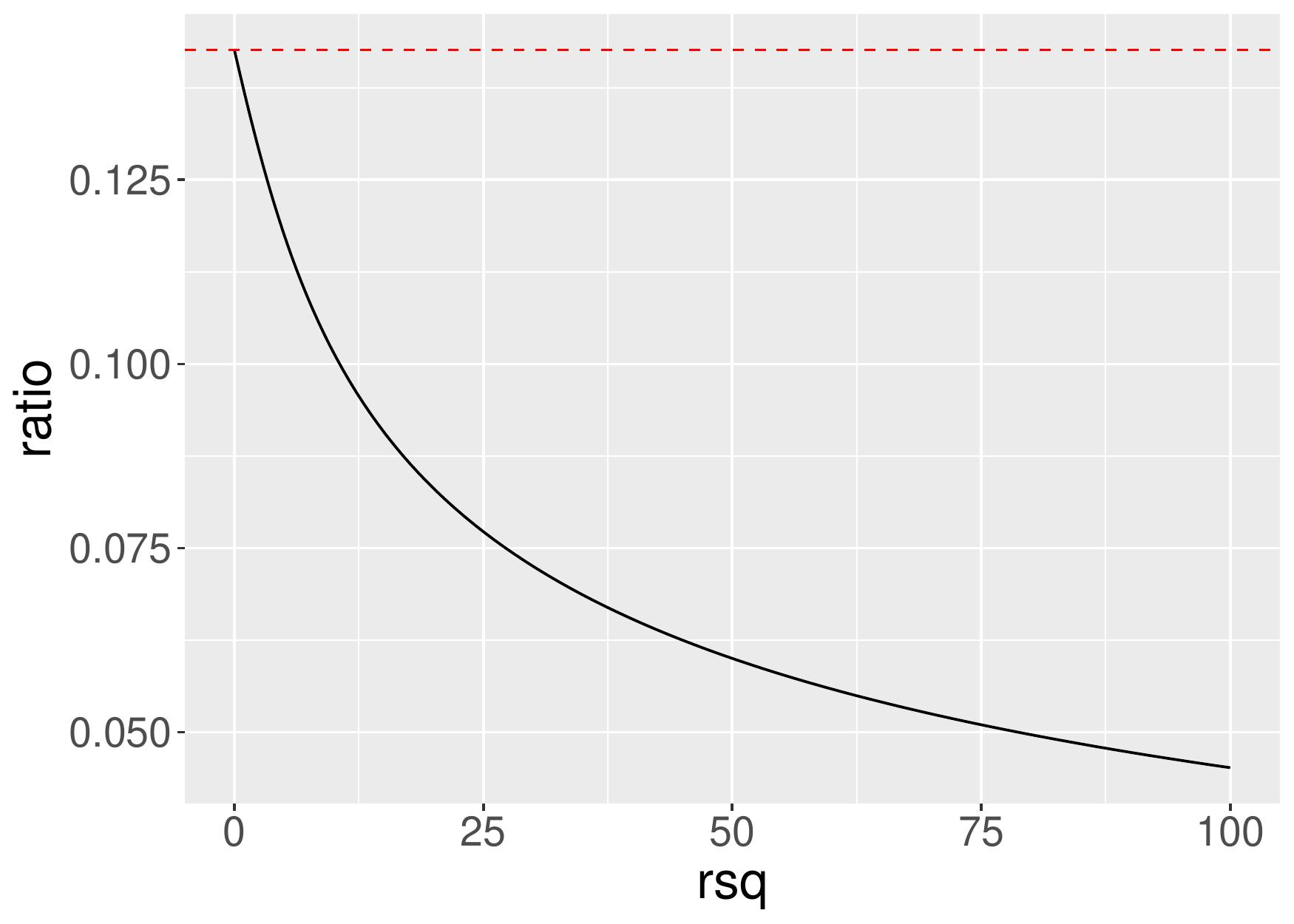}
	\caption{}
	\end{subfigure}%
\hfill
	\begin{subfigure}{.49\textwidth}
	\centering
 	\includegraphics[width =.8\linewidth]{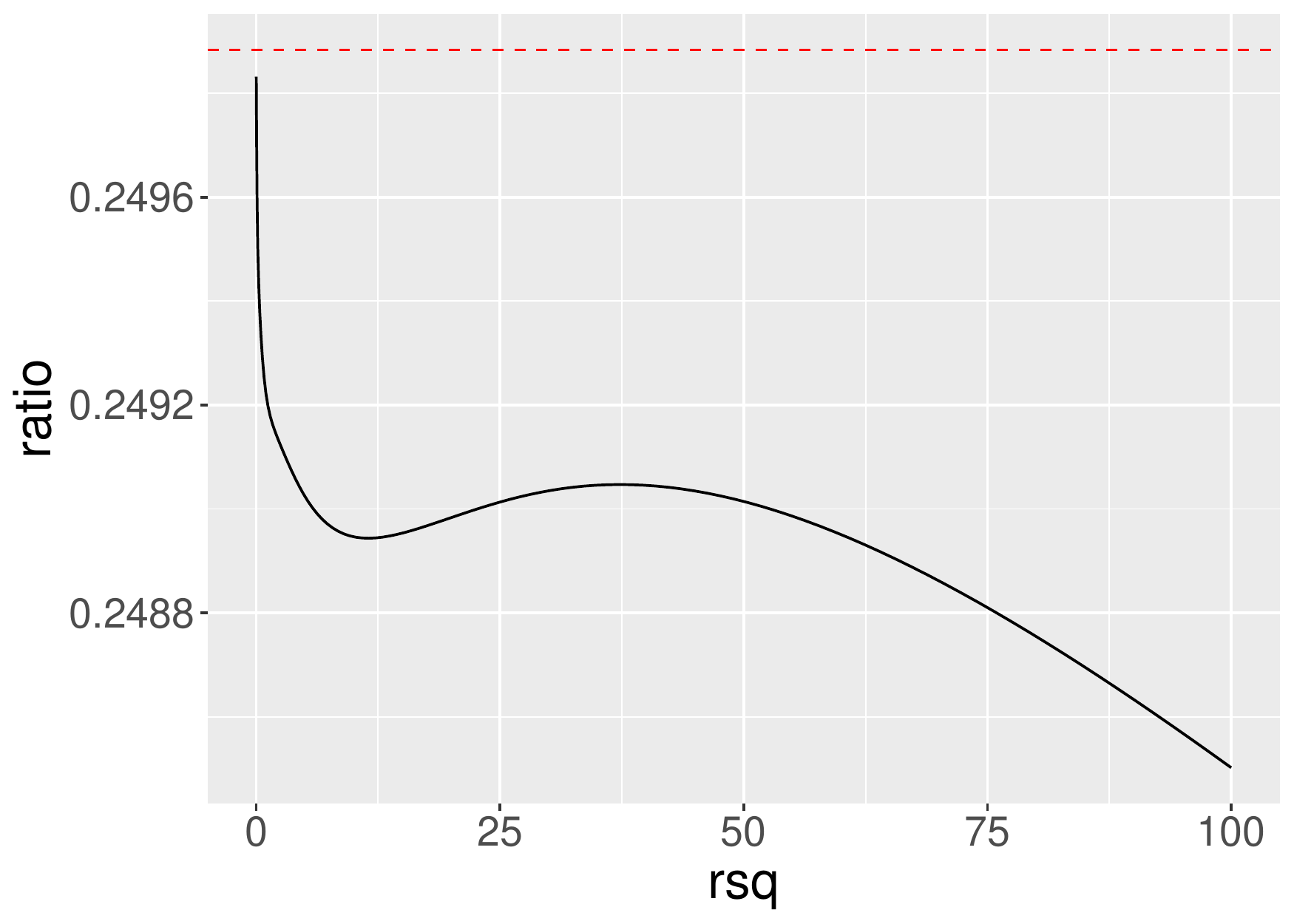}
	\caption{}
	\end{subfigure}
	
	\caption{The ratio \eqref{equ:ratio} (solid black) and the threshold \eqref{equ:thres} (dashed red) under (a) uniform distribution and (b) normal distribution for $X_i$.}
	\label{fig:numeical_ratios}
\end{figure}
With the accurately computed relationship between \eqref{equ:thres} and \eqref{equ:ratio}, we conclude the proof under the assumption that $\{x_{il}\}_{i = 1, \ldots, n, l = 1, \ldots, d}$ have i.i.d. uniform or normal distributions. For other distributions, similar numerical procedure can be used to draw the conclusion.
\end{proof}

\begin{proof}[Proof of Corollary~\ref{cor:sqexp_at_zero}]
When $\br \rightarrow \bfzero$, $\bfSigma \rightarrow \mathbf{1}_{n \times n}$. We can substitute $\tilde\bfSigma_1$ in the proof of Proposition~\ref{thm:sqexp} by a non-singular covariance matrix arbitrarily close to $\mathbf{1}_{n \times n}$, denoted by $\tilde{\bfSigma}_1$, substitute $d_1$ by $0$, and hence 
$$
\bfSigma_0 \approx \sigma_0^2 \tilde\bfSigma_1 \odot \exp\left(-\sum_{l = d_1 + 1}^{d}(r_{l0}^2 - r_{l1}^2) \bD_{l}\right) + \tau_0^2 \bI_{n}.
$$ 
The rest of the proof should remain the same.
\end{proof}

\begin{proof}[Proof of Corollary~\ref{cor:sqexp_corr}]
Based on \eqref{equ:appendix1}, 
\begin{align*}
    &E\left[\left. \frac{\partial \ell}{\partial r_{l_1}^2} \right|_{(\sigma, \br, \tau) = (\sigma_1, \br_1, \tau_1)}\right] - E\left[\left. \frac{\partial \ell}{\partial r_{d + 1}^2} \right|_{(\sigma, \br, \tau) = (\sigma_1, \br_1, \tau_1)}\right]\\
    ={} &c_{\sigma}E_{\bX}[\langle \bM \mbox{vec}(- \bD_{l_1}), \mbox{vec}(\exp(-r_{l_1}^2 \bD_{l_1} + \bC)) \rangle] - c_{\sigma}E_{\bX}[\langle \bM \mbox{vec}(- \bD_{d + 1}), \mbox{vec}(\exp(-r_{l_1}^2 \bD_{l_1} + \bC)) \rangle] \\
    &\mbox{Because $\{x_{il}\}_{i = 1, \ldots, n, l = 1, \ldots, d}$ have i.i.d.\ normal distributions, } E_{\bX}[\mbox{vec}(- \bD_{d + 1})] = E_{\bX}[\mbox{vec}(- \bD_{l_1})],\\
    ={} &c_{\sigma} \rho_2^2 \mbox{tr}\left(\cov_{\bX}[\mbox{vec}(- \bD_{l_1}), \mbox{vec}(\exp(-r_{l_1}^2 \bD_{l_1} + \bC))] E_{\bX}[\bM] \right) > 0.
\end{align*}
\end{proof}

\begin{proof}[Proof of Proposition~\ref{thm:all_kernel}]
From the proof of Proposition~\ref{thm:sqexp}, we know that
\begin{align*}
    &E\left[\left. \frac{\partial \ell}{\partial r_{l_1}^2} \right|_{(\sigma, \br, \tau) = (\sigma_1, \br_1, \tau_1)}\right] - E\left[\left. \frac{\partial \ell}{\partial r_{l_2}^2} \right|_{(\sigma, \br, \tau) = (\sigma_1, \br_1, \tau_1)}\right] \\
    ={} &E_{\bX}[E_{\by}[\by^\top \bfSigma_1^{-1} \bfSigma_{l_1,1} \bfSigma_1^{-1} \by]] - 
    E_{\bX}[E_{\by}[\by^\top \bfSigma_1^{-1} \bfSigma_{l_2,1} \bfSigma_1^{-1} \by]].\\
\end{align*}
Based on \eqref{equ:grad_expectY}, 
\begin{align*}
    & E_{\by}[\by^\top \bfSigma_1^{-1} \bfSigma_{l_1,1} \bfSigma_1^{-1} \by] - E_{\by}[\by^\top \bfSigma_1^{-1} \bfSigma_{l_2,1} \bfSigma_1^{-1} \by]\\
    ={} & c_{\sigma} \mbox{tr}\left(\bfSigma_1^{-1} (- \tilde\bfSigma_1 \odot \bD_{l_1}) \bfSigma_1^{-1} \tilde\bfSigma_0\right) - 
    c_{\sigma} \mbox{tr}\left(\bfSigma_1^{-1} (- \tilde\bfSigma_1 \odot \bD_{l_2}) \bfSigma_1^{-1} \tilde\bfSigma_0\right)\\
    ={} &c_{\sigma} \mbox{tr}\left(\bfSigma_1^{-1} (- \tilde\bfSigma_1 \odot (\bD_{l_1} - \bD_{l_2})) \bfSigma_1^{-1} \tilde\bfSigma_0\right) \\
    \approx{} & c_{\sigma} \mbox{tr}\left(\bfSigma_1^{-1} (- \tilde\bfSigma_1 \odot (\bD_{l_1} - \bD_{l_2})) \bfSigma_1^{-1} \left(\tilde\bfSigma_1 + \sum_{\tilde{l} = d_1 + 1}^{d_0} \tilde\bfSigma_{\tilde l, 1} r_{\tilde{l}0}^2\right)\right)\\
    ={} & c_{\sigma} \mbox{tr}\left(\bfSigma_1^{-1} (- \tilde\bfSigma_1 \odot (\bD_{l_1} - \bD_{l_2})) \bfSigma_1^{-1} \left(\tilde\bfSigma_1 - \sum_{\tilde{l} = d_1 + 1}^{d_0} r_{\tilde{l}0}^2 \tilde\bfSigma_1 \odot \bD_{\tilde l} \right)\right),\\
\end{align*}
where $\tilde \bfSigma_{l, 1} = \left. \frac{\partial \tilde\bfSigma}{\partial r_l^2} \right|_{(\sigma, \br, \tau) = (1, \br_1, 0)} = - \tilde\bfSigma_1 \odot \bD_l$. Noticing that $E_{\bX}[\bD_{l_1} - \bD_{l_2}]$ is a zero matrix, that $\{\bD_l\}_{l = 1}^{d}$ are mutually independent, and that $\bD_{l_1}$ and $\bD_{l_2}$ are independent from $\tilde \bfSigma_{1}$, the expectation of the above equation with respect to $\bX$ is equal to:
\begin{align*}
    &c_{\sigma} E_{\bX}\left[ \mbox{tr}\left(\bfSigma_1^{-1} (- \tilde\bfSigma_1 \odot (\bD_{l_1} - \bD_{l_2})) \bfSigma_1^{-1} \left( - r_{l_{10}}^2 \tilde\bfSigma_1 \odot \bD_{l_1} \right)\right) \right] \\
    ={} &c_{\sigma} r_{l_{10}}^2 E_{\bX}\left[ \mbox{tr}\left( (\bfSigma_1^{-1} \otimes \bfSigma_1^{-1})  \mbox{vec}(\tilde\bfSigma_1 \odot (\bD_{l_1} - \bD_{l_2})) \mbox{vec}\left( \tilde\bfSigma_1 \odot \bD_{l_1} \right)^\top\right) \right] \\
    ={} &c_{\sigma} r_{l_{10}}^2 \mbox{tr}\left( E_{\bX}[\bfSigma_1^{-1} \otimes \bfSigma_1^{-1}]  \cov_{\bX}\left[\mbox{vec}(\tilde\bfSigma_1 \odot (\bD_{l_1} - \bD_{l_2})), \mbox{vec}\left( \tilde\bfSigma_1 \odot \bD_{l_1} \right) \right] \right)  \\
    ={} &c_{\sigma} r_{l_{10}}^2 \mbox{tr}\left( E_{\bX}[\bfSigma_1^{-1} \otimes \bfSigma_1^{-1}]  \cov_{\bX}\left[\mbox{vec}(\tilde\bfSigma_1 \odot \bD_{l_1}), \mbox{vec}( \tilde\bfSigma_1 \odot \bD_{l_1} ) \right] \right) > 0,\\
\end{align*}
because $E_{\bX}[\bfSigma_1^{-1} \otimes \bfSigma_1^{-1}]$ is positive-definite and $\cov_{\bX}\left[\mbox{vec}(\tilde\bfSigma_1 \odot \bD_{l_1}), \mbox{vec}( \tilde\bfSigma_1 \odot \bD_{l_1} ) \right]$ is positive semi-definite with a rank greater than zero.
\end{proof}

\begin{proof}[Proof of Proposition~\ref{thm:Vecchia_unbiased}]
    Here, we take expectations only with respect to $\by$ and consider $\{\bx_i\}_{i = 1}^{n}$ as fixed. Using the non-negativeness of the KL divergence, we can show that for a generic random vector $\bw$, whose distribution is parameterized by $\bftheta_0$:
    \begin{align*}
        E[\log p(\bw; \bftheta_0)] - E[\log p(\bw; \bftheta)] &= \int \log \frac{p(\bw; \bftheta_0)}{p(\bw; \bftheta)} p(\bw; \bftheta_0) \mbox{d} \bw \ge 0 \\
        \Rightarrow E[\log p(\bw; \bftheta_0)] &\ge E[\log p(\bw; \bftheta)],
    \end{align*}
    with which Proposition~\ref{thm:Vecchia_unbiased} can be thus proved:
    \begin{align*}
        E_{\by}[\hat{\ell}^{\tilde{\br}}(\bftheta)] &= E_{\by}[\log \hat{\dens}_{\bftheta}^{\tilde{\br}}(\by)] = E_{\by}[\log \prod_{i = 1}^{n} p_{\bftheta}(y_i | \by_{c(i)})] = E_{\by}[\sum_{i = 1}^{n} \log p_{\bftheta}(y_i | \by_{c(i)})] \\
        &= \sum_{i = 1}^{n} E_{\by_{c(i)}}\left[E_{y_i | \by_{c(i)}}[\log p_{\bftheta}(y_i | \by_{c(i)})]\right],
    \end{align*}
    where $E_{y_i | \by_{c(i)}}[\log p_{\bftheta}(y_i | \by_{c(i)})]$ achieves maximum at $\bftheta = \bftheta_0$. Therefore, $E_{\by}[\hat{\ell}_{\bftheta}^{\tilde{\br}}(\by)]$ achieves maximum at $\bftheta = \bftheta_0$.
\end{proof}

\end{appendices}
\bibliographystyle{apalike}
\bibliography{mendeley,additionalrefs}

\end{document}